\keywords{Linear Logic, Reversible Computation, Curry-Howard, Inductive types}
\newtheoremstyle{convention}% I removed the numbering for the convention to not interfere with the theorem numbers. If you do not agree with this, let us know
  {6pt}% (space above)
  {6pt}% (space below)
  {\normalfont}% (body font)
  {}% (indent amount)
  {\bfseries}% {theorem head font}
  {{\bfseries .}}% {punctuation after theorem head}
  {5pt plus 1pt minus 1pt}% {space after theorem head}
  {\thmname{#1}}% {theorem head spec}
\theoremstyle{convention}
\newtheorem{convention}{Notational convention}
\theoremstyle{plain}
\tikzset{
  node rotated/.style = {rotate=180},
  border rotated/.style = {shape border rotate=180},
  downtriangle/.style = {fill=white, draw=black, regular polygon, regular polygon sides=3, border rotated},
  triangle/.style = {fill=white, draw=black, regular polygon, regular polygon sides=3}
}
\newcommand{\set}[1]{\ensuremath{\{#1\}}}
\newcommand{\RPP}[1]{\ensuremath{\operatorname{RPP}^{#1}}}
\newcommand{\mumall}{\ensuremath{\mu\textsf{MALL}}\xspace}
\newcommand{\alt}{~\mid~}
\newcommand{\iso}{\leftrightarrow}
\newcommand{\inl}[1]{\ensuremath{\mathtt{inj}_l{\;#1}}}
\newcommand{\fold}[1]{\ensuremath{\mathtt{fold}{\;#1}}}
\newcommand{\inr}[1]{\ensuremath{\mathtt{inj}_r{\;#1}}}
\newcommand{\pv}[2]{\langle #1,#2 \rangle}
\newcommand{\letv}[3]{{\tt let}\,{#1}={#2}~{\tt in}~{#3}}
\newcommand{\isoterm}{\omega}
\newcommand{\PE}{\ensuremath{\mathtt{OD}}}
\newcommand{\FV}{{\rm FV}}
\newcommand{\entailiso}{\vdash_{i}~}
\newcommand{\tensor}{\otimes}
\newcommand{\cut}{\ensuremath{\mathtt{cut}}}
\newcommand{\id}{\ensuremath{\mathtt{id}}}
\newcommand{\lett}{\ensuremath{\mathtt{let}}}
\newcommand{\expLet}{\ensuremath{\to_{\mathtt{elet}}}}
\newcommand{\letsig}[2]{{\mathtt{let}}\,{#1}~{\mathtt{in}}~{#2}}
\newcommand{\Val}[1]{{\ensuremath{\mathit{Val}(#1)}}}
\newcommand{\isobasique}{\ensuremath{\{v_1~\iso~e_1 \alt \dots \alt v_n~\iso~e_n\}}}
\newcommand{\Pos}{\ensuremath{\mathtt{Pos}}}
\newcommand{\myPos}[1]{\ensuremath{\mathtt{Pos}}\left(\begin{array}{c}#1\end{array}\right)}
\newcommand{\Neg}{\ensuremath{\mathtt{Neg}}}
\newcommand{\Foc}[1]{\ensuremath{\mathtt{Neg(}#1\mathtt{)}}}
\newcommand{\entaile}{\vdash_e}
\newcommand{\one}{\ensuremath{\mathds{1}}}
\DeclareMathAlphabet{\mymathbb}{U}{BOONDOX-ds}{m}{n}
\newcommand{\zero}{\ensuremath{\mymathbb{0}}}
\newcommand{\expB}{\ensuremath{\to_{e\beta}}}
\newcommand{\cute}{\ensuremath{\rightsquigarrow}}
\newcommand{\fix}{\ensuremath{\mathbf{fix}~}}
\newcommand{\mynl}{\\[1ex]}
\newcommand{\up}{\ensuremath{\uparrow}}
\newcommand{\natT}{\ensuremath{\mathbb{N}}}
\newcommand{\intT}{\ensuremath{\mathbb{Z}}}
\newcommand{\circu}{\ensuremath{\mathtt{Circ}}}
\newcommand{\unfold}{\ensuremath{\mathcal{U}}}
\newcommand{\premiss}[1]{\ensuremath{\operatorname{premiss}(#1)}}
\newcommand{\opn}[1]{\ensuremath{\operatorname{#1}}}
\newcommand{\ov}[1]{\ensuremath{\overline{#1}}}
\newcommand{\TypCtxta}{\Theta}
\newcommand{\TypCtxtb}{\Sigma}
\newcommand\goth[1]{\textgoth{#1}}
\newcommand{\mkrule}[1]{{
    \text{\scriptsize
    \ensuremath{
      %(
      \mathsf
      {#1}
      %)
      }}}
  %\xspace
}
\newcommand\rax{\mkrule{id}}
\newcommand\rcut{\mkrule{cut}}
\newcommand\rbot{\mkrule{\bot}}
\newcommand\rtop{\mkrule{\top}}
\newcommand\rmu{\mkrule{\mu}}
\newcommand\rnu{\mkrule{\nu}}
\newcommand\rwith{\mkrule{\with}}
\newcommand\rparr{\mkrule{\parr}}
\newcommand\rotimes{\mkrule{\otimes}}
\newcommand\rtensor{\mkrule{\otimes}}
\newcommand\rone{\mkrule{\one}}
\newcommand\roplusi{\mkrule{\oplus^i}}
\newcommand\rex{\mkrule{ex}}
\def\tinfex{\begin{prooftree}
\hypo{\vdash \Gamma, G\tikzmark{tex2}, F, \Delta}
\infer{1}[\rex]{\vdash \Gamma, F\tikzmark{tex1}, G, \Delta}
\end{prooftree}}
\def\tinfparr{\begin{prooftree}
\hypo{\vdash F,\tikzmark{tpa2} G, \Gamma}
\infer{1}[\rparr]{\vdash F\parr\tikzmark{tpa1} G, \Gamma}
\end{prooftree}}
\def\tinftensor{\begin{prooftree}
\hypo{\vdash F\tikzmark{tta2},\Gamma}
\hypo{\vdash G\tikzmark{tta3},\Delta}
\infer{2}[\rotimes]{\vdash F\tensor G, \tikzmark{tta1}\Gamma,\Delta}
\end{prooftree}}
\def\tinfplus{\begin{prooftree}
\hypo{\vdash F_i\tikzmark{tpla2}, \Gamma}
\infer{1}[$\roplusi,~ i\in\{1, 2\}$]{\vdash F_1\oplus\tikzmark{tpla1} F_2, \Gamma}
\end{prooftree}}
\def\tinfwith{\begin{prooftree}
\hypo{\vdash F\tikzmark{twa2},\Gamma}
\hypo{\vdash G\tikzmark{twa3},\Gamma}
\infer{2}[\rwith]{\vdash F\with G, \tikzmark{twa1}\Gamma}
\end{prooftree}}
\def\tinfbot{\begin{prooftree}
\hypo{\vdash \Gamma\tikzmark{tbotb2}}
\infer{1}[\rbot]{\vdash \bot, \Gamma\tikzmark{tbotb1}}
\end{prooftree}}
\def\tinftop{\begin{prooftree}
\hypo{}
\infer{1}[\rtop]{\vdash \top, \Gamma}
\end{prooftree}}
\def\tinfone{\begin{prooftree}
\hypo{}
\infer{1}[\rone]{\vdash \one}
\end{prooftree}}
\def\tinfmu{\begin{prooftree}
\hypo{ \vdash F[X \leftarrow\mu X.F]\tikzmark{tma2}, \Gamma}
\infer{1}[\rmu]{ \vdash \mu X.\tikzmark{tma1}F, \Gamma}
\end{prooftree}}
\def\tinfnu{\begin{prooftree}
\hypo{\vdash G[X\leftarrow \nu X.G]\tikzmark{tna2}, \Gamma}
\infer{1}[\rnu]{\vdash \nu X.\tikzmark{tna1}G, \Gamma}
\end{prooftree}}
\def\tinfax{
\begin{prooftree}
\infer{0}[\rax]{\vdash F, F^\perp}
\end{prooftree}}
\def\tinfcut{
  \begin{prooftree}
  \hypo{\vdash \Gamma,\tikzmark{tca2} F}
  \hypo{\vdash F^\perp,\tikzmark{tca3} \Delta}
\infer{2}[{\rcut}]{\vdash \Gamma,\tikzmark{tca1} \Delta}
\end{prooftree}}
\theoremstyle{defC}
\newtheorem{exaC}[thm]{Example}
\begin{document}

\title[A Curry-Howard Correspondence for Linear, Reversible Computation]{A
Curry-Howard Correspondence for \texorpdfstring{\protect\\}{} Linear, Reversible
Computation}
\titlecomment{This is an extended version
of the proceeding paper published in LIPICS~\cite{klin2023lipics,
chardonnet2022curry}. It contains all the missing proofs, together with
additional discussions.}
\thanks{We would like to warmly thank the reviewers for their accurate comments and their detailed reading of the paper: their comments contributed to greatly improve our paper.}

\thanks{This work has been partially funded by the French National Research
Agency (ANR) with the projects RECIPROG ANR-21-CE48-0019, PPS
ANR-19-CE48-0014, TaQC ANR-22-CE47-0012 and within the framework of
``Plan France 2030'', under the research projects EPIQ
ANR-22-PETQ-0007, OQULUS ANR-23-PETQ-0013, HQI-Acquisition
ANR-22-PNCQ-0001 and HQI-R\&D ANR-22-PNCQ-0002.}

\thanks{This work has also been partially funded by MIUR FARE project CAFFEINE, ``Compositional and Effectful Program Distances'', R20LW7EJ7L}

\author[K.~Chardonnet]{Kostia Chardonnet\lmcsorcid{0009-0000-0671-6390}}[a, d]
\author[A.~Saurin]{Alexis Saurin\lmcsorcid{0009-0002-1304-5518}}[b]
\author[B.~Valiron]{Benoît Valiron\lmcsorcid{0000-0002-1008-5605}}[c]

\address{Department of Computer Science and Engineering, University of Bologna, Italy}
\email{kostia.chardonnet@pm.me}

\address{IRIF, Universit\'e Paris Cité, CNRS \and \'Equipe INRIA
Picube}
\email{alexis.saurin@irif.fr}

\address{Université Paris-Saclay, CNRS,
CentraleSupélec, ENS Paris-Saclay, Inria, LMF, 91190, Gif-sur-Yvette, France}
\email{benoit.valiron@universite-paris-saclay.fr}

\address{Université de Lorraine, CNRS, Inria, LORIA, F-54000 Nancy, France}

\begin{abstract}
  In this paper, we present a linear and reversible programming language with
  inductive types and recursion. The semantics of the language is based on
  pattern-matching; we show how ensuring syntactical exhaustivity and
  non-overlapping of clauses is enough to ensure reversibility. The language
  allows to represent any Primitive Recursive Function. We then provide a
  Curry-Howard correspondence with the logic {\mumall}: linear logic extended
  with least and greatest fixed points allowing (co)inductive statements.
  The critical part of
  our work is to show how primitive recursion yields circular proofs that
  satisfy {\mumall} validity criterion and how the language simulates the
  cut-elimination procedure of {\mumall}.
\end{abstract}

\maketitle

\section{Introduction}\label{section:intro}

Computation and logic are two faces of the same coin.  For instance,
consider a proof~$s$ of $A\rightarrow B$ and a proof $t$ of $A$. With
the logical rule \textit{Modus Ponens} one can construct a proof of
$B$:~\autoref{fig:modus} features a graphical presentation of the
corresponding proof. Horizontal lines stand for deduction steps
---they separate conclusions (below) and hypotheses (above). These
deduction steps can be stacked vertically up to axioms in order to
describe complete proofs. In~\autoref{fig:modus} the proofs of $A$ and
$A\to B$ are symbolized with vertical ellipses. The ellipsis annotated
with $s$ indicates that $s$ is a complete proof of $A\to B$ while $t$
stands for a complete proof of $A$. From the perspective of a
programmer, Figure~\ref{fig:modus} can however also be interpreted as
the application of a function of type $A\to B$ to an argument for type
$A$.

\begin{figure}[h]
  \centering
  $\infer{
      B
    }{
      \infer*{A\rightarrow B}{s}
      &
      \infer*{A}{t}
    }
  $
  \caption{Modus Ponens}
  \label{fig:modus}
\end{figure}

This connection is known as the \emph{Curry-Howard
  correspondence}~\cite{curry1934functionality,howard1980formulae}.
In this general framework, types correspond to formulas and programs to
proofs, while program evaluation is mirrored with proof
simplification (the so-called cut-elimination).
The Curry-Howard correspondence formalizes the fact that the proof $s$
of $A\to B$ can be regarded as a \emph{function} ---parametrized by an
argument of type $A$--- that produces a proof of~$B$ whenever it is
fed with a proof of~$A$. Therefore, the computational interpretation
of Modus Ponens corresponds to the \emph{application} of a function
(i.e. $s$) of type $A\to B$ to an argument (i.e. $t$) of type $A$.
When computing the corresponding program, one substitutes the
parameter of the function with $t$ and get a result of type $B$. On
the logical side, this corresponds to substituting every axiom
introducing $A$ in the proof $s$ with the full proof $t$ of $A$. This
yields a direct proof of $B$ without any invocation of the ``lemma''
$A$.

Paving the way toward the verification of critical software, the
Curry-Howard correspondence provides a versatile framework. It has
been used to mirror first and second-order logics with dependent-type
systems~\cite{coqart,leroy2009compcert}, separation logics with
mem\-o\-ry-aware type
systems~\cite{reynolds02separation,jung2018rustbelt},
resource-sensitive logics with differential
privacy~\cite{gaboardu2013differential-privacy}, logics with monads
with reasoning on
side-effects~\cite{swamy2016dependent,maillard2020relational},
\textit{etc}. One aspect that has not yet been covered by the
Curry-Howard correspondence is the realm of \emph{reversible
  computation}.

\medskip
The idea of reversible computation comes from Landauer and
Bennett~\cite{Landauer61,bennett1973logical}, with the analysis of the
link between the erasure of information and the dissipation of energy
as heat~\cite{Landauer61, berut2012experimental}.
A reversible process $P$ can always be \emph{reversed}, in the sense
that there should be a process $P'$ for which $P$ followed by $P'$ and
$P'$ followed by $P$ both return the system to the initial state.
Reversible computation has been described in many ways, for instance
with the use of transition systems ensuring both \emph{forward} and
\emph{backward} determinism, as in the case of reversible Turing
Machines~\cite{rtm1}, but also with the use of reversible
programming languages, both imperative and
functional~\cite{GLUCK2023113429,
yokoyama2011reversible,thomsen2015interpretation,
james2014theseus,sabry2018symmetric,JacobsenKT18} and their
semantics~\cite{Chardonnet_2021, Kaarsgaard_2021}.

In a (typed) reversible programming language, a function describes a
bijection between the domain and the co-domain of the function:
composing the function with its inverse yields an identity map. This
connects to the notion of \emph{type isomorphisms}, where we identify
types and formulas. One then looks at when two types are ``the same''
according to the structure of some logic. For instance, in
intuitionistic logic, there exists an isomorphism between the formulas
$A\times B$ and $B\times A$, meaning that there is a pair of proofs
$A\times B \vdash B\times A$ and $B\times A \vdash A \times B$ which,
when \emph{cut} together, reduces to the axiom rule on either
$A \times B$ or $B \times A$, depending on the way the proofs were
cut together.

Type isomorphisms have been studied in several contexts, such as
finding suitable equations in various logical systems in order to
characterize all isomorphisms between two
types~\cite{bruce1992provable, bruce1985provable,
soloviev1997decision}, and in higher-order type systems using game
semantics~\cite{de2008second}. On a practical side, the notion of type
isomorphism has been at the root of code reuse by searching through
libraries~\cite{rittri1993retrieving}. Type isomorphisms have also
been studied in the context of linear logic, for the multiplicative
fragment~\cite{balat1999linear} and more recently for the
multiplicative additive fragment~\cite{laurent-mall-isos-fscd23}.

\medskip
\emph{Although natural, the relationship between bijective functions and
type isomorphisms has never been studied in details: this paper aims
at initiating such an analysis.}

\medskip
On the language side, we base our study on the approach presented
in~\cite{sabry2018symmetric}. In this model, reversible computation is
restricted to two main type constructs: the tensor, written
$A\tensor B$ and the co-product, written $A\oplus B$. The former
corresponds to the type of all pairs of el\-e\-ments of type $A$ and
elements of type $B$, while the latter represents the disjoint union
of all elements of type $A$ and elements of type $B$. For instance, a
bit can be typed with $\one\oplus\one$, where $\one$ is a type with
only one element. For expressivity, the type system is extended with
the inductive type of lists, defining $[A]$ with
$\one\oplus(A\otimes[A])$.  The language in~\cite{sabry2018symmetric}
offers the possibility to encode isomorphisms ---reversible maps--- with
pattern matching, and features a term construction for building
fixed points. The paper~\cite{sabry2018symmetric} discusses how
terminating functions indeed describe (total) bijections.

Although the type system hints at multiplicative additive linear logic
(MALL), the connection has not formally been done. In this paper, we
propose a correspondence with the logic \mumall~\cite{baelde07least,
  baelde12least, baelde2016infinitary, these-doumane}: an extension of
MALL with least and greatest fixed points allowing inductive and
coinductive statements. This logic contains both a tensor and a
co-product, and its strict linearity makes it a good fit for capturing
the type system of~\cite{sabry2018symmetric}.

In the literature, multiple proof-systems
have been considered for {\mumall}, some finitary proof system with
explicit induction inferences à la Park~\cite{baelde07least,
baelde12least} as well as non-wellfounded proof systems which allow
to build infinite derivations~\cite{baelde2016infinitary,bouncing}. The
present paper focuses on the latter. In general, an infinite
derivation is called a \emph{pre-proof}. The ability to derive from
infinite branches easily leads to inconsistency: to solve this
problem, {\mumall} comes equipped with a
\emph{validity criterion}, describing when an infinite derivation
can be considered as a logical proof.

\paragraph*{Contributions}
The main contribution of this work is a Curry-Howard correspondence
between (a fragment of) \mumall and a purely reversible typed language
in the style of~\cite{sabry2018symmetric}, with added generalized
inductive types and terminating recursion, in which recursive
functions must be structurally recursive. In particular, we show the
following:
\begin{itemize}
\item Totality, reversibility. Every function that we can encode
is reversible and total.
\item Expressivity. The language captures the class of primitive
  recursive functions~\cite{rogers1987theory}.
\item Curry-Howard correspondence. We show how well-typed
  encodable functions can be regarded as valid proofs
  of type isomorphisms between \mumall formulae.
\end{itemize}

\paragraph*{Organization of the paper}
The paper is organized as follows: in
Section~\ref{sec:technical-background} we introduce the technical
background needed for this work, split into two parts:
Subsection~\ref{background:rpp} focuses on a reversible model of
computation, called $\RPP{}$~\cite{rpp}, while
Subsection~\ref{background:mumall} introduces the logic {\mumall}. In
Section~\ref{section:language} we introduce the language, its syntax,
typing rules and operational semantics and show that any function that
can be encoded in our language represents an isomorphism. We finally
show in Section~\ref{sec:rpp} the expressiveness of the language by
encoding the language $\RPP{}$ into our language. Then in
Section~\ref{sec:ch}, we develop on the Curry-Howard correspondence
part: we show how to translate a well-typed term from our language
into a circular derivation of the logic {\mumall},
show that the given derivation respects the validity condition and show how
our evaluation strategy simulates the cut-elimination procedure of the
logic.

\section{Technical Background}
\label{sec:technical-background}

We introduce the technical background necessary for this work. The
first part, in Subsection~\ref{background:rpp}, focuses on the language of
Recursive Primitive Permutations (RPP), which we use as a model of
reversible computing to show the expressiveness of our language. The
second part, in Subsection~\ref{background:mumall}, focuses on the logic
{\mumall}. We do not give all the details on {\mumall} but just the
necessary definitions and intuitions. More details for RPP can be
found in~\cite{rpp}
while~\cite{baelde07least,baelde12least,baelde2016infinitary,
these-doumane, bouncing} contains more details on the
logic {\mumall}.

\subsection{Background on RPP}
\label{background:rpp}
Although reversible computation aims at capturing computability in a
reversible setting, not all bijections are computable. Defining
classes of such computable bijections have been the subject of several
research programs, yielding reversible Turing machine~\cite{bennett1973logical} and
functional programming languages~\cite{GLUCK2023113429, yokoyama2011reversible,james2014theseus}.
The main problem with a
general approach towards computable functions is when we care about
ruling out diverging functions. In the classical setting,
\emph{primitive recursive} functions (PRF) are an answer to this
problem: primitive recursive functions are both \emph{total} and
computable. Although limited, this class nonetheless captures a
sensible notion of computability.

The class of Recursive Primitive Permutations (RPP)~\cite{rpp} is a
class of total bijective functions, expressive enough to capture
primitive recursion.  Similar to what is done for RPF, $\RPP{}$
consists in a finite number of (total and bijective) generating
functions and combinators. If for RPF, domains and codomains are
products of $\mathbb{N}$, the elements of RPP are instead bijections
over product of $\intT$.

\subsubsection{Generators and combinators.}
More precisely, the generators are as follows. We first have the
successor ($S$), predecessor ($P$), identity ($\opn{Id}$) and
sign-change ($\opn{Sign}$), all acting on $\intT$, We then have the
binary swap $X$ acting on $\intT\times\intT$: $X(x,y)=(y,x)$. Finally,
given the RPP functions $f, g, h$ acting on $\intT^k$ and a RPP
function $j$ acting on $\intT^l$, the following functions are also in
RPP:
\begin{itemize}
\item The sequential composition $f;g$, defined by
  $(x_1,\ldots,x_k)\mapsto g(f(x_1,\ldots,x_k))$, bijection on
  $\intT^k$.
\item The parallel composition $f\mid\mid j$, defined as the
  bijection $f\times j$ acting on $\intT^{k+l}$.
\item The iterator $\mathbf{It}[f]\in\RPP{k+1}$, which iterate $f$
  on the $k$ first arguments, by the absolute value of its $(k+1)$th
  argument.
\item The selection $\mathbf{If}[f,g,h]\in\RPP{k+1}$, which applies
  either $f$, (resp. $g$ or $h$) depending on whether its $(k+1)$th
  argument is strictly greater than $0$ (resp. equal to $0$ or strictly
  less than $0$).
\end{itemize}
The semantics of generators and combinators is also given in
\autoref{fig:rpp}, using a graphical form where the left-hand side
variable of a diagram represent the input of the function and the
right-hand side its output.

\begin{figure}[t]
\[
    \begin{bNiceMatrix}[first-col, last-col=2, nullify-dots]
        x & S & x + 1
    \end{bNiceMatrix}
\qquad
    \begin{bNiceMatrix}[first-col, last-col=2, nullify-dots]
        x & P & x - 1
    \end{bNiceMatrix}
\qquad
    \begin{bNiceMatrix}[first-col, last-col=2, nullify-dots]
        x & \operatorname{Sign} & -x
    \end{bNiceMatrix}
\qquad
    \begin{bNiceMatrix}[first-col, last-col=2, nullify-dots]
        x & \operatorname{Id} & x
    \end{bNiceMatrix}
\qquad
\begin{bNiceMatrix}[first-col, last-col=2, nullify-dots]
    x & \phantom{\mathcal{X}} \begin{tikzpicture}[remember picture, overlay] \node at (-0.14, -.15) {$\mathcal{X}$}; \end{tikzpicture} & y \\ %  (-.12, -.15)
    y & \phantom{\mathcal{X}} & x
\end{bNiceMatrix}
\]
\[
    \begin{bNiceMatrix}[first-col, last-col=2, nullify-dots]
        x_1 & ~~ & y_1 \\
        \vdots & f;g & \vdots \\
        x_k & ~~ & y_k \\
    \end{bNiceMatrix} =
    \begin{bNiceMatrix}[first-col, nullify-dots]
        x_1 & ~~  \\
        \vdots & f \\
        x_k & ~~ \\
    \end{bNiceMatrix}\begin{bNiceMatrix}[last-col=2, nullify-dots]
        ~~ & y_1 \\
        g & \vdots \\
        ~~ & y_k \\
    \end{bNiceMatrix}
\qquad
    \begin{bNiceMatrix}[first-col, last-col=2, nullify-dots]
        x_1 & ~~ & y_1 \\
        \vdots & ~~ & \vdots \\
        x_k & f \mid\mid j  & y_k \\
        x'_1 & ~~ & y'_1 \\
        \vdots & ~~ & \vdots \\
        x'_l & ~~ & y'_l \\
    \end{bNiceMatrix} =
        \begin{array}{c}
        \begin{bNiceMatrix}[first-col, last-col=2]
        x_1 & ~~ & y_1 \\
        \vdots & f & \vdots \\
        x_k & ~~ & y_k \\
        \end{bNiceMatrix} \\
        \begin{bNiceMatrix}[first-col, last-col=2]
        x'_1 & ~~ & y'_1 \\
        \vdots & j & \vdots \\
        x_l & ~~ & y'_l \\
        \end{bNiceMatrix}
        \end{array}
\]
\[
        \begin{bNiceMatrix}[first-col, last-col=2, nullify-dots]
            x_1 & ~~ & y_1 \\
            \vdots & \mathbf{If}[f,g,h] & \vdots \\
            x_k & ~~ & y_k \\
            x &  & x \\
        \end{bNiceMatrix}
        \hspace{-0em}\setlength{\arraycolsep}{0pt}
        \begin{array}{ c }
          \left.\kern-\nulldelimiterspace
          \vphantom{\begin{array}{ c }
              b_N  \\
            \vdots \\
              z_N
          \end{array}}
          \right\}= \left\{ \begin{array}{cc}
            f~(x_1, \dots, x_k) & \text{ if } x > 0 \\
            g~(x_1, \dots, x_k) & \text{ if } x = 0 \\
            h~(x_1, \dots, x_k) & \text{ if } x < 0 \\
        \end{array} \right. \\
          \vphantom{a_N}
        \end{array}
\qquad
\begin{bNiceMatrix}[first-col, last-col=2, nullify-dots]
    x_1 & ~~ & y_1 \\
    \vdots & \mathbf{It}[f] & \vdots \\
    x_k & ~~ & y_k \\
    x &  & x \\
\end{bNiceMatrix}
\hspace{-0em}\setlength{\arraycolsep}{0pt}
\begin{array}{ c }
    \left.\kern-\nulldelimiterspace
    \vphantom{\begin{array}{ c }
        b_N  \\
    \vdots \\
        z_N
    \end{array}}
    \right\}= \underbrace{(f;\dots ; f)}_{\alt x \alt} (x_1, \dots, x_k) \\
    \vphantom{a_N}
\end{array}
\]
\caption{Generators of RPP}
\label{fig:rpp}
\end{figure}

\begin{rem}
  Note that the class $\RPP{}$ is naturally graded: we can define
  $\RPP{k}$ as the set of bijections where domain and codomain
  coincide. The class $\RPP{}$ is thus the union of all of the
  sub-classes $\RPP{k}$, when $k$ ranges over the natural numbers.
\end{rem}

\begin{rem}
  In the original paper \cite{rpp}, the class $\RPP{}$ is defined with
  two other kinds of generators: generalized permutations and
  weakening. They were only added for
  convenience: they can be derived from the ones
  we give. We therefore do not consider them here.
\end{rem}

\subsubsection{Inversion.}
For all $k$, the elements of $\RPP{k}$ are indeed bijections over
$\intT^k$. Moreover, the inverse of a given element can be inductively
constructed as follows:
\begin{center}
\begin{tabular}{ccc}
     $\operatorname{Id}^{-1} = \operatorname{Id}$ &
     $\operatorname{S}^{-1} = \operatorname{P}$ &
     $\operatorname{P}^{-1} = \operatorname{S}$ \\
     $\operatorname{Sign}^{-1} = \operatorname{Sign}$ &
     $\mathcal{X}^{-1} = \mathcal{X}$ & $(g; f)^{-1} =
     f^{-1} ; g^{-1}$ \\
    $(f \mid\mid g)^{-1} = f^{-1} \mid\mid g^{-1}$ &
    $(\mathbf{It}[f])^{-1} = \mathbf{It}[f^{-1}]$ & $(\mathbf{If}[f,
    g, h])^{-1} = \mathbf{If}[f^{-1}, g^{-1}, h^{-1}]$.
\end{tabular}
\end{center}

\subsubsection{Relationship with PRF}
Finally, and most importantly, $\RPP{}$ is PRF-Sound and Complete: it
indeed captures the notion of primitive recursive permutations.

\begin{thm}[Soundness \& Completeness~\cite{rpp}]
  \label{thm:rpp:soudness-completeness}
  $\RPP{}$ is PRF-Sound and PRF-Complete: it can represent any
  Primitive Recursive Function (PRF). Conversely, every function in
  $\RPP{}$ is indeed an element of the class RPF.\qed
\end{thm}

\begin{rem}
  While $\RPP{}$ is expressive enough to encode any Primitive
  Recursive Function, it does so at the cost of auxiliary inputs and
  outputs. The canonical example requiring such extension is the
  Cantor Pairing, building a bijection between $\natT$ and
  $\natT \times \natT$~\cite[Theorem 2 and Theorem 4]{rpp}. Note
  however, that this is a standard trick in reversible models of
  computation.
\end{rem}

%%%%%%%%%%%%%%%%%%%%%%%%%%%%%%%%%%%%%%
%               µMALL
%%%%%%%%%%%%%%%%%%%%%%%%%%%%%%%%%%%%%%
\subsection{Background on \texorpdfstring{\mumall}{muMALL}}
\label{background:mumall}

Functional programming languages often feature the ability to encode
\emph{inductive types} and \emph{coinductive types} as data structure.
On a Curry-Howard point of view, how inductive and coinductive types
and reasoning are related to Linear Logic is not directly clear.

From a proof theory point of view, inductive and coinductive reasoning
have been studied for a long time. Most notably in the modal
$\mu$-calculus~\cite{de1979recursive, park1969fixpoint,
kozen1983results}, an extension of modal logic with fixed point
operators. Modal logic was extended with a new formula of the form
$\mu X. A$ where $A$ is a formula, along with the dual operator $\nu
X. A$. Already at this point in time, the $\mu$ and $\nu$
operators described the least and greatest fixed point
operators. By the Curry-Howard correspondence, those new logics
helped in modeling possibly infinite computation. Among those logics
and derivations, \emph{circular proofs}~\cite{santocanale,
fortier_et_al}, infinite proofs with finitely many subtrees, are of
particular interest to represent recursive programs. As we are
concerned with the particular case of Linear Logic, we look at the
logic {\mumall}. In his PhD, Baelde~\cite{baelde-phd} extended
linear logic with fixed points and showed the cut-elimination and the
equivalence of provability with regards to the higher-order
linear logic. Finally, Doumane et
al~\cite{baelde2016infinitary,these-doumane} investigated more
precisely the infinitary aspects of the derivations with a
proof-theoretical approach, defining validity criterions and their
properties.

\subsubsection{\mumall non-wellfounded derivations}
Baelde and his collaborators~\cite{baelde12least,baelde2016infinitary}
introduced the logic \mumall, an extension of the additive and
multiplicative fragment of linear logic~\cite{girard1987linear} with
least and greatest fixed-points. The grammar for linear logic
formulas, denoted by $F,G$, is extended with the construct $\mu X. F$
and its dual construct, $\nu X. F$ (where $X$ is a type variable and
$\mu,\nu$ are variable binders), to be interpreted as the least and
greatest fixed points of the operator $X\mapsto F$ respectively. These
permit to form inductive and coinductive statements: one can for
instance define the type of natural numbers as $\natT = \mu X.
\one\oplus X$ or of lists of type $F$ as $[F] = \mu X. \one \oplus
(F\otimes X)$.

\begin{defi}[Pre-formulas]
  Given an infinite set of variables $\mathcal{V} = \{X, Y, \dots\}$,
  the set of \mumall \emph{pre-formulas} is inductively defined by the following grammar:
  $$F,G ::= X \mid \one \mid \zero \mid \top \mid \bot \mid F
  \otimes G \mid F\parr G\mid F\oplus G \mid F\with
  G\mid \mu X. F\mid \nu X. F.$$
In $\mu X. F$ (resp. $\nu X. F$), $\mu$ (resp. $\nu$) binds the free occurrences of variable
$X$ in $F$.

A \emph{formula} is a \emph{closed}
  pre-formula ({\it i.e.} with no free variable).

  The \emph{(linear) negation} is the involution on pre-formulas defined as:
  $X^\bot = X, \zero^\bot = \top, \one^\bot = \bot, (F\parr
  G)^\bot = F^\bot \otimes G^\bot, (F\oplus G)^\bot =
  F^\bot \with G^\bot, (\nu X. F)^\bot = \mu X. F^\bot$.

    Given $F,G$ two formulas, $F[X\leftarrow G]$ denotes the
capture-free substitution of all the free-occurrences of variable $X$ in $F$ by $G$.
\end{defi}

In the following, all derivations will manipulate formulas only.
Setting $X^\bot = X$ is therefore harmless in the definition of negation.
In the following, we shall consider one-sided sequents:

\begin{defi}[\mumall sequents]
A \emph{\mumall sequent} is a finite ordered list of \mumall
formulas, usually denoted $\vdash \Gamma$. The concatenation of two
lists of formulas $\Gamma$ and $\Delta$ is simply written as
$\Gamma,\Delta$.
\end{defi}

In the following, we will be interested in the non-wellfounded and
circular proof systems used to derive \mumall sequents,
that we shall simply refer to as \mumall sequent calculus in the
following\footnote{Note that, in the literature, the sequent calculus
we consider here is often called $\mumall^\infty$ to distinguish it
from Baelde and Miller's calculus but since we shall not consider the
latter logical system in the present work, this
introduces no confusion.}: their inference rules are the usual
$\textsf{MALL}$ inference together with the addition of two new
inference rules (in the one-sided sequent calculus) for the $\mu$ and
$\nu$ connectives:

\[\infer[\mu]{\vdash \mu X. F, \Delta}{\vdash F[X \leftarrow \mu
X. F], \Delta} \qquad \infer[\nu]{\vdash \nu X. F,
\Delta}{\vdash F[X \leftarrow \nu X. F], \Delta}\]

\begin{defi}[\mumall inference rules and pre-proofs]
  \label{def:rules-mumall}
  A {\mumall} \emph{pre-proof} of conclusion $\Gamma$ is a sequent derivation tree coinductively  generated by the rules of~\autoref{fig:rules-mumall} such that the root is labelled with $\vdash \Gamma$.
  A \emph{circular pre-proof} is a pre-proof having only finitely many distinct subtrees.

  A formula is \textit{principal} when it is the formula to which the
  rule is being applied. Given an inference rule $r$, we denote by
  $\premiss r$ the set of its premiss sequents
 (i.e. above the line).

 Each inference rule is given together with an \emph{immediate
 sub-occurrence} relation, relating a formula of the conclusion
 sequent with formulas of the premisses, depicted by the green and
 blue lines in the proof system. The reflexive transitive closure of
 this relation is written $F\sqsubseteq G$ when $F$ is a formula
 occurrence (that is a position in a given sequent occurrence) that is
 above $G$ in a proof and related by any number of immediate
 sub-occurrence relations.
\end{defi}

The suboccurrence relation will most often be left implicit since it
can usually be reconstructed from the context or the way the sequents
are written. A case when we shall make explicit this suboccurrence
relation is in the threading relation used to distinguish pre-proofs
from valid proofs in what follows.

Note also that in what follows, we will most often use the exchange
rule implicitly, denoting with an inference rule name, say $\rtensor$,
all the derived rules obtained by applying an exchange rule on top of
each premise of the rule and below the conclusion of the rule. For
example, we can freely write (here, $\rex^\star$ refers to a sequence
of exchange rules):

\begin{prooftree}
\hypo{\vdash \Gamma_1, F,\Gamma_2}
\hypo{\vdash \Delta_1, G,\Delta_2}
\infer{2}[\rotimes]{\vdash \Gamma_1, \Delta_2, F\tensor G, \Gamma_2,\Delta_1}
\end{prooftree}
\qquad for \qquad
\begin{prooftree}
\hypo{\vdash \Gamma_1, F,\Gamma_2}
\infer1[$\rex^\star$]{\vdash F, \Gamma_1,\Gamma_2}
\hypo{\vdash \Delta_1, G,\Delta_2}
\infer1[$\rex^\star$]{\vdash G, \Delta_1, \Delta_2}
\infer{2}[\rotimes]{\vdash F\tensor G, \Gamma_1,\Gamma_2,\Delta_1,\Delta_2}
\infer{1}[$\rex^\star$]{\vdash \Gamma_1, \Delta_2, F\tensor G, \Gamma_2,\Delta_1}
\end{prooftree}

\bigskip
What follows is an important notational convention that is used in most of the technical development of the paper when relating isos and \mumall proofs:

\begin{convention}
  The two-sided notation $\Gamma \vdash \Delta$ will denote sequent $\vdash
  \Gamma^\bot, \Delta$.
\end{convention}

For example
  we will write $\begin{array}{c}\infer[\nu]{\mu X. X \vdash F}{\mu
  X. X \vdash F}\end{array}$ for
  $\begin{array}{c}\infer[\nu]{\vdash \nu X. X, F}{\vdash \nu X. X,
  F}\end{array}$.
  (Note in particular that we do not modify the rule labels.)

\begin{rem}
  This two-sided notation helps in clarifying the computational
  reading of a derivation by distinguishing between input type and
  output type (even though in classical linear logic they are
  interchangeable) and it will also make lighter the notation for
  translating isos into \mumall proofs. In particular, we will
  essentially use notation $\Gamma \vdash \Delta$ in a specific case
  when the formulas in $\Gamma$ and $\Delta$ are positive formulas
  (that is, are built from $\tensor, \oplus, \mu, \one, \zero$ and
  atoms). This will allow for a convenient correspondence between
  \emph{left formulas} (in the two sided-presentation) and
  \emph{negative formulas} (in the one-sided presentation) that we
  will heavily rely on in the following sections.

  Note that an alternative would have been to present {\mumall} in the
  two-sided notation all along: apart from being essentially superficial
  notational distinction (at least for our present concerns), this would
  have departed from most of the literature on \mumall.
\end{rem}

\begin{table}[t]
  \setlength{\tabcolsep}{0.6em}

\[\begin{array}{@{}cccc@{}}
  \tinfax & \tinfcut & \tinfex & \\[20pt]
 \tinfparr  & \tinftensor & \tinfbot & \tinfone \\[20pt]
 \tinfwith  & \tinfplus & \tinftop & \text{(no rule for 0)}  \\[20pt]
&  \tinfnu & \tinfmu &
\end{array}
\]
\begin{tikzpicture}[overlay,remember picture,-,line cap=round,line width=0.1cm]
%echange
   \draw[rounded corners, smooth=2,green, opacity=.4] ([xshift=-2mm] pic cs:tex1) to ([xshift=3mm, yshift=2mm] pic cs:tex2);
   \draw[rounded corners, smooth=2,green, opacity=.4] ([xshift=3mm] pic cs:tex1) to ([xshift=-2mm, yshift=2mm] pic cs:tex2);
   \draw[rounded corners, smooth=2,cyan, opacity=.4] ( [xshift=7mm]pic cs:tex1) to ([xshift=7mm, yshift=2mm] pic cs:tex2);
   \draw[rounded corners, smooth=2,cyan, opacity=.4] ( [xshift=-5mm]pic cs:tex1) to ([xshift=-5mm, yshift=2mm] pic cs:tex2);
%bot
   \draw[rounded corners, smooth=2,cyan, opacity=.4] ([xshift=-1mm] pic cs:tbotb1) to ([xshift=-1mm, yshift=2mm] pic cs:tbotb2);
%parr
   \draw[rounded corners, smooth=2,green, opacity=.4] ([xshift=-1mm] pic cs:tpa1) to ([xshift=-3mm,yshift=2mm] pic cs:tpa2);
   \draw[rounded corners, smooth=2,green, opacity=.4] ([xshift=-1mm] pic cs:tpa1) to ([xshift=1mm,yshift=2mm] pic cs:tpa2);
   \draw[rounded corners, smooth=2,cyan, opacity=.4] ( [xshift=5mm]pic cs:tpa1) to ([xshift=5mm,yshift=2mm] pic cs:tpa2);
%tensor
   \draw[rounded corners, smooth=2,green, opacity=.4] ([xshift=-6mm] pic cs:tta1) to ([xshift=-1mm,yshift=2mm] pic cs:tta2);
   \draw[rounded corners, smooth=2,green, opacity=.4] ([xshift=-6mm] pic cs:tta1) to ([xshift=-1mm,yshift=2mm] pic cs:tta3);
   \draw[rounded corners, smooth=2,cyan, opacity=.4] ([xshift=1mm] pic cs:tta1) to ([xshift=2mm,yshift=2mm] pic cs:tta2);
   \draw[rounded corners, smooth=2,cyan, opacity=.4] ([xshift=5mm] pic cs:tta1) to ([xshift=3mm,yshift=2mm] pic cs:tta3);

%plus
   \draw[rounded corners, smooth=2,green, opacity=.4] ([xshift=-2mm] pic cs:tpla1) to ([xshift=-3mm,yshift=2mm] pic cs:tpla2);
   \draw[rounded corners, smooth=2,cyan, opacity=.4] ( [xshift=6.5mm]pic cs:tpla1) to ([xshift=2.5mm,yshift=2mm] pic cs:tpla2);

%with
   \draw[rounded corners, smooth=2,green, opacity=.4] ([xshift=-6mm] pic cs:twa1) to ([xshift=-1mm,yshift=2mm] pic cs:twa2);
   \draw[rounded corners, smooth=2,green, opacity=.4] ([xshift=-6mm] pic cs:twa1) to ([xshift=-1mm,yshift=2mm] pic cs:twa3);
   \draw[rounded corners, smooth=2,cyan, opacity=.4] ([xshift=1mm] pic cs:twa1) to ([xshift=2mm,yshift=2mm] pic cs:twa2);
   \draw[rounded corners, smooth=2,cyan, opacity=.4] ([xshift=1mm] pic cs:twa1) to ([xshift=3mm,yshift=2mm] pic cs:twa3);
%mu
   \draw[rounded corners, smooth=2,green, opacity=.4] (pic cs:tma1) to ([xshift=-18mm,yshift=2mm] pic cs:tma2);
   \draw[rounded corners, smooth=2,cyan, opacity=.4] ([xshift=5mm] pic cs:tma1) to ([xshift=2mm,yshift=2mm] pic cs:tma2);
%nu
   \draw[rounded corners, smooth=2,green, opacity=.4] (pic cs:tna1) to ([xshift=-18mm,yshift=2mm] pic cs:tna2);
   \draw[rounded corners, smooth=2,cyan, opacity=.4] ([xshift=5mm] pic cs:tna1) to ([xshift=2mm,yshift=2mm] pic cs:tna2);
%cut
   \draw[rounded corners, smooth=2,cyan, opacity=.4] ([xshift=-2mm] pic cs:tca1) to ([xshift=-2mm] pic cs:tca2);
   \draw[rounded corners, smooth=2,cyan, opacity=.4] ([xshift=1mm] pic cs:tca1) to ([xshift=1mm] pic cs:tca3);
   \end{tikzpicture}
  \caption{Rules of \mumall. The sub-occurrence relation is graphically depicted with colored lines between formulas of the conclusion and premisses sequents, in green for the principal formula, in blue for the other formulas.}
  \label{fig:rules-mumall}
\end{table}

For instance, taking the type of natural numbers that we defined
earlier, it is possible to encode any natural number as a derivation
of \mumall.

\begin{exa}
  \label{ex:mumall:nat}
  Representing the type of natural numbers by the pre-formula
  $\natT = \mu X. \one\oplus X$, one can define the proof $\pi_n$,
  encoding any natural number $n$. We give this encoding, by induction
  on $n$ as:

  \[\pi_0 = \begin{array}{c}\infer[\mu]{\vdash \mu X. \one\oplus
  X}{\infer[\oplus^1]{\vdash \one\oplus \natT}{\infer[\one]{\vdash
  \one}{}}}\end{array} \qquad\qquad \pi_n =
  \begin{array}{c}\infer[\mu]{\vdash \mu X. \one\oplus
  X}{\infer[\oplus^2]{\vdash \one\oplus \natT}{\infer{\vdash
  \natT}{\pi_{n-1}}}}\end{array}\]
\end{exa}

The particularity is that the derivations are generated
\textbf{coinductively} over the set of rules, allowing for infinite
derivation trees, called pre-proofs.

Their name comes from the fact that, if we consider that any
derivation is a \emph{proof}, then we can prove any statement $\psi$
using the cut-rule, as shown in~\autoref{mumall:ex:degenerated}. This is
why {\mumall} comes with a validity criterion, separating
\emph{pre-proofs} from actual \emph{proofs}.

The usual cut-elimination reduction rules of
MALL~\cite{girard1987linear} is extended with a new reduction rule and
two commutation rules (here, $\sigma\in\{\mu,\nu\}$):

\begin{center}
\scalebox{0.8}{
  \begin{prooftree}
  \hypo{\vdash F^\bot[X\leftarrow \mu X.F^\bot], \Gamma}
  \infer1[$\mu$]{\vdash \mu X. F^\bot, \Gamma}
  \hypo{\vdash F[X\leftarrow \nu X. F], \Delta}
  \infer1[$\nu$]{\vdash \nu X. F, \Delta}
  \infer2[cut]{\vdash \Gamma, \Delta}
\end{prooftree}
$\quad\leadsto\quad$\begin{prooftree}
  \hypo{\vdash F^\bot[X\leftarrow \nu X. F^\bot], \Gamma}
  \hypo{\vdash F[X\leftarrow \nu X. F], \Delta}
  \infer2[cut]{\vdash \Gamma,\Delta}
\end{prooftree}
}

\scalebox{0.8}{
  \begin{prooftree}
  \hypo{\vdash F[X\leftarrow \sigma X. F], \Gamma, C}
  \infer1[$\sigma$]{\vdash \sigma X. F, \Gamma, C}
  \hypo{\vdash C^\bot, \Delta}
  \infer2[cut]{\vdash \sigma X. F, \Gamma, \Delta}
\end{prooftree}
$\quad\leadsto\quad$\begin{prooftree}
  \hypo{\vdash F[X\leftarrow \sigma X. F], \Gamma, C}
  \hypo{\vdash C^\bot, \Delta}
  \infer2[cut]{\vdash F[X\leftarrow \sigma X. F], \Gamma,\Delta}
  \infer1[$\sigma$]{\vdash \sigma X. F, \Gamma, \Delta}
\end{prooftree}
}
\end{center}

\begin{figure}[t]
  \centering
  \[\begin{prooftree} \hypo{} \ellipsis{}{}
      \infer1[$\nu$]{\vdash \nu X. X}

      \hypo{} \ellipsis{}{}
      \infer1[$\mu$]{\vdash \mu X. X, \Gamma}
      \infer2[\cut]{\vdash \Gamma}
  \end{prooftree}\]
  \caption{Invalid pre-proof}
  \label{mumall:ex:degenerated}
\end{figure}

\subsubsection{Bouncing Validity}
The validity criterion requires that each infinite branch can be
justified by a form of coinductive reasoning. The criterion also
ensures that the cut-elimination procedure holds.
In the following, we will rely on a validity criterion recently introduced by Baelde et al.~\cite{bouncing} in a slightly simplified reformulation.

This criterion is based on Girard's Geometry of Interaction where some
data (namely a \emph{thread}) moves through the derivation, following
a subformula and collecting information (its \emph{weight}). Then, one
can analyze the collected information and determine whether or not it
is \emph{valid}. More formally, a
\emph{thread}~\cite{baelde2016infinitary,bouncing} is an infinite
sequence of tuples of formulas, sequents and directions (either up or
down) written $(F; \vdash \Phi; d)$. Intuitively, these threads follow
some formula (here $F$) starting from a sequent of the derivation and
starting by going up. The thread has the possibility to \emph{bounce}
on axioms and cuts and change its direction, either going back-down on
an axiom or back-up on a cut. A thread will be called \emph{valid}
when it is non-stationary (does not follow a formula that is never a
principal formula of a rule), and when in the set of formulas
appearing infinitely often, the minimum formula (according to the
subformula ordering) is a $\nu$ formula. We will then say that a
derivation is valid if any infinite branch is inhabited by a valid
thread.\footnote{Note that in {\mumall} one needs an additional notion of
\emph{slices} for the additive part of the logic. However, since we
will only consider a fragment of {\mumall} this notion will not be
required in our work.} The system of thread previously mentioned is
developed in~\cite{baelde2016infinitary} and extended
in~\cite{bouncing} and called the \emph{bouncing-thread-validity}.
This is the criterion that we present here and that we will use
in~\autoref{sec:ch}.

\medskip

We recall the formalism of bouncing validity from~\cite{bouncing},
adapting it to sequents as lists of formulas instead of locative
occurrences~\cite[Def 2.5]{bouncing}.

As mentioned earlier, not all derivations are indeed proofs. To answer
this problem, {\mumall} comes with a validity criterion for
derivations. This makes use of the notion of
\textit{bouncing-threads}: paths that travel along the infinite
derivation and collect some information along the way. In order to
formally define bouncing-threads we first need to introduce some
notations: given an alphabet $\Sigma$, we denote by
$\Sigma^\omega$ the set of infinite words over $\Sigma$ and
$\Sigma^\infty = \Sigma^* \cup \Sigma^\omega$. The letter
$\varrho$ will denote ordinals in $\omega + 1$. Finally, we use a
special concatenation: given $u = (u_i)_{i\leq n<
\omega}$ and $v = (v_i)_{i\in\varrho}$ such that $u_n = v_0$, we
define $u \odot v$ as the concatenation of $u$ and $v$ without the
first element of $v$, i.e. : $u \cdot
(v_i)_{i\in\varrho\backslash\{0\}}$. For instance $aab \odot bac =
aabac$.

We begin with the definition of \textit{pre-thread} : the
basic construction of a path along an infinite tree that follows a
formula occurrence. Then, we only look at some special ones called
\textit{threads}, and finally define the notion of \textit{valid
thread} that validates an infinite branch as being
valid. A pre-proof is a proof whenever all of its infinite branches
are valid.

\begin{defi}[Pre-thread~\cite{bouncing}]
  \label{mumall:pre-thread-def-csl}
  A pre-thread is a sequence $(F_i, s_i, d_i)_{i\in\varrho\in\omega+1}$ of tuples
  of a formula (occurrence), a sequent (occurrence) and a direction $d\in\{\uparrow,
  \downarrow\}$ such that for all $i\in\varrho$ and $i+1\in\varrho$
  one of the following holds:
  \begin{itemize}
    \item $d_i = d_{i+1} = \uparrow$, $s_{i+1} \in \premiss{s_i}$ and $F_{i+1}\sqsubseteq F_i$
    \item $d_i = d_{i+1} = \downarrow, s_i\in\premiss{s_{i+1}}$ and $F_i \sqsubseteq F_{i+1}$
    \item $d_i = \downarrow, d_{i+1} = \uparrow$, $s_i$ and $s_{i+1}$
    are the two premisses of the same cut rule and $F_i =
    F^\bot_{i+1}$
    \item $d_i = \uparrow, d_{i+1} = \downarrow$ and $s_i = s_{i+1} =
    \vdash F_i, F_{i+1}$ (or $\vdash F_{i+1}, F_i$) is the conclusion of an axiom rule.
  \end{itemize}
\end{defi}

Bouncing validity is a condition expressed on pre-threads of a
specific form. To state it, we need to first recall the notion of
\emph{weight} of a pre-thread from~\cite{bouncing}:

\begin{defi}[Weight of a pre-thread]
  Consider $t = (F_j, s_j, d_j)_{j\in\varrho\in\omega+1}$ a
  pre-thread. The weight of $t$, denoted $w(t)$, is a word over
  $(w_j)_{j\in\varrho\in\omega+1}\in
  \{l,r,i,\ov{l},\ov{r},\ov{i},\mathcal{W}, \mathcal{A},
  \mathcal{C}\}^\infty$ such that for every $j\in\varrho$ one of the
  following holds:
  \begin{itemize}
    \item thread going up ($d_j = \uparrow$):
    \begin{itemize}
      \item $w_j = l$ if $F_j = F_{j+1} \star G$ is principal in $s_j$, with $\star \in \{\parr,\tensor,\oplus,\with\}$;
      \item $w_j = r$ if $F_j = G \star F_{j+1}$ is principal in $s_j$, with $\star \in \{\parr,\tensor,\oplus,\with\}$;
      \item $w_j = i$ if $F_j = \sigma X. F$ is principal in $s_j$ and $F_{j+1} = F[X \leftarrow \sigma X. F]$;
      \item $w_j = \mathcal{W}$ if $F_j$ is not principal in $s_{j}$;
      \item $w_j = \mathcal{A}$ if $d_{j+1} =  \downarrow$.
    \end{itemize}
    \item thread going down ($d_j = \downarrow$):
    \begin{itemize}
      \item $w_j = \ov{l}$ if $F_{j+1} = F_{j} \star G$ is principal in $s_{j+1}$, with $\star \in \{\parr,\tensor,\oplus,\with\}$;
      \item $w_j = \ov{r}$ if $F_{j+1} =G \star F_{j}$ is principal in $s_{j+1}$, with $\star \in \{\parr,\tensor,\oplus,\with\}$;
      \item $w_j = \ov{i}$ if $F_{j+1} = \sigma X. F$ is principal in $s_{j+1}$ and $F_{j} = F[X \leftarrow \sigma X. F]$;
      \item $w_j = \mathcal{W}$ if $F_{j+1}$ is not principal in $s_{j+1}$;
      \item $w_j = \mathcal{C}$ if $d_{j+1} = \uparrow$.
    \end{itemize}
  \end{itemize}
\end{defi}

Weights of pre-thread allow us to characterize certain sets of
pre-threads of interest.

\begin{defi}[b-paths and h-paths]
We define two sets of words $\goth{B}$ and $\goth{H}$ inductively as
follows:
\[\goth{B} ::= \mathcal{C} \mid \goth{B} \mathcal{W}^* \mathcal{A}
      \mathcal{W}^* \goth{B} \mid \ov{x} \mathcal{W}^* \goth{B}
      \mathcal{W}^* x \qquad \qquad\goth{H} ::= \epsilon \mid \mathcal{A} \mathcal{W}^* \goth{B}\]

A finite pre-thread $t$ is called a \emph{b-path} if $w(t) \in
\goth{B}$, and it is called a \emph{h-path} if $w(t)\in \goth{H}$.
\end{defi}

We are now ready to state the notion of bouncing
thread~\cite{bouncing}:
\begin{defi}[(Bouncing) thread]
  A pre-thread $t$ is a \emph{(bouncing) thread} when it can be decomposed as
 $\bigodot_{i\in 1 + \varrho}(H_i \odot V_i)$ where for all $i\in 1 + \varrho$:
  \begin{itemize}
    \item $w(V_i) \in \{l,r,i,\mathcal{W}\}^\infty$, and it is non-empty if $i\not=\varrho$
    \item $w(H_i) \in \goth{H}$, and it is non-empty if $i\not= 0$
  \end{itemize}
\end{defi}

The decomposition can be read as a thread initially going up,
before going back down after encountering an axiom, accumulating
debt in the form of the alphabet $\set{\ov{l}, \ov{r}, \ov{i}}$, which
will need to be repaid by their opposites, $\set{l, r, i}$, when going
back up after encountering a cut.
 Those dual alphabets correspond to steps of
the cut-elimination: making sure that the correct formulas will at
some point interact by a cut.

Such a decomposition is unique, and we call $(V_i)_{i\in 1 + \varrho}$
the \textit{visible part} of $t$ and $(H_i)_{i\in 1 + \varrho}$ the
\textit{hidden part}. A thread is \textit{stationary} when its visible
part is a finite sequence (of finite words) or when there exists $k\in
1 + \varrho$ such that $w(V_i) \in \{\mathcal{W}\}^\infty$ for all
$k\leq i\in 1 + \varrho$.

\begin{exaC}[{\cite[Example 4.2]{bouncing}}]
  Consider the following derivation:

  \[
  \begin{prooftree}

    \hypo{\tikzmark{bax}}
    \infer1[$\id$]{\vdash \tikzmark{b3}\nu X. X, \tikzmark{b4}\mu X. X}
    \hypo{\tikzmark{vax}}
    \infer1[$\id$]{\vdash \tikzmark{v3}\nu X. X, \tikzmark{v4}\mu X. X}
    \infer2[$\parr, \otimes$]{\vdash \nu X. X \tikzmark{b2}\tikzmark{v2}\parr \nu X. X,
     \mu X. X \tikzmark{b5}\tikzmark{v5}\otimes \mu X. X}

    \hypo{\tikzmark{b8}\tikzmark{v8}\vdots}
    \infer1{\vdash \tikzmark{b7}\tikzmark{v7}\nu X. X, \nu X. X}
    \infer1[$\nu$]{\vdash \tikzmark{b6}\nu X. X, \tikzmark{v6}\nu X. X}
    \infer1[$\parr$]{\vdash \tikzmark{bb}\nu X. X \parr \tikzmark{vv}\nu X.X}
    \infer2[$\cut$]{\vdash \nu X. X \tikzmark{b1}\tikzmark{v1}\parr \nu X. X}
  \end{prooftree}
  \]
  ~\hfill\begin{tikzpicture}[overlay,remember picture,-,line cap=round,line width=0.1cm]
    \draw[rounded corners, smooth=2,red, opacity=.25]
    ($(pic cs:v1)+(.3cm,.2cm)$)   to
    ($(pic cs:v2)+(.4,.2cm)$)     to
    ($(pic cs:v3)+(.3,.2cm)$)     to
    ($(pic cs:vax)+(.3,.2cm)$)    to
    ($(pic cs:v4)+(.3,.0cm)$)     to
    ($(pic cs:v5)+(.3,.0cm)$)     to
    ($(pic cs:vv)+(.1,.0cm)$)     to
    ($(pic cs:v6)+(-.3,.0cm)$)    to
    ($(pic cs:v7)+(.3,-.0cm)$)    to
    ($(pic cs:v7)+(.3,.8cm)$);

    \draw[rounded corners, smooth=2,cyan, opacity=.25]
    ($(pic cs:b1)+(.2cm,.1cm)$)   to
    ($(pic cs:b2)+(.2,.2cm)$)     to
    ($(pic cs:b3)+(.2,.2cm)$)     to
    ($(pic cs:bax)+(.2,.2cm)$)    to
    ($(pic cs:b4)+(.3,.2cm)$)     to
    ($(pic cs:b5)+(.3,.1cm)$)     to
    ($(pic cs:bb)+(1,.1cm)$)      to
    ($(pic cs:b6)+(.2,.2cm)$)     to
    ($(pic cs:b7)+(.2,.0cm)$)     to
    ($(pic cs:b7)+(.2,.8cm)$);
\end{tikzpicture}\hfill~

We have two pre-threads, painted in red and blue. The weight of the
blue pre-thread is:
$\mathcal{W}l\mathcal{W}\mathcal{A}\bar{l}\mathcal{W}\mathcal{C}li\dots$,
which can be decomposed into its visible part (from the root to the
axiom) and its hidden part (from the axiom to the $\parr$ on the right
side of the derivation). Notice that such a decomposition is
impossible for the red pre-thread: it is not a thread.
\end{exaC}

We can now define the validity criterion on threads:

\begin{defi}[Valid threads~\cite{bouncing}]
  If we take the sequence of formulas followed by a non-stationary
  thread on its visible part and skipping the steps corresponding to
  $\mathcal{W}$ weights, we obtain an infinite sequence of formulas
  where each formula is an immediate subformula or an unfolding of the
  previous formula. The formulas appearing infinitely often admit a
  minimum with regard to the subformula ordering, such a formula is
  the \textit{minimal formula of the thread}.

  A non-stationary thread is \textit{valid} if its minimal formula is
  a $\nu$ formula.
\end{defi}

This notion of bouncing valid threads induces an associated
notion of valid proof. Note that in the definition below, we use a
simpler notion than that of~\cite{bouncing}, which we justify
immediately after the definition and example.

\begin{defi}[Validity of $\mu$MALL]
  We say that an infinite branch is valid if
  there exists a valid thread starting from one of its sequents, whose
  visible part is contained in this branch.

A $\mu$MALL pre-proof $\pi$ is \textit{valid} if every
infinite branch of $\pi$  is valid.
  \end{defi}

\begin{exa}

    The infinite derivation $\begin{array}{c}\infer[\mu]{\vdash \mu X.
    X}{\infer[\mu]{\vdash \mu X. X}{\vdots}}\end{array}$ is not valid
    as the only existing thread $(\mu X. X; \vdash \mu X. X;
    \up)^\infty$ has as sole minimal formula a $\mu$ formula.
\end{exa}

This validity condition ensures cut-elimination for {\mumall}, as proved in~\cite{bouncing}.
The reader will notice a difference in the notion of
$\mumall$ validity between~\cite{bouncing} and our work.
Indeed, bouncing validity as stated in~\cite{bouncing} relies on the notion of {\it slice}
(which comes from LL proof-nets). However, as we shall only be interested
in a fragment of {\mumall} pre-proofs in the following, that is the target of the translations from  {\it iso}s to proofs, we will not need to reason about
slices (this will be further detailed in Section~\ref{sec:ch}).

\begin{rem}
Original cut-elimination results on non-wellfounded
{\mumall} dealt with sequents as
sets of (locative) occurrences~\cite{baelde2016infinitary,bouncing} rather than sequents as lists as we
consider here. However, it was shown recently by the second
author~\cite{Saurin23} how to transfer those cut-elimination results
to sequent calculi manipulating lists, in a uniform way.
\end{rem}

\begin{exa}
  \label{mumall:ex:succ}
    Remember that $\natT = \mu X. \one\oplus X$ and hence $\natT^\bot
    = \nu X. \bot\with X$. We can then define the successor function
    as:

    \noindent
    \begin{minipage}{\textwidth}
    \[\pi_{\opn{succ}} = \begin{prooftree}

      \infer0[$\one$]{\vdash \one}
      \infer1[$\oplus^1$]{\vdash \one\oplus\natT}
      \infer1[$\mu$]{\vdash \natT}
      \infer1[$\bot$]{\one\vdash \natT}

      \hypo{$\tikzmark{red6}$\pi_{\opn{succ}}}
      \infer1{$\tikzmark{red5}$\natT\vdash$\tikzmark{blue5}$\natT}
      \infer1[$\oplus^2$]{$\tikzmark{red4}$\natT\vdash $\tikzmark{blue4}$\one\oplus\natT}
      \infer1[$\mu$]{$\tikzmark{red3}$\natT\vdash $\tikzmark{blue3}$\natT}
      \infer2[$\with$]{$\tikzmark{red2}$\one\oplus\natT \vdash $\tikzmark{blue2}$\natT}
      \infer1[$\nu$]{$\tikzmark{red1}$\natT \vdash $\tikzmark{blue1}$\natT}
    \end{prooftree}
  \]
  ~\hfill\begin{tikzpicture}[overlay,remember picture,-,line cap=round,line width=0.1cm]
      \draw[rounded corners, smooth=2,red, opacity=.25]
      ($(pic cs:red1)+(.3cm,.1cm)$)   to
      ($(pic cs:red2)+(.4,.2cm)$)     to
      ($(pic cs:red3)+(.3,.2cm)$)     to
      ($(pic cs:red4)+(0,.2cm)$)      to
      ($(pic cs:red5)+(.4,.5cm)$)     to
      ($(pic cs:red6)+(0.3,0.5cm)$);

      \draw[rounded corners, smooth=2,cyan, opacity=.25]
      ($(pic cs:blue1)+(+.2cm,.1cm)$)   to
      ($(pic cs:blue2)+(+.2,.2cm)$)     to
      ($(pic cs:blue3)+(+.2,.2cm)$)     to
      ($(pic cs:blue4)+(+.5,.2cm)$)     to
      ($(pic cs:blue5)+(+.1,1.1cm)$);
    \end{tikzpicture}\hfill~
  \end{minipage}

We can show that this derivation is valid, by following the two
possible thread (painted in red and blue) in the infinite branch.

The red-thread $t_l$ have for weight $(irWW)^\infty$ and the set of
its formulas encountered infinitely often is $\set{\nu X. \bot\with
X, \bot\with \nu X.\bot\with X}$. The smallest formula is a $\nu$
formula, validating the branch.

On the other hand, the blue-thread has weight $(WWir)^\infty$ and for
smallest formula a $\mu$ formula and hence is not valid. Since we only
require \emph{one} valid thread per infinite branch, the whole
derivation is therefore valid and a proof.
\end{exa}

\begin{exa}
  Recall the type of lists of type $\psi$ as $[\psi] = \mu X.
  \one\oplus (\psi\otimes X)$.
    \label{ex:mumall-thread}
    \newcommand{\aob}{\ensuremath{F}}
    \newcommand{\boa}{\ensuremath{G}} Consider the {\mumall} proof
    $\pi_S$ of $A\otimes B\vdash B\otimes A$, which swaps the input of
    type $A\otimes B$, defined as:

    \[\pi_S = \begin{array}{c}\infer[\parr]{A\otimes B \vdash
    {B\otimes A}}{\infer[\otimes]{{A}, {B}\vdash {B\otimes
    A}}{\infer[\id]{{A}\vdash {A}}{}\qquad \infer[\id]{{B}\vdash
    {B}}{}}}\end{array}\]

We can use this proof to define the infinite, circular proof
$\phi(\pi_S)$, which swap the elements of a given input list,
where $\aob = {A\otimes B}$ and $\boa =
{B\otimes A}$:

  \[\scalebox{0.77}{
    \begin{prooftree}
      \infer0[$\one$]{\vdash \one}
      \infer1[$\oplus^1$]{\vdash \one \oplus (\boa \otimes [\boa])}
      \infer1[$\mu$]{\vdash [\boa]}
      \infer1[$\bot$]{\one \vdash [\boa]}
  \infer0[\id]{\aob\vdash \aob}
      \hypo{\pi_S}
      \infer1{\aob\vdash \boa}
      \infer2[\cut]{\aob\vdash \boa}
    \hypo{\tikzmark{r77}\qquad\tikzmark{r88}}
    \infer1[\id]{\colorbox{cyan!30}{$[\tikzmark{r7}\aob]$}\vdash \colorbox{cyan!30}{$[\tikzmark{r8}\aob]$}}
      \hypo{\tikzmark{r10}\phi(\pi_S)}
      \infer1{\colorbox{cyan!30}{$[\tikzmark{r9}\aob]$}\vdash [\boa]}
      \infer2[\cut]{\colorbox{cyan!30}{$[\tikzmark{r6}\aob]$}\vdash [\boa]}

    \infer0[\id]{\boa\vdash \boa}
    \infer0[\id]{[\boa]\vdash [\boa]}
    \infer2[$\otimes$]{\boa, [\boa]\vdash (\boa) \otimes [\boa]}
    \infer1[$\oplus^2$]{\boa, [\boa]\vdash \one \oplus (\boa \otimes [\boa])}
    \infer1[$\mu$]{\boa, [\boa]\vdash [\boa]}

    \infer1[]{\boa, [\boa]\vdash [\boa]}
    \infer2[\cut]{\boa, \colorbox{cyan!30}{$[\tikzmark{r5}\aob]$}\vdash [\boa]}
    \infer2[\cut]{\aob, \colorbox{cyan!30}{$[\tikzmark{r4}\aob]$}\vdash [\boa]}
    \infer1[$\parr$]{\aob \otimes \colorbox{cyan!30}{$[\tikzmark{r3}\aob]$} \vdash [\boa]}

    \infer2[$\with$]{\one \oplus \colorbox{cyan!30}{$(\aob \tikzmark{r2}\otimes [\aob])$} \vdash [\boa]}
    \infer1[$\nu$]{ \colorbox{cyan!30}{[\tikzmark{r1}\aob]}\vdash [\boa]}

    \end{prooftree}

    \begin{tikzpicture}[overlay,remember picture,-,line cap=round,line width=0.1cm]
      \draw[rounded corners, smooth=2,cyan, opacity=.25] ($(pic
    cs:r1)+(-.1cm,.1cm)$) to ($(pic cs:r2)+(.1cm,-.2cm)$)to ($(pic
    cs:r2)+(-.1,.2cm)$) to($(pic cs:r3)+(0,.2cm)$) to($(pic
    cs:r4)+(0,.2cm)$) to($(pic cs:r5)+(0,.1cm)$) to($(pic
    cs:r6)+(.1cm,.1cm)$) to($(pic cs:r7)+(0,.2cm)$) to($(pic
    cs:r77)+(0,.2cm)$) to($(pic cs:r88)+(0,.2cm)$) to($(pic
    cs:r8)+(0,.2cm)$) to($(pic cs:r9)+(0,.2cm)$) to($(pic
    cs:r10)+(0,.2cm)$);
    \end{tikzpicture}

    }
  \]

  We painted in blue the pre-thread that will validate the proof. The
  pre-thread starts at the root on the formula $[F]$ and follows a
  sub-formula when some rules (here $\nu, \with, \parr$) are applied
  to it. The pre-thread is then inactive during the multiple
  \cut{} rules until it reaches the \id{} rule, where the
  pre-thread bounces and starts going down before bouncing back up
  again in the \cut{} rule, into the infinite branch, where the
  behavior of the pre-thread will repeat itself. One can then show
  that this pre-thread is indeed a thread, according
  to~\cite{bouncing} and that it is valid: among the formulas it
  visits infinitely often on its visible part, the minimal formula is
  $[F]$, which is a $\nu$ formula.

  Notice that the sub-derivation containing the
  \cut{} and the proof $\pi_S$ could have been simplified without the
  \cut{} and by putting $\pi_S$ directly.
  However, as we will see in Section~\ref{sec:ch},
  this derivation will correspond exactly to the translation
  of a term.
\end{exa}

\subsubsection{Circular Representation}
As mentioned earlier, {\mumall} allows us to build infinite
derivations and the fragment of \emph{circular} proofs is of a special
interest as it allows to be finitely represented and therefore
subject to algorithmic treatment. We now introduce circular
representations which provide a finite mean to deal with circular
proofs with the help of \emph{back-edges}. Back-edges are arrows in
the derivation that represent a repetition of the derivation as a
(strict) sub-derivation. Derivations with back-edges are
represented with the addition of sequents marked by a back-edge label,
denoted $\vdash^f$, and an additional rule,
$\begin{array}{c}\infer[be(f)]{\vdash \Gamma}{}\end{array}$, which
represents a back-edge pointing to the sequent $\vdash^f$. We take the
convention that from the root of the derivation  to
rule $be(f)$ there must be \emph{exactly one} sequent annotated by
$f$.

\begin{exa}\label{ex:circular-proof} An infinite derivation and three
  different circular representations with back-edges:

  \[\begin{prooftree}

    \hypo{}
    \ellipsis{\tikzmark{ex_top3}}{}
    \infer1[$\mu$]{\vdash \mu X. X}
    \infer1[$\mu$]{\vdash \tikzmark{ex_bot3}\mu X. X}
    \end{prooftree}\qquad
    \qquad
    \qquad
    \begin{prooftree}
    \infer0[$be(f)$]{\vdash \mu X. X}
    \infer1[$\mu$]{\vdash^f \mu X. X}
    \end{prooftree}
    \qquad
    \begin{prooftree}
    \infer0[$be(f)$]{\vdash \mu X. X}
    \infer1[$\mu$]{\vdash^g \mu X. X}
    \infer1[$\mu$]{\vdash^f \mu X. X}
    \end{prooftree}
    \qquad
    \begin{prooftree}
    \infer0[$be(g)$]{\vdash \mu X. X}
    \infer1[$\mu$]{\vdash^g \mu X. X}
    \infer1[$\mu$]{\vdash^f \mu X. X}
    \end{prooftree}
  \]
\end{exa}

While a(n infinite) circular proof has infinitely many circular
representations (depending on where the back-edge is placed), an
unfolding operation allows to map those representations to the
circular derivation:

\begin{defi}[Unfolding]

  We define the unfolding of a circular representation $P$ with a
  valuation $v$ from back-edge labels to circular-derivations along
  with a valuation by:
  \begin{itemize}
    \item $\unfold\left(P :\begin{array}{c}
    \infer[r]{\vdash\Gamma}{P_1 ~\dots~ P_n}\end{array}, v\right) =
    \begin{array}{c}\infer[r]{\vdash\Gamma}{\unfold(P_1, v) ~\dots~ \unfold(P_n, v)}\end{array}$
    \item \medskip$\unfold(be(f), v) = \unfold{(v(f))}$
    \item \medskip$\unfold\left(P : \begin{array}{c}\infer[r]{\vdash^f
    \Gamma}{P_1 ~\dots~ P_n}\end{array}, v\right) =
    \left(\begin{array}{c}\infer[r]{\vdash\Gamma}{\unfold(P_1,
      v') ~\dots~ \unfold(P_n, v')}\end{array}\right)$ \\ with
$v'(g) =\left\{
  \begin{array}{@{}ll@{}}
    (P, v) & \text{if $g = f$,}\\
    v(g) & \text{else}.
  \end{array}\right.$
\end{itemize}
\end{defi}

\section{First-order Isos}\label{section:language}

Sabry et al.~\cite{sabry2018symmetric} introduced a typed, functional,
reversible programming language with the type of lists as the only
infinite data-type. Although their paper eventually extends the
framework to quantum computing, in our paper we only consider the
reversible aspect. In particular, we focus on the first-order
fragment, extended with both a more general rewriting system and more
general inductive type. We consider arbitrary inductive types, making
it possible to not only build list but also e.g. trees. We start by
presenting the syntax of the language before moving on to its
rewriting system and then its typing system.

\subsection{Terms and Types}
The language is first-order: it consists of basic types, i.e. types
for values, patterns, expressions, and general terms, and isos types,
i.e. function types. The grammar for these notions is given as follows.
\begin{alignat*}{100}
  &\text{(Base types)} \quad& A, B &\hspace{3ex}&&::= &\hspace{3ex}& \one \alt A \oplus B
                                              \alt A \otimes B \alt \mu X. A \alt X \\
  &\text{(Isos types)} & T &&&::=&&A\iso B \\
  &\text{(Values)} & v &&&::=&& () \alt x \alt \inl{v} \alt \inr{v}
                                \alt
                                \pv{v_1}{v_2} \alt \fold{v} \\
  &\text{(Pattern)} & p &&&::=&& x \alt \pv{p_1}{p_2} \\
  &\text{(Expressions)} & e &&&::=&& v \alt \letv{p_1}{\omega~p_2}{e} \\
  &\text{(Isos)} & \isoterm &&&::=&& f \alt \isobasique \alt \\&&&&&&&\fix
                                     f.\isobasique \\
  &\text{(Terms)} & t &&&::=&& () \alt x \alt \inl{t} \alt \inr{t}
                               \alt
                               \pv{t_1}{t_2} \alt  \\
  & & &&& && \fold{t} \alt \isoterm~t \alt
             \letv{p}{t_1}{t_2}
\end{alignat*}

\begin{rem}
  \label{rem:Barendregt}
  The language makes use of free and bound variables.
  In order to avoid conflicts between variables we will always
  work up to $\alpha$-conversion and use Barendregt's
  convention~\cite[p.26]{henk1984lambda} which consists in keeping all
  bound and free variable names distinct, even when this remains
  implicit.
\end{rem}

\subsubsection{Values, patterns, expressions, terms and basic types}
They allow us to construct first-order terms. The constructors
$\inl{}$ and $\inr{}$ represent the choice between either the left or
right-hand side of a type of the form $A\oplus B$; the constructor
$\pv{}{}$ builds pairs of elements, with the corresponding type
constructor $\otimes$; $\fold{}$ represents inductive constructor of
the inductive type $\mu X. A$, where $\mu$ is a binder, binding the
type-variable $X$ in $A$. A value can serve both as a result and as a
pattern in the defining clause of an iso. We write $(x_1, \dots, x_n)$
for $\pv{x_1}{\pv{\dots}{x_n}}$ or $\overrightarrow{x}$ when $n$ is
clear from the context and $A_1\otimes \dots\otimes A_n$ for
$A_1\otimes (\dots \otimes A_n)$ and $A^n$ for $A\otimes \dots \otimes
A$, the $n$-th tensor of $A$. We assume that every top-level type is
closed.

\begin{rem}\label{rem:list}
  In the original paper~\cite{sabry2018symmetric},
  the type for lists of type $A$ is
  defined equationally as $[A] = \one \oplus (A\otimes
  [A])$. In our language, we instead use the inductive type
  constructor $\mu X. A$ to define the type of lists as
  $[A] \triangleq \mu X. \one \oplus (A\otimes X)$. The list constructors become
  \begin{itemize}
  \item $[\,]~\triangleq~\fold{(\inl{()})}$ for the empty list;
  \item $h :: t~\triangleq~\fold{(\inr{\pv{h}{t}})}$ for the addition of a head $h$ to a list $t$.
  \end{itemize}
\end{rem}

\subsubsection{Isos types.}
An iso of type $A\iso B$ acts on terms of
basic types. An iso is a function with inputs of type $A$ and outputs of
type $B$ and defined as a set
of clauses: \isobasique. In each clause, the token $v_i$
stands for an open value while  $e_i$ is an expression. For each $i$,
$v_i$ and $e_i$ share the same variables: the variables of $v_i$ can
be seen as a binder for the free variables of $e_i$.

The construction $\fix g . \isobasique$ represents the creation of a
recursive function. The intuition (enforced by the operational
semantics in \autoref{sec:opsem}) is that $\fix g . \isobasique$ is
identified with $\isobasique$, where $g$ is replaced with the fixed point
$\fix g . \isobasique$.

\subsection{Typing System}
\label{subsec:typing}

The typing system is two-fold. We define typing judgments for values,
patterns, terms and expressions, denoted with $\TypCtxta; \Psi\vdash_e
t : A$, and typing judgments for isos, denoted with $\Psi\entailiso
\omega : A\iso B$. In the judgments, the (linear) contexts $\TypCtxta$
are sets of pairs that consist of a term-variable and a base type,
where each variable can only occur once. The (non-linear) context
$\Psi$ is a set of size at most one. It can only contain a pair of an
iso-variable and an iso-type.

Based on~\cite{sabry2018symmetric}, the typing system is akin to the rules for {\mumall}.
It requires two criteria developed below: (i) that every recursive iso
terminates---we require structural recursion---and (ii) that every
iso is exhaustive and non-overlapping on the left and on the
right---to enforce the totality and bijectivity of isos.

\subsubsection{Typing of Terms and Expressions}
\label{subsubsection:typing}
We say that a term $t$ has type $A$ under contexts $\TypCtxta$ and
$\Psi$ if it can be inferred inductively from the following rules. We
start by first defining the typing rules for values:
\[
  \begin{array}{c}
    \infer{\emptyset;\Psi\entaile () : \one}{}
    \quad
    \infer{x:A;\Psi \entaile x :A}{}
    \quad
    \infer{\TypCtxta;\Psi\entaile \inl{t}:A \oplus B}{\TypCtxta;\Psi\entaile
    t:A}
    \quad
    \infer{\TypCtxta;\Psi\entaile\inr{t}:A \oplus B}{\TypCtxta;\Psi\entaile t:B}
    \mynl
        \infer{
    \TypCtxta_1,\TypCtxta_2;\Psi\entaile \pv{t_1}{t_2} : A \otimes B
    }{
    \TypCtxta_1;\Psi\entaile t_1 : A
    &
      \TypCtxta_2;\Psi\entaile t_2 : B
      }
      \qquad
      \infer{\TypCtxta; \Psi \entaile \fold{t} : \mu X. A }{\TypCtxta; \Psi \entaile t : A[X\leftarrow \mu X. A]}
      \mynl
      \qquad
  \end{array}
\]

\begin{lem}[Linearity of term-variables]
  \label{lemma:lin-term-var}
  Let $v$ be a value and $\TypCtxta; \Psi\vdash_e v : A$, be a valid
  typing judgment. Then all the term-variables of $v$ occur in
  $\TypCtxta$. Moreover, each such variable appears once and only
  once in $v$.
\end{lem}
\begin{proof}
  The proof is done by structural induction on the derivation of
  $\TypCtxta; \Psi\vdash_e v : A$. Notice how in all derivation
  rules, whenever the terms at the root are values, so are the terms
  appearing in the hypotheses. Moreover, notice how for the (only)
  branching rule for the product, the linear context $\TypCtxta$ is
  partitioned into $\TypCtxta_1$ and $\TypCtxta_2$.
\end{proof}

\begin{lem}[Inversion]
  \label{lemma:inversion}
  Given a well-typed value $v$ of type $A$, either $v = x$ or one of
  the following is true:
  \begin{itemize}
    \item $v = ()$ and $A = \one$;
    \item $v = \pv{v_1}{v_2}$ where $v_1$ is of type $A_1$ and $v_2$
    is of type $A_2$ and $A = A_1 \otimes A_2$;
    \item $v = \inl{v_1}$ and $v_1$ is of type $A_1$ and $A = A_1 \oplus A_2$;
    \item $v = \inr{v_2}$ and $v_2$ is of type $A_2$ and $A = A_1 \oplus A_2$;
    \item $v = \fold{v'}$ and $v'$ is of type $A'[X\leftarrow A']$ and
    $A = \mu X. A'$.
  \end{itemize}
\end{lem}
\begin{proof}
  This can be proved directly by case analysis on $v$.
\end{proof}

With this Lemma, we can define a notion of \textit{flattening} of a
 pattern $p$ of type $A$, denoted $|(p, A)|$, which gives
us a list of the variables in $p$ as well as their corresponding
types. The flattening is defined inductively as: $|(x, A)| = ([x],
[A]), |(\pv{p_1}{p_2}, A_1\otimes A_2)| = |(p_1, A_1)| ++ |(p_2,
A_2)|$ where $++$ is the pointwise list concatenation.

We can now define the typing derivation for general terms, where
in the typing rule for the \lett, we have that $|(p, A)| = ([x_1, \dots,
x_n], [A_1, \dots, A_n])$.
By abuse of notation we shall later
identify $|(p,A)|$ with the corresponding typing context $x_1:A_1,\ldots x_n:A_n$.
Note how this imposes that $p$ is of type $A$ in the context
$|(p,A)|$.

\[
  \begin{array}{c}
  \infer{
      \TypCtxta;\Psi\entaile \isoterm~t : B
      }{
      \Psi \entailiso \isoterm : A \iso B
    &
      \TypCtxta;\Psi\entaile t : A
      }
      \mynl
      \infer{\TypCtxta_1,\TypCtxta_2;\Psi \entaile \letv{p}{t_1}{t_2} :
      B}{\TypCtxta_1;\Psi\entaile t_1 : A \qquad\TypCtxta_2, x_1 : A_1, \dots, x_n : A_n; \Psi\entaile t_2 : B}
      \mynl
  \end{array}
\]

Note how the context $\TypCtxta$ is used linearly while the context
$\Psi$ is not. By abuse of notation, we will write $\TypCtxta \vdash_e t : A$
for $\TypCtxta ; \emptyset \vdash_e t : A$

\subsubsection{Exhaustivity, non-overlapping}
In order to apply
an iso to a term, the iso must be of type $A\iso B$ and the argument of
type $A$. Since we want our language to represent
\emph{isomorphisms} we impose isos of shape {\isobasique} and
type $A\iso B$ to be exhaustive and non-overlapping.
\emph{Exhaustivity}, means that the expressions on the left (resp. on
the right) of the clauses describe all possible values for the type
$A$ (resp. the type $B$). \emph{Non-overlapping} means that two
expressions cannot match the same value. For instance, the left and
right injections $\inl{v}$ and $\inr{v'}$ are non-overlapping while a
variable $x$ is always exhaustive. In order to formally define these
two notions, we first need to characterize \emph{pattern-matching}, as follows.

\begin{defi}[Pattern-Matching]
  \label{def:pattern-matching}\label{tab:pattern_matching} We say that
a value $v$ matches against a value $v'$ if there exists a
substitution $\sigma$, that is, a mapping from variables to (closed)
values such that $v$ under the substitution $\sigma$ is equal to $v'$.
This is noted $\sigma[v] = v'$ and is defined inductively over the
following rules.
\[
\infer{\sigma[\inl{e}] = \inl{e'}}{\sigma[e] = e'}
\quad
\infer{\sigma[\inr{e}] = \inr{e'}}{\sigma[e] = e'}
\quad
\infer{\sigma[x] = e}{\sigma = \{ x \mapsto e\}}
\quad
\infer{\sigma[\fold{e}] = \fold{e'}}{\sigma[e] = e'}
\]
\[
\infer{
\sigma[\pv{e_1}{e_2}] = \pv{e'_1}{e'_2}
}{
\sigma_1[e_1] = e'_1
&
\sigma_2[e_2] = e'_2
&
\operatorname{supp}(\sigma_1) \cap \operatorname{supp}(\sigma_2) = \emptyset
&
\sigma = \sigma_1\cup\sigma_2
} \quad \infer{\sigma[()] = ()}{}
\]
The \textit{support} of a substitution
$\sigma$ is defined as $\operatorname{supp}(\sigma) = \{x \mid (x
\mapsto v) \in \sigma \}$.
\end{defi}

\begin{defi}[Exhaustivity]
  \label{def:exhaustivity}
  We call a set $S$ of well-typed values of type $A$ (where each
  value can be typed under a different context) \emph{exhaustive} if,
  for every closed values $v$ of type $A$ there exists $v' \in S$ and
  $\sigma$ such that $\sigma[v'] = v$.
\end{defi}

\begin{defi}[Non-Overlapping]
  \label{def:nonoverlap}
  We call a set $S$ of values of type $A$ (where each value can be
  typed under a different context) \emph{non-overlapping} if, for
  every closed values $v$ of type $A$ there exists at most one $v' \in
  S$ and $\sigma$ such that $\sigma[v'] = v$.
\end{defi}

In order to define the typing of isos in \autoref{sec:typiso}, we
define the predicate $\PE_A$, used to ensure that isos are exhaustive
and non-overlapping, and that they indeed represent isomorphisms.

\pagebreak % avoided runt
\begin{defi}[Orthogonal Decomposition]
  \label{def:OD}
  We say that a list of well-typed values $S$ of type
  $A$ satisfies $\PE_A$, denoted $\PE_A(S)$, if it can be inferred
  inductively from the following rules.
  \[
  \scalebox{.9}{\text{$\begin{array}{c}
  \infer[\PE\text{-}var]{\PE_A{(\set{x})}}{} \qquad \infer[\PE\text{-}$\one$]{\PE_\one{(\set{()})}}{} \qquad
  \infer[\PE\text{-}\oplus]{\PE_{A\oplus B}(\{\inl{v} \alt v \in S \} \cup \{\inr{v} \alt v\in T \})}{\PE_A(S)\qquad \PE_B(T)} \qquad
  \mynl
  \infer[\PE\text{-}\mu]{\PE_{\mu X. A}{(\set{\fold{v} \alt v \in S})}}{\PE_{A[X \leftarrow \mu X. A]}(S)} \qquad
  \infer[\PE\text{-}\otimes_1]{\PE_{A\otimes B}(S)}{
  \PE_{A}(\pi_1(S)), \forall v\in \pi_1(S), \PE_B(S^1_v)}
  \mynl
  \infer[\PE\text{-}\otimes_2]
  {\PE_{A\otimes B}(S)}
  {\PE_{B}(\pi_2(S)), \forall v\in \pi_2(S), \PE_A(S^2_v)}
  \end{array}$}}
  \]
  In the rules $\PE\text{-}\otimes_1$ and $\PE\text{-}\otimes_2$,
  we have $S = \{\pv{v_1}{v_1'}, \dots, \pv{v_n}{v'_n}\}$ and the
  sets $\pi_1(S)$ and $\pi_2(S)$ are respectively $\set{v \mid
  \pv{v}{w} \in S}$ and $\set{w\mid\pv{v}{w}\in S}$. The sets $S_v^1$
  and $S_v^2$ are respectively $\set{w \mid \pv{v}{w}\in S}$ and
  $\set{w \mid \pv{w}{v} \in S}$.
\end{defi}

\begin{rem}
  \label{remark:cnot-od}
  The definition of $\PE_A(S)$ is inspired
  from~\cite{sabry2018symmetric}. The main difference is the rule for
  the tensor, and the new rule $\PE\text{-}\mu$ for inductive types.
  The problem in the original definition comes from the impossibility
  to type important isos. Indeed, the system in
  \cite{sabry2018symmetric} has the following rule for the tensor:
  \begin{equation}
   \label{eq:buggyrule}
    \infer{\PE_{A\otimes
        B}\set{\pv{v_1}{v_2} \mid v_1 \in S_1, v_2\in S_2, \FV(v_1) \cap
        \FV(v_2) = \emptyset}}{\PE_A(S_1) \qquad \PE_B(S_2)}.
  \end{equation}
  Consider the following encoding of the so-called Toffoli gate:
  \[
    \left\{
      \begin{array}{ccc}
      \pv{\inl{()}}{x}    & {\iso} & \pv{\inl{()}}{x}\\
      \pv{\inr{()}}{\inl{()}}    & {\iso} & \pv{\inr{()}}{\inr{()}}\\
      \pv{\inr{()}}{\inr{()}}    & {\iso} & \pv{\inr{()}}{\inl{()}}
      \end{array}
    \right\} : (\one\oplus\one)^2 \iso (\one\oplus\one)^2
  \]
  This iso does not satisfy the criterion shown in
  \autoref{eq:buggyrule}. Indeed, the variable $x$ on the
  right-hand side of the pair overlap with the values $\inl{()}$ and
  $\inr{()}$. Meanwhile, this set of clauses satisfy our definition of
  $\PE$.
\end{rem}

We can already show that $\PE$ is sound:

\begin{lem}[Soundness of $\PE_A(S)$]~
  \label{lem:od-ok}
  Given a set $S$ of well-typed values of type $A$ such that
  $\PE_A(S)$ holds, then $S$ is exhaustive and non-overlapping, i.e.
  for all closed values $v$ of type $A$, there exists a unique $v'\in
  S$ and a unique $\sigma$ such that $\sigma[v'] = v$.
\end{lem}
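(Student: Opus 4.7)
The plan is to prove the statement by structural induction on the derivation of $\PE_A(S)$, using the definition given in \autoref{tab:OD}. Existence and uniqueness of $v'$ together with uniqueness of $\sigma$ are both established simultaneously in each case, as the two properties support each other through the inductive hypothesis.

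For the simple cases the argument is almost immediate. In the base case $\PE_A(\{x\})$, any closed value $v$ of type $A$ is matched by taking $v' = x$ and $\sigma = \{x \mapsto v\}$, and this is visibly the only choice. The $\PE_\one(\{()\})$ case is handled analogously, using that $()$ is the only closed value of type $\one$. For the sum case $\PE_{A\oplus B}(S_\oplus \cup T_\oplus)$, a closed value of type $A\oplus B$ is either $\inl{w}$ or $\inr{w}$; the two subsets of patterns are disjoint by construction (they start with distinct constructors), so the search is directed into the corresponding side, and the induction hypothesis on $\PE_A$ or $\PE_B$ produces the unique witness, which we then wrap with $\inl{\cdot}$ or $\inr{\cdot}$. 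The inductive-type case $\PE_{\mu X.A}(\{\fold{v} \mid v \in S\})$ is identical in spirit: every closed value of type $\mu X. A$ is of the form $\fold{w}$ with $w$ of type $A[X\leftarrow \mu X. A]$, and the induction hypothesis on $S$ yields the required pattern and substitution.

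The main obstacle is the tensor case $\PE_{A\otimes B}(S)$, which is split into two symmetric sub-cases, so it is enough to handle the first: $\PE_A(\pi_1(S))$ with $\PE_B(S^1_v)$ for every $v\in\pi_1(S)$. Given a closed value $\pv{w_1}{w_2}$ of type $A\otimes B$, the induction hypothesis applied to $\PE_A(\pi_1(S))$ yields a unique $v_1 \in \pi_1(S)$ and a unique substitution $\sigma_1$ with $\sigma_1[v_1] = w_1$. Then the induction hypothesis applied to $\PE_B(S^1_{v_1})$ yields a unique $v_1' \in S^1_{v_1}$ and a unique $\sigma_2$ with $\sigma_2[v_1'] = w_2$. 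By Barendregt's convention variable names in distinct patterns are kept disjoint, so $\sigma_1$ and $\sigma_2$ have disjoint support, and the rule for pairs in \autoref{tab:pattern_matching} gives $(\sigma_1\cup\sigma_2)[\pv{v_1}{v_1'}] = \pv{w_1}{w_2}$ with $\pv{v_1}{v_1'}\in S$ by definition of $S^1_{v_1}$.

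For uniqueness in the tensor case, suppose $\pv{u_1}{u_2}\in S$ and $\tau$ also satisfy $\tau[\pv{u_1}{u_2}] = \pv{w_1}{w_2}$. Splitting $\tau$ along the disjoint supports of $u_1$ and $u_2$ yields $\tau_1,\tau_2$ with $\tau_1[u_1]=w_1$ and $\tau_2[u_2]=w_2$. Since $u_1 \in \pi_1(S)$, the uniqueness clause of the induction hypothesis on $\PE_A(\pi_1(S))$ forces $u_1 = v_1$ and $\tau_1 = \sigma_1$; then $u_2\in S^1_{v_1}$, so the induction hypothesis on $\PE_B(S^1_{v_1})$ forces $u_2=v_1'$ and $\tau_2=\sigma_2$. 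Hence $\pv{u_1}{u_2}=\pv{v_1}{v_1'}$ and $\tau=\sigma_1\cup\sigma_2$, closing the case. Throughout, I rely on the fact that the typing and pattern-matching rules only produce patterns whose free variables are pairwise distinct, so the $\operatorname{supp}$-disjointness side condition in \autoref{tab:pattern_matching} never fails.
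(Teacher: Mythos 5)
Your proof is correct and follows essentially the same route as the paper's: induction on the derivation of $\PE_A(S)$, with the base, sum, and $\mu$ cases handled directly and the tensor case resolved by first applying the induction hypothesis to $\PE_A(\pi_1(S))$ and then to $\PE_B(S^1_{v_1})$, with uniqueness obtained by forcing any competing pair back onto the same witnesses. Your treatment is in fact slightly more explicit than the paper's on the uniqueness of $\sigma$ and on the disjoint-support side condition of the pair-matching rule, but the argument is the same.
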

\begin{proof}
  By induction on a derivation of $\PE_A(S)$:
  \begin{itemize}
  \item $\PE_A(\{x\})$ is direct, and $\PE_\one(\set{()})$
    follows from \autoref{lemma:inversion}.

  \item Assume that the root of the derivation is
      \[\infer[\PE\text{-}\oplus]{\PE_{A\oplus B}(\{\inl{v}
          \alt v \in S_A \} \cup \{\inr{v} \alt v\in S_B
          \})}{\PE_A(S_A)\qquad \PE_B(S_B)}.\]
      In this case, the closed value $v$ is of type $A\oplus B$.
      By~\autoref{lemma:inversion}
      we know that either $v = \inl{\tilde{v}}$ where $\tilde{v}$ is a
      closed value of type $A$ or $v = \inr{\tilde{v}}$ where
      $\tilde{v}$ is a closed value of type $B$. Both cases are proved similarly:
      let us focus on the
      first case. By induction hypothesis on $\PE_A(S_A)$ we know that
      there exists a unique $v' \in S_A$ and a unique $\sigma'$ such
      that $\sigma'[v'] = \tilde{v}$. Therefore, we know that
      $\sigma'[\inl{v'}] = \inl{\tilde{v}}$. By definition of the
      pattern-matching, we know that there is no term of the shape
      $(\inr{-})$ and no substitution $\delta$ such that
      $\delta[\inl{v'}] = (\inr{-})$, therefore $\inl{v'}$ is the only
      value in $(\{\inl{v} \mid v \in S \} \cup \{\inr{v} \mid v\in T % avoided runt
      \})$ that matches $\inl{\tilde{v}}$ with the unique substitution
      $\sigma'$.
      \item Assume that the root of the derivation is
      \[\infer[\PE\text{-}\mu]{\PE_{\mu X. A}{(\set{\fold{v} \alt v
              \in S})}}{\PE_{A[X \leftarrow \mu X. A]}(S)}.
      \]
      By~\autoref{lemma:inversion} we know that $v = \fold{\tilde{v}}$
      where $\tilde{v}$ is a closed value of type $A[X \leftarrow \mu
      X. A]$. By induction hypothesis on $\PE_{A[X\leftarrow \mu X.
      A]}(S)$, we know that there exists a unique $v' \in S$ and a
      unique $\sigma'$ such that $\sigma'[v'] = \tilde{v}$. Therefore,
      we know that $\sigma'[\fold{v'}] = \fold{\tilde{v}}$. Since by
      induction hypothesis we know that $v'$ and $\sigma'$ are unique
      for $\tilde{v}$, then $\fold{v'}$ and $\sigma'$ are also unique
      for $\fold{\tilde{v}}$.
      \item Assume that the root of the derivation is
          \[
            \infer[\PE\text{-}\otimes_1]{\PE_{A\otimes
                B}(S)}{ \PE_{A}(\pi_1(S)), \forall v\in \pi_1(S),
              \PE_B(S^1_v)}.
          \]
      Recall that $S = \set{\pv{v_1}{v_1'}, \dots,
      \pv{v_n}{v_n'}}$. By~\autoref{lemma:inversion} we know that $v =
      \pv{\tilde v}{\tilde{v}'}$, where $\tilde v$ is a closed value
      of type $A$ and $\tilde{v}'$ is a closed value of type $B$. By
      induction hypothesis on $OD_A(\pi_1(S))$ we know that there
      exists a unique $v_i$ (for $i\in\set{1,\dots, n}$) and a
      unique $\sigma_i$ such that $\sigma_i[v_i] = \tilde v$.
      Similarly, there exists a unique $w$ in $S^1_{v_i}$ and a unique
      $\sigma_w$ such that $\sigma_w[w] = \tilde{v}'$. Due to the
      fact that the free variables of well-typed values are all
      distinct (by \autoref{lemma:lin-term-var}) the support of
      $\sigma_i$ and $\sigma_w$ are disjoint. We can then conclude
      that $(\sigma_i \cup \sigma_w)[\pv{v_i}{w}] = \pv{\tilde
      v}{\tilde{v}'}$.
      \item The case for $\PE\text{-}\otimes_2$ is similar. \qedhere
  \end{itemize}
\end{proof}

\begin{rem}
While we have shown that the definition of the $\PE$ predicate is
sound, we can notice that it is non-complete: there exists some set of
clauses that are exhaustive and non-overlapping, but that do not match
the $\PE$ predicate. For instance, the following set of clauses of
type $(A \oplus B)\otimes (C\oplus (D \oplus E))$:
\begin{center}
    $\left\{\begin{array}{lll}
      \pv{x}{\inl{(y)}} & \iso & \dots \\
      \pv{\inl{(x)}}{\inr{(\inl{(y)})}} & \iso & \dots \\
      \pv{\inl{(x)}}{\inr{(\inr{(y)})}} & \iso & \dots \\
      \pv{\inr{(x)}}{\inr{(y)}} & \iso & \dots \\
    \end{array}\right\}$
  .
\end{center}

While this set of clauses is indeed exhaustive and non-overlapping and
would be accepted in a typical functional programming language such as
OCaml or Haskell, it does not check the $\PE$ predicate: both
$\PE\text{-}\otimes_1$ and $\PE\text{-}\otimes_2$ are not satisfied.
In the first case, the clauses $x, \inl{(y)}$ and $\inr{(y)}$ are
overlapping, and similarly for $\PE\text{-}\otimes_2$. However, for
each closed value of type $(A \oplus B)\otimes (C\oplus (D \oplus
E))$, there is a single clause that matches.
Although this could be considered as a limitation, for the purpose of
the paper it is expressive enough, in the sense of \autoref{sec:rpp}.
\end{rem}

\subsubsection{Typing of Isos}
\label{tab:typisos}\label{sec:typiso}
We can now define the typing of isos. An iso will be required to
satisfy $\PE$ both on the left and the right-hand sides of the
set of clauses. Moreover, since we allow isos to be recursive we need
to make sure that on a given input an iso always terminates. Indeed,
otherwise it would not necessarily be a total function. For that we
introduce a notion of \emph{structural recursion}, stating that a
recursive call can only be used on a strict subterm of the input.

Formally, we say that an iso $\omega$ has type $T$ under some
context $\Psi$, denoted $\Psi\entailiso \omega$, if it can be inferred
inductively by the following rules:
\[\def\mynl{\\[2ex]}
      \begin{array}{c}
        \infer{f : A\iso B \entailiso f : A\iso B}{}
        \mynl
        \infer{
        \Psi \entailiso
        \isobasique : A \iso B.
        }{
        \begin{array}{l@{\quad}l@{\quad}l@{\qquad}l}
          \TypCtxta_1 \entaile v_1 : A
          &
            \ldots
          &
            \TypCtxta_n\entaile v_n : A
          &
            \PE_A(\{v_1, \dots, v_n\})   \\
          \TypCtxta_1;\Psi\entaile e_1 : B
          &
            \ldots
          &
            \TypCtxta_n;\Psi\entaile e_n : B
          & \PE_B(\{\Val{e_1},\ldots,\Val{e_n}\})
        \end{array}
            }
        \mynl
        \inferrule{
          f: A\iso B\entailiso \isobasique : A\iso B \quad \fix f. \omega\text{ is structurally recursive}
        }
        {
          \Psi\entailiso \fix f.\isobasique : A\iso B
        }
      \end{array}
\]
There, $\Val e$ is
defined as $\Val{\letv{p}{\omega~p'}{e}} = \Val e$, and $\Val v = v$
otherwise.
In the second rule, the term variables of the $\Theta_1, \dots,
\Theta_n$ are bound by the pattern-matching construction: they are
not visible outside the iso, thus not appearing anymore in the \linebreak typing     % avoided runt
context.

In the last rule, we furthermore ask $\fix f. \omega$ to be
\emph{structurally recursive} :

\begin{defi}[Structurally Recursive]
  An iso $\fix f. \isobasique : A \iso B$ is said to be structurally
  recursive when $A = A_1 \otimes \dots \otimes A_m$ and
  $B = B_1 \otimes \dots \otimes B_l$, and when $A_j = \mu X. C_1$ and
  $B_k = \mu X. C_2$ for some $1 \leq j \leq m$ and $1 \leq k \leq
  l$. Moreover, we require that for all $i\in\{1,\dots,n\}$ the value
  $v_i$ is of the form $(v^{1}_i, \dots, v^{m}_i)$ and that
  $v^{j}_i$ \linebreak is either:    % avoided runt
  \begin{itemize}
  \item A closed value, in which case $f$ does not occur in $e_i$ and
  $\Val{e_i} = ({v'}_i^1, \dots, {v'}_i^l)$ is such that ${v'}_i^k$ is
  also a closed value.
  \item An open value, in which case for all subterms of the form
    $(\letv{p'}{f~p}{\dots})$ in $e_i$ we have
    $p' = (y_1, \dots, y_l)$ and $p = (x_1, \dots, x_m)$ where
    $x_j : \mu X. B$ is a strict subterm of $v_i^j$. We also ask that
    $\Val(e_i) = ({v'}_i^1, \dots, {v'}_i^m)$, where ${v'}_i^k$
    contains the value $y_k$ as a strict subterm.
  \end{itemize}
  Finally, we ask that there is at least one clause where $v^{j}_i$ is
  a closed value. We call the value $v^{j}_i$ (resp. the variable
  $x_j$) the \emph{decreasing argument} (resp. the \emph{focus}) of
  the structurally recursive criterion.
\end{defi}

\begin{rem}
  \label{rem:one-iso-var}
  Note that, given a well typed closed term $\emptyset ; \emptyset
  \vdash t : A$, along its typing derivation, the iso-context will be
  either empty or a singleton.
\end{rem}

\begin{rem}
  We impose a simple notion of structural recursion: the typing rules
  of isos allow to have at most one iso-variable in the context.
  Indeed, the last typing rule of isos erases the context
  $\Psi$ and replaces it with the new context containing only a single
  iso-variable. Also, we cannot have intertwined recursive calls.
    This mean that for an iso of the form
    $
      \fix f. \isobasique,
    $
    if $e_i$ is
    \[
        \begin{array}{lcl}
          & & \letv{p_{i,1}}{\omega_{i,1}~p'_{i,1}}{} \\
            & & \dots \\
            & & \letv{p_{i,n}}{\omega_{i,n}~p'_{i,n}}{v'_i},
        \end{array}
    \]
    for
    each $j\in \set{1, \dots, n}$, either $\omega_{i,j} = f$ or
    $f$ does not appear at all inside $\omega_{i,j}$.
\end{rem}

\begin{exa}
  \label{ex:iso1}
  We can define the iso of type : $A\oplus (B\oplus C)\iso C\oplus
  (A\oplus B)$, where $a, b$ and $c$ are variables as:
  \[
    \scalebox{1}{$\left\{
      \begin{array}{l@{~}c@{~}l}
        \inl{(a)} & {\iso} & \inr{(\inl{(a)})} \\
        \inr{(\inl{(b)})} & {\iso} & \inr{(\inr{(b)})} \\
        \inr{(\inr{(c)})} & {\iso} & \inl{(c)} \\
      \end{array}
    \right\}$}
  \]
  The first clause is typed under context $\set{a : A}$, the second
  clause under context $\set{b : B}$ and the third clause under
  context $\set{c : C}$. Notice that the $\PE$ predicate holds for
  both the right and left-hand side clauses.
\end{exa}

\begin{exa}\label{ex:map}
  Consider a (closed) iso $\vdash \omega : A\iso B$, and recall the
  list construction of \autoref{rem:list}. Let us define the operation \textit{map}$(\omega) : [A]\iso[B]$ as follows.
  \[
    \textit{map}(\omega):[A]\iso[B]
    = \fix f.\left\{
      \begin{array}{l@{~}c@{~}l}
        [~] & {\iso} & [~] \\
        h :: t & {\iso} & \letv{h'}{\omega~h}{} \\
         & &  \letv{t'}{f~t}{}\\
        & & h' :: t' \\
      \end{array}
    \right\}
  \]
  Note how the iso is indeed structurally recursive.
  Note also how the left and right-hand side of the $\iso$ respect both the
  criteria of exhaustivity---every value of each type is being
  covered by at least one expression---and non-overlapping---no two
  expressions cover the same value.
\end{exa}

\begin{exa}
  There are of course fixed points that do not respect the structural
  recursive constraint, such as e.g.
  $\fix f. \set{x\iso \letv{y}{f~x}{y}}$.
\end{exa}

\subsection{Operational Semantics}
\label{sec:opsem}

From now on, we will only consider well-typed terms.
Our language is equipped with a rewriting system $\to$ on
terms, that follows a deterministic call-by-value strategy: each
argument of a function is fully evaluated before applying the
$\beta$-reduction. This is done through the use of an evaluation
context $C[]$, which consists of a term with a hole (where $C[t]$ is
$C$ where the hole has been filled with $t$). Due to the deterministic
nature of the strategy we directly obtain the unicity of the normal
forms. The evaluation of an iso applied to a value relies on
pattern-matching, as discussed in \autoref{tab:pattern_matching}.
Because we ensure exhaustivity and
non-overlapping (\autoref{lem:od-ok}), the pattern-matching always succeeds
on closed, well-typed values.

\begin{defi}[Substitution]
  \label{def:substitution}
  Applying a substitution $\sigma$ to an expression $t$, written
  $\sigma(t)$, is defined as: $\sigma(()) = (), \sigma(x) = v$ if
  $\{x\mapsto v\}\subseteq \sigma, \sigma(\inr{(t)}) =
  \inr{(\sigma(t))}, \sigma(\inl{(t)}) = \inl{(\sigma(t))},
  \sigma(\fold{(t)}) = \fold{(\sigma(t))}, \sigma(\pv{t}{t'}) =
  \pv{\sigma(t)}{\sigma(t')}, \sigma(\omega~t) = \omega~\sigma(t)$ and finally,
  $\sigma(\letv{p}{t_1}{t_2}) = (\letv{p}{\sigma(t_1)}{\sigma(t_2)})$
  where the variables from $p$ do not occur in the domain of $\sigma$
  (this assumption can always be fulfilled thanks to
  $\alpha$-renaming).
\end{defi}

\begin{rem}
  Notice the difference between $\sigma[v] = t$ and $\sigma(t)$. The
  first one defines the fact that $v$ and $t$ matches under the
  substitution $\sigma$, while the second one defines the actual
  substitution. In fact if $\sigma[v] = t$ then $\sigma(v) = t$.
  Notice that if $\sigma[v] = t$ then the support of $\sigma$ is the
  set of free variables of $v$.
\end{rem}

As we mentioned, the rewriting system is defined through the use of
evaluation contexts and with the help of the pattern-matching and
substitution from~\autoref{def:pattern-matching}
and~\autoref{def:substitution} respectively.

\begin{defi}[Evaluation Contexts]
  The evaluation contexts $C$ are defined as:
  $$C ::= [~] \alt \inl{C} \alt \inr{C} \alt \omega~C \alt
\letv{p}{C}{t} \alt \pv{C}{t} \alt \pv{v}{C} \alt \fold{C}$$
\end{defi}

\begin{defi}[Evaluation relation  $\to$]
  \label{def:rewriting-system}
  We define $\to$ the rewriting system of our language as follows:
    \[
    \begin{array}{c}
    \infer[\mathrm{Cong}]{C[t_1] \to C[t_2]}{t_1 \to t_2}
    \quad
    \infer[\mathrm{IsoRec}]{
      (\fix f.\isoterm)~v \to (\omega[f := (\fix f. \omega)])~v
    }{
    }\\[1.5ex]
    \infer[\mathrm{LetE}]{\letv{p}{v}{t} \to \sigma(t)}{\sigma[p] = v}
    \quad
    \infer[\mathrm{IsoApp}]{ \isobasique~v \to \sigma(e_i)}{
      \sigma[v_i] = v}
    \end{array}
    \]
    As usual, we denote $\to^*$ for the reflexive transitive closure of
    $\to$.
\end{defi}

\subsection{Property of the typed language}

The language features the standard properties of typed languages,
namely progress and subject reduction. As there are two classes of
variables: term and iso variables, we have two substitution lemmas.

\begin{lem}[Substitution Lemma for term variables]
  \label{lemma:sub_var}
  Assume that for all $i$, we have $\TypCtxta_i;\Psi\vdash_e v_i : A_i$.
  Furthermore, assume that $\TypCtxtb, x_1 :
  A_1, \dots, x_n : A_n;\Psi \vdash_e t : B$.
  Then, provided that
  $\sigma = \{x_1 \mapsto v_1, \dots, x_n \mapsto v_n\}$ we have
  $\TypCtxtb, \TypCtxta_1, \dots, \TypCtxta_n;\Psi \vdash_e \sigma(t) : B$.
\end{lem}

\begin{proof}
  By induction on $t$.
  \begin{itemize}
  \item
  Case $x$. We have $n=1$ and $x_1 = x$ and $\TypCtxtb =
  \emptyset$, and we have $\sigma = \{x_1 \mapsto v_1\}$ for some $v_1$
  of type $B$ under some context $\TypCtxta_1$, then we get
  $\TypCtxta_1 ; \Psi \vdash_e \sigma(x) : B$ which leads to
  $\TypCtxta_1 ; \Psi \vdash_e v : B$ which is typable by our
  hypothesis.

  \item Case $()$. The result follows immediately from the fact that
  $\Sigma$ is empty, $n=0$ and $\sigma(()) = ()$.

  \item Case $\inl{t'}$. By induction hypothesis on $t'$ we know that
  $\TypCtxtb,\TypCtxta_1,\dots, \TypCtxta_n;\Psi \vdash_e \sigma(t') :
  A$ is typable and therefore by applying the typing rule for the
  $\inl{-}$, and by definition of the substitution, we get that
  $\TypCtxtb,\TypCtxta_1,\dots, \TypCtxta_n;\Psi \vdash_e
  \sigma(\inl{t'}) : A\oplus B$ is typable.

  \item Case $\inr{t'}, \fold{t'}, \omega~t'$ are similar.

  \item Case $\pv{t_1}{t_2}$, by typing we get that we can split
  $\TypCtxtb$ into $\TypCtxtb_1, \TypCtxtb_2$ and the variables $x_1,
  \dots, x_n$ are split into two parts for typing both $t_1$ or $t_2$
  depending on whenever or not they occur freely in $t_1$ or
  $t_2$, w.l.o.g. say that $x_1, \dots, x_l$ are free in $t_1$ and
  $x_{l+1}, \dots, x_n$ are free in $t_2$ then we get:

  $\infer{\TypCtxtb_1,\TypCtxtb_2, x_1: A_1, \dots, x_l: A_l,
  x_{l+1}:A_{l+1}, \dots, x_n:A_n;\Psi\vdash_e \pv{t_1}{t_2} : B_1\otimes
  B_2}{\TypCtxtb_1, x_1:A_1,\dots, x_l:A_l;\Psi \vdash_e t_1 : B_1 \qquad
  \TypCtxtb_2, x_{l+1} : A_{l+1}, \dots, x_n : A_n;\Psi \vdash_e t_2 : B_2}$

  By substitution, we get that $\sigma(\pv{t_1}{t_2}) =
  \pv{\sigma(t_1)}{\sigma(t_2)}$, so we get the following typing
  derivation which is completed by induction hypothesis on the
  subterms:

  $\infer{\TypCtxtb_1,\TypCtxtb_2, \TypCtxta_1, \dots, \TypCtxta_l,
  \TypCtxta_{l+1}, \dots, \TypCtxta_n;\Psi\vdash_e
  \pv{\sigma_1(t_1)}{\sigma_2(t_2)} : B_1\otimes
  B_2}{\TypCtxtb_1, \TypCtxta_1, \dots, \TypCtxta_l;\Psi \vdash_e
  \sigma_1(t_1) : B_1 \qquad \TypCtxtb_2, \TypCtxta_{l+1}, \dots,
  \TypCtxta_n;\Psi \vdash_e \sigma_2(t_2) : B_2}$

  \item Case $\letv{p}{t_1}{t_2}$ Similar to the case of the
  tensor.\qedhere
  \end{itemize}
\end{proof}

\begin{lem}[Substitution Lemma for Isos]
\label{lemma:substitution_isos}
The following substitution properties hold:

\begin{itemize}
  \item If $\TypCtxta; f : T_1 \vdash_e t : A$ and $g : T_2
  \entailiso \omega : T_1$ then $\TypCtxta; g : T_2 \vdash_e t[f
  \leftarrow \omega] : A$.

  \item If $f : T_1 \entailiso \omega_1 : T_2$
      and $h : T_3 \entailiso \omega_2 : T_1$
  then $h : T_3 \entailiso \omega_1[f \leftarrow \omega_2] : T_2$.

\end{itemize}

\end{lem}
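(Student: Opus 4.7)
The plan is to prove both parts simultaneously by induction on the two typing derivations, using each part to discharge the other at the interface between terms and isos. Throughout we rely on Barendregt's convention so that iso-substitution is capture-avoiding.

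For part~1, proceed by induction on the derivation of $\Delta; f : \alpha \entaile t : A$. The cases for $()$, $x$, $\inl{t'}$, $\inr{t'}$, $\fold{t'}$, $\pv{t_1}{t_2}$ and $\letv{p}{t_1}{t_2}$ are routine bookkeeping: substitution commutes with the head constructor and the induction hypothesis applied to the immediate subterms reconstructs the derivation, splitting contexts exactly as in Lemma~\ref{lemma:sub_var}. The only interesting case is an iso application $\omega'~t'$, where I case-split on the shape of $\omega'$. If $\omega' = f$, then the substituted term is $\omega\,(t'[f \leftarrow \omega])$, typable by combining the hypothesis $g : \beta \entailiso \omega : \alpha$ with the induction hypothesis on $t'$. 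If $\omega' = \fix f''.\,\omega''$, the $\fix$ typing rule forces $\omega''$ to be typed in the singleton context $f'' : \alpha'$ alone, so $f \notin \FV(\omega')$ and substitution is the identity on $\omega'$; since the conclusion of $\fix$ is valid under an arbitrary iso-context $\Psi$, it re-types under $g : \beta$. If $\omega'$ is a set of clauses, invoke part~2 of the lemma on $\omega'$ to obtain $g : \beta \entailiso \omega'[f \leftarrow \omega] : A \iso B$ and combine with the induction hypothesis on $t'$.

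For part~2, case-split on the derivation of $f : \alpha \entailiso \omega_1 : \beta$. If $\omega_1 = f$, the iso-variable rule forces $\beta = \alpha$ and $\omega_1[f \leftarrow \omega_2] = \omega_2$, so the conclusion is exactly the hypothesis on $\omega_2$. If $\omega_1 = \fix f'.\,\omega'$ with $f' \neq f$, the $\fix$ rule again guarantees that $\omega'$ is typed in context $f' : \beta$ alone, so substitution is a no-op on $\omega_1$; the structural-recursion side-condition is preserved because it depends only on the patterns of the clauses of $\omega'$, and the rule re-concludes under $h : \gamma$. If $\omega_1 = \{v_i \iso e_i\}$, each typing premise $\Delta_i; f : \alpha \entaile e_i : B$ is transported to $\Delta_i; h : \gamma \entaile e_i[f \leftarrow \omega_2] : B$ by part~1 of the lemma. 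The patterns $v_i$ are untouched by iso-substitution so $\PE_A(\{v_i\})$ still holds; a one-line induction on expressions shows $Val(e_i[f \leftarrow \omega_2]) = Val(e_i)$ (values contain no iso-variable, and the $\lett$ case reduces immediately by IH), so $\PE_B(\{Val(e_i)\})$ is also preserved, and the iso-typing rule applies.

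The main point requiring care is orchestrating the mutual induction between the two parts: part~1 appeals to part~2 in the clause-shaped iso-application subcase, and part~2 appeals to part~1 in the clause subcase. Well-foundedness follows because each cross-call recurses on a strictly smaller subderivation. The only substantive semantic checks beyond pure reassembly of typing rules are the stability of patterns and of $Val$ under iso-substitution, which are what keep the exhaustivity and non-overlapping side-conditions alive; the $\fix$ cases, often the delicate point in a substitution lemma, are in fact trivial here because the typing of $\fix$ seals its body off from any ambient iso-context.
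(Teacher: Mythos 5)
Your proposal is correct and follows essentially the same route as the paper's proof: a mutual induction on the term and the iso, with the congruence cases handled as in the variable-substitution lemma, the iso-application case splitting between the two induction hypotheses, and the $\fix$ and clause-set cases resolved exactly as you describe. You are in fact somewhat more thorough than the paper, which silently elides the preservation of the $\PE$ side-conditions, the stability of $Val$ under iso-substitution, and the structural-recursion condition that you check explicitly.
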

\begin{proof}

  We prove those two propositions by mutual induction on $t$ and
  $\omega_1$.

  \textbf{Terms}, by induction on $t$.

  \begin{itemize}
      \item If $t = x$ or $t = ()$ then in the first case
      $\TypCtxta = x : A$ and in the second case $\TypCtxta =
      \emptyset$ and $A = \one$ by~\autoref{lemma:inversion}, and in
      both cases we have that $t[f\leftarrow \omega] = t$. Therefore,
      we have to type $x : A ; g : T_2 \vdash x : A$ in one case and
      $\emptyset ; g : T_2 \vdash_e () : \one$ in the other. Both are
      possible by definition of the typing system.

      \item If $t = \inl{t'}$ or $\inr{t'}$ or $\fold{t'}$ or
      $\pv{t_1}{t_2}$ or $\letv{p}{t_1}{t_2}$, then similarly to the
      proof of~\autoref{lemma:sub_var} the substitution goes to
      the subterms and we can apply the induction hypothesis.

      \item If $t = \omega'~t'$. In that case, the substitution goes
      to both subterms: $t[f \leftarrow \omega] = (\omega'[f
      \leftarrow \omega])~(t'[f \leftarrow \omega])$. We
      can then conclude by induction hypothesis on $t'$ and by the
      induction hypothesis on isos.
  \end{itemize}

  \textbf{Isos}, by induction on $\omega_1$.

  \begin{itemize}
      \item If $\omega_1 = f$, then we get $h : T_3 \entailiso f[f
      \leftarrow \omega_2] : T_2$ which is typable by hypothesis.

      \item If $\omega_1 = g \not= f$ is impossible by our typing
      hypothesis.

      \item If $\omega_1 = \fix g. \omega$, then by typing $f$ does
      not occur in $\omega_1$ so nothing happens.

      \item If $\omega_1 = \isobasique$, then, by definition of the
      substitution we have that \begin{align*}&\isobasique[f \leftarrow
      \omega_2] \\ &= \{v_1[f \leftarrow \omega_2] \iso e_1[f \leftarrow \omega_2] \alt \dots \alt
      v_n[f \leftarrow \omega_2] \iso e_n[f \leftarrow \omega_2]\} \\
      &=  \{v_1 \iso e_1[f \leftarrow \omega_2] \alt \dots \alt v_n
      \iso e_n[f \leftarrow \omega_2] \}\end{align*} in which case we
      apply the induction hypothesis of isos on terms.
      \qedhere
  \end{itemize}
\end{proof}

We can then deduce subject reduction and progress, as follows.

\begin{lem}[Subject Reduction]\label{prop:subject_reduction} If
$\TypCtxta;\Psi\vdash_e t : A$ and $t\rightarrow t'$ then
$\TypCtxta;\Psi\vdash_e t' : A$.
\end{lem}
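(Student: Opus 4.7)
The proof proceeds by induction on the derivation of $t\to t'$, considering the four reduction rules.

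\textbf{Case Cong.} We would proceed by structural induction on the evaluation context $C$. For each context constructor (e.g.\ $\inl{C'}$, $\pv{C'}{v}$, $\omega~C'$, $\letv{p}{C'}{t}$, etc.), we invert the typing derivation of $C[t_1]$ to extract a typing judgement for $t_1$ in an appropriate sub-context, apply the induction hypothesis to get the same judgement for $t_2$, and re-apply the corresponding typing rule to obtain a derivation of $C[t_2]$ with the same type and contexts. This is routine, relying only on the syntax-directedness of the typing rules.

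\textbf{Case LetE.} The reduct is $\sigma(t)$ where $\sigma[p]=v$ and the term is $\letv{p}{v}{t}$. Inverting the let typing rule, with $p=(x_1,\dots,x_n)$, gives $\Delta_1;\Psi\vdash_e v : A_1\otimes\dots\otimes A_n$ and $\Delta_2, x_1:A_1,\dots,x_n:A_n;\Psi\vdash_e t : B$. By a straightforward auxiliary inversion on the typing of the value $v$ (a pair of pairs) together with the pattern-matching rules, we decompose $\Delta_1$ as $\Delta_1^1,\dots,\Delta_1^n$ and obtain $\Delta_1^i;\Psi\vdash_e \sigma(x_i) : A_i$ for each $i$. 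Lemma \emph{Substitution Lemma of variables} then gives $\Delta_1,\Delta_2;\Psi\vdash_e \sigma(t) : B$.

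\textbf{Case IsoRec.} The reduct is $\omega[f:=\fix f.\omega]$. Inverting the $\fix$ typing rule yields $f:\alpha\vdash_\omega \omega : \alpha$, together with the typing $\Psi\vdash_\omega \fix f.\omega : \alpha$ itself. The Lemma \emph{Substitution Lemma of Isos} (second bullet) applied with $\omega_1=\omega$ and $\omega_2=\fix f.\omega$ gives $\Psi\vdash_\omega \omega[f:=\fix f.\omega] : \alpha$ directly, which transports to the term-level typing of $(\fix f.\omega)~t$ whenever this appears inside a term.

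\textbf{Case IsoApp.} Here $t = \isobasique~v'$, and we pick the unique clause $v_i\iso e_i$ matched by $v'$ (unique by Lemma \ref{lem:od-ok}). Inverting the iso and application typing rules, we have $\Delta_i\vdash_e v_i : A$ and $\Delta_i;\Psi\vdash_e e_i : B$ together with $\vdash_e v' : A$ (up to the context associated to $v'$). The \emph{main obstacle} is to establish the expected ``typed pattern matching'' property: if $\sigma[v_i]=v'$, then for every $(x\mapsto w)\in\sigma$ with $x:A_x\in\Delta_i$, the typing judgement for the corresponding piece of $v'$ yields $\Delta_x\vdash_e w : A_x$ for some splitting $\Delta_x$ of the context of $v'$. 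This is proved by a short induction on the derivation of $\sigma[v_i]=v'$ (using the pattern-matching rules of Definition \ref{tab:pattern_matching}) in parallel with inversion on the typing derivation of $v_i$. Once in hand, one applies the Lemma \emph{Substitution Lemma of variables} to $e_i$ with the substitution $\sigma$, yielding $\Delta;\Psi\vdash_e \sigma(e_i) : B$, which is the typing of the reduct.
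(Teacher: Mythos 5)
Your proof is correct and follows the same route as the paper: induction on the derivation of $t\to t'$, discharging the substitution cases via the Substitution Lemma of variables and the Substitution Lemma of Isos (the paper's own proof is exactly this, stated in one line). Your elaboration of the \textsc{IsoApp} case, in particular the auxiliary ``typed pattern matching'' property relating $\sigma[v_i]=v'$ to a splitting of the context of $v'$, is the step the paper leaves implicit, and you identify and justify it correctly.
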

\begin{proof}
   By induction on $t\to t'$ and direct by~\autoref{lemma:sub_var}
   and~\autoref{lemma:substitution_isos}
\end{proof}

\begin{lem}[Progress]\label{prop:progress} If $~\vdash_e t : A$
then, either $t$ is a value, or $t\to t'$.
\end{lem}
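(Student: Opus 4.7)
The plan is to proceed by induction on the typing derivation of $t$ (equivalently, on the structure of $t$), with case analysis on the outermost term constructor. Since the typing judgment has empty term and iso contexts, several variable-related cases are immediately ruled out.

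First I would handle the base and ``value-shaped'' constructors. The case $t = ()$ gives a value directly, and $t = x$ is impossible since $\Delta$ is empty. For $t \in \{\inl{t'}, \inr{t'}, \fold{t'}\}$, the induction hypothesis on $t'$ either gives that $t'$ is a value (making $t$ a value) or a reduction $t' \to t''$, which lifts to $t$ via the Cong rule with context $C = \inl{[\,]}$ (resp.\ $\inr{[\,]}$, $\fold{[\,]}$). The pair case $t = \pv{t_1}{t_2}$ is analogous: apply the IH to $t_1$ first, and if $t_1$ is already a value, apply the IH to $t_2$; in each non-value subcase the reduction lifts through the appropriate evaluation context.

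Next I would deal with the two elimination forms. For $t = \letv{p}{t_1}{t_2}$, the IH on $t_1$ either yields a reduction of $t_1$ (lifted via $C = \letv{p}{[\,]}{t_2}$) or shows that $t_1$ is a value; in the latter case I would invoke the LetE rule, relying on the fact that patterns $p$ are built only from variables and tensors, so pattern-matching against a value of the corresponding tensor type always produces a substitution $\sigma$. For $t = \isoterm~t'$, the IH on $t'$ either gives a reduction (lifted via $C = \isoterm~[\,]$) or a value $v'$; then I would analyse $\isoterm$. Because $\Psi = \emptyset$, the iso-variable case is excluded by the typing rule $f : \alpha \vdash_\omega f : \alpha$. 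If $\isoterm = \fix f.\isoterm'$, the IsoRec rule applies. If $\isoterm = \isobasique$, I apply the IsoApp rule, and this is the step where the work actually lies.

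The main obstacle, and really the only non-routine point, is ensuring that IsoApp can fire in the clause case: I need a clause $v_i \iso e_i$ together with a substitution $\sigma$ with $\sigma[v_i] = v'$. This is exactly what Lemma~\ref{lem:od-ok} provides, since the typing rule for isos guarantees $\PE_A(\{v_1,\dots,v_n\})$ and $v'$ is a closed value of type $A$ by subject reduction applied along the derivation of $t'\to^* v'$ (or directly by the IH giving a well-typed value). Invoking exhaustivity and non-overlapping thus yields the desired matching clause uniquely, closing the case and completing the induction.
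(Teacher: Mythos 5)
Your proof is correct and follows the same route as the paper's: induction on the typing derivation, with the only non-trivial cases being an iso applied to a value and a \texttt{let} binding a value, both discharged by pattern-matching together with the exhaustivity guarantee of Lemma~\ref{lem:od-ok}. The paper's own proof is just a two-line sketch of exactly this argument, so your version is simply a more detailed writing-out of it.
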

\begin{proof}
    Direct by induction on $\vdash_e t : A$. The two possible
    reduction cases, $\omega~v$ and $\letv{p}{v}{t}$ always reduce by
    typing, pattern-matching and by~\autoref{lem:od-ok}.
\end{proof}

\subsection{Inversion}

One thing to note is that, in many models of reversible
computing, such as Reversible Turing
Machines~\cite{bennett1973logical, rtm1, rtm2}, assembly
code~\cite{vieri1995pendulum}, imperative or functional programming
languages~\cite{GLUCK2023113429, yokoyama2011reversible}, each step of
the evaluation is reversible. It is a local property. This is not the
case in our language. When a term $t$ reduces to another term $t'$,
one cannot directly inverse the rewriting system to go from $t'$ to
$t$. Instead, reversibility should be understood in the broader
context of the design of the language, in which any isos can be
inverted. Therefore, given an iso $\omega : A\iso B$ one can build its
inverse $\omega^\bot : B\iso A$. The inverse operation is defined
inductively on $\omega$ and is given in~\autoref{def:iso-inv}.
\begin{defi}[Inversion]
  \label{def:iso-inv}
  Given an iso $\omega$, we define its dual $\omega^\bot$ as: $f^\bot
  = f, (\fix f. \omega)^\bot = \fix f. \omega^\bot, \{(v_i\iso
  e_i)_{i\in I}\}^\bot = \{((v_i\iso e_i)^\bot)_{i\in I}\}$ and the
  inverse of a clause as:
\[\begin{array}{c} \left(
      \begin{array}{l@{~}c@{~}l} v_1&{\iso}&{\tt
        let}\,p_1=\isoterm_1\,p'_1\,{\tt in} \\
           && \cdots \\
           && {\tt let}\,p_n=\isoterm_n\,p'_n\,{\tt in}~v'_1
      \end{array}
      \right)^\bot := \left(
      \begin{array}{lcl@{}l@{}l} v'_1&{\iso}&{\tt
        let}\,p'_n&=\isoterm_n^{\bot}&\,p_n\,{\tt in} \\
           && \cdots \\
           && {\tt let}\,p'_1&=\isoterm_1^{\bot}&\,p_1\,{\tt in}~v_1
      \end{array}
      \right).
    \end{array}
  \]

\end{defi}

We can show that the inverse is well-typed and behaves as expected:
\begin{lem}[Inversion is well-typed]
  \label{lem:inv-type}
  If $f : C \iso D \entailiso \omega : A\iso B$, then
  $f : D \iso C \entailiso\omega^\bot : B\iso A$.
\end{lem}

\begin{proof}
  The proof is done by structural induction on the typing derivation
  of $f : C \iso D \entailiso \omega : A\iso B$.
  \begin{itemize}
  \item
  The case where $\omega =
  f$ is direct.
\item
  For the case where the root of the typing derivation  is $\omega = \{v_1~\iso~e_1 \alt \dots
  \alt v_m~\iso~e_m\}$,
  recall the typing rule of isos:
  \[
    \infer{
      f : C\iso D \entailiso
      {\{v_1~\iso~e_1 \alt \dots \alt v_m~\iso~e_m\}} : A \iso B.
    }{
      \begin{array}{l@{\quad}l@{\quad}l@{\quad}l}
        \TypCtxta_1 \entaile v_1 : A
        &
          \ldots
        &
          \TypCtxta_m\entaile v_m : A
        &
          \PE_A(\{v_1, \dots, v_m\})   \\
        \TypCtxta_1; f : C\iso D\entaile e_1 : B
        &
          \ldots
        &
          \TypCtxta_m; f : C\iso D\entaile e_m : B
        & \PE_B(\{\Val{e_1},\ldots,\Val{e_m}\})
      \end{array}
    }
  \]
  First, notice that the predicate $\PE$ still holds for
  $\omega^\bot$ as values do not change. We then need to check that if
  one clause is typable, then its dual is also typable. Without lost
  of generality we consider the case of the first clause, the other being
  similar:
\[\begin{array}{c} \left(
  \begin{array}{l@{~}c@{~}l} v_1&{\iso}&\letv{p_1}{\omega_1~p_1'}{} \\
          && \cdots \\
          && \letv{p_n}{\isoterm_n~p'_n}{v'_1}
  \end{array}
  \right)^\bot := \left(
  \begin{array}{lcl@{}l@{}l} v'_1&{\iso}&\letv{p'_n}{\isoterm_n^{\bot}~p_n}{} \\
          && \cdots \\
          && \letv{p'_1}{\isoterm_1^{\bot}~p_1}{v_1}
  \end{array}
  \right).
  \end{array}
\]
  Without loss of generality, using \autoref{rem:Barendregt} we can
  assume that all of the term variables introduced in the $p_i$ are
  fresh.
  By typing we know that $\TypCtxta_1;f:C\iso D \vdash_e v_1 : A$ and
  \[
    \TypCtxta_1;
    f : C\iso D \vdash_e \letv{p_1}{\omega_1~p_1'}{\dots v_1'} : B.
  \]
  Writing $\Psi$ for $f : C\iso D$, a typing derivation for the latter, written $\pi_1$,
  starts with
  \[
    \infer{
      \TypCtxta'_1,\TypCtxta''_1;\Psi
      \vdash_e \letv{p_1}{\omega_1~p_1'}{\dots v_1'} : B
    }{
      \infer{
        \TypCtxta'_1;\Psi
        \vdash_e \omega_1~p_1':B_1
      }{
        \infer*{\Psi \vdash_i \omega_1 : A_1\iso B_1}{\pi_{\omega_1}}
        &
        \infer*{\TypCtxta'_1;\Psi\vdash_e p_1' : A_1}{\pi_{p_1'}}
      }
      &
      \infer*{
        \TypCtxta''_1,\Sigma_1;\Psi
        \vdash_e \letv{p_2}{\omega_2~p_2'}{\dots v_1'} : B
      }{
        \pi_2
      }
    }
  \]
  where $\TypCtxta_1$ is decomposed as
  $\TypCtxta'_1,\TypCtxta''_1$. By linearity, we have
  $|(p'_1,A_1)| = \TypCtxta'_1$, and $|(p_1,B_1)| = \Sigma_1$.
  At each level $1<i<n$, the typing derivation $\pi_i$ is similar:
  \[
    \infer{
      \TypCtxta''_{i-1},\Sigma_{i-1};\Psi
      \vdash_e \letv{p_i}{\omega_i~p_i'}{\dots v_1'} : B
    }{
      \infer{
        \TypCtxta'_i;\Psi
        \vdash_e \omega_i~p_i':B_i
      }{
        \infer*{\Psi \vdash_i \omega_i : A_i\iso B_i}{\pi_{\omega_i}}
        &
        \infer*{\TypCtxta'_i;\Psi\vdash_e p_i' : A_i}{\pi_{p_i'}}
      }
      &
      \infer*{
        \TypCtxta''_i,\Sigma_i;\Psi
        \vdash_e \letv{p_{i+1}}{\omega_{i+1}~p_{i+1}'}{\dots v_1'} : B
      }{
        \pi_{i+1}
      }
    }
  \]
  where $\TypCtxta''_{i-1},\Sigma_{i-1}$ is decomposed as
  $\TypCtxta'_i,\TypCtxta''_i$. By linearity, we have
  $|(p'_i,A_i)| = \TypCtxta'_i$, and $|(p_i,B_i)| = \Sigma_i$.
  At the level $n$, there is only one let-term left, and the proof $\pi_n$ is
  \begin{equation}\label{eq:v1-typ}
    \infer{
      \TypCtxta''_{n-1},\Sigma_{n-1};\Psi
      \vdash_e \letv{p_n}{\omega_n~p_n'}{v_1'} : B
    }{
      \infer{
        \TypCtxta'_n;\Psi
        \vdash_e \omega_n~p_n':B_n
      }{
        \infer*{\Psi \vdash_i \omega_n : A_n\iso B_n}{\pi_{\omega_n}}
        &
        \infer*{\TypCtxta'_n;\Psi\vdash_e p_n' : A_n}{\pi_{p_n'}}
      }
      &
      \infer*{
        \TypCtxta''_n,\Sigma_n;\Psi
        \vdash_e v_1' : B
      }{
        \pi_{v_1'}
      }
    }
  \end{equation}
  where $\TypCtxta''_{n-1},\Sigma_{n-1}$ is decomposed as
  $\TypCtxta'_n,\TypCtxta''_n$, and where, by linearity, we have
  $|(p'_n,A_n)| = \TypCtxta'_n$, and $|(p_n,B_n)| = \Sigma_n$.

  Let us now build a typing derivation for the judgment
  \[
    \TypCtxta''_n,\Sigma_n; \Psi
    \vdash_e \letv{p'_n}{\omega_n^\bot~p_n}{\dots \letv{p'_1}{\omega_1^\bot~p_1}{ v_1}} : A
  \]
  starting from the top. First, we can build the typing derivation $\pi_1^\bot$
  \[
    \infer{
      \TypCtxta''_1,\Sigma_1;\Psi^\bot\vdash_e \letv{p'_1}{\omega_1^\bot~p_1}{ v_1} :A
    }{
      \infer{
        \Sigma_1;\Psi^\bot\vdash_e \omega_1^\bot~p_1  : A_1
      }{
        \infer*{\Psi^\bot\vdash_e \omega_1^\bot: B_1\iso A_1}{\pi_{\omega_1}^\bot}
        &
        \infer*{\Sigma_1;\Psi^\bot\vdash_e p_1 : B_1}{\pi_{p_1}}
      }
      &
      \infer*{\TypCtxta'_1,\TypCtxta''_1;\Psi^\bot\vdash_e v_1 : B}{\pi_{v_1}}
    }
  \]
  Remember how $\TypCtxta'_1$ corresponds to $p'_1$ and $\Sigma_1$
  corresponds to $p_1$ and $\Psi^\bot$ is $f : D\iso C$, but it is
  not used in $\pi_{p_1}$ nor $\pi_{v_1}$. The derivation
  $\pi_{\omega_1}^\bot$ is built by invoking the induction hypothesis,
  and $\pi_{p_1}$ by recalling that $\Sigma_1$ is $|(p_1,B_1)|$. The
  derivation $\pi_{v_1}$ comes from the hypothesis that
  $\TypCtxta_1;\Psi\vdash_e v_1 : A$, that
  $\TypCtxta_1=\TypCtxta'_1,\TypCtxta''_1$ and that $f$ is not used in
  $v_1$, so that we can safely change its type to $D\iso C$ in the
  typing context.

  We now iteratively derive the proof $\pi_i$ of the intermediate
  typing judgments
  \[
    \TypCtxta''_i,\Sigma_i;\Psi^\bot\vdash_e
    \letv{p'_i}{\omega_i^\bot~p_i}{\dots
      \letv{p'_1}{\omega_1^\bot~p_1}{ v_1}
    } :A
  \]
  for incremental $1<i\leq n$, assuming that we already have $\pi_{i-1}$.
  The typing derivation $\pi_i$ is as follows:
  \[\scalebox{0.85}{
    \infer{
      \TypCtxta''_i,\Sigma_i;\Psi^\bot\vdash_e
      \letv{p'_i}{\omega_i^\bot~p_i}{\dots v_1}
      :A.
    }{
      \infer{
        \Sigma_i;\Psi^\bot\vdash_e \omega_i^\bot~p_i  : A_i
      }{
        \infer*{\Psi^\bot\vdash_e \omega_i^\bot: B_i\iso A_i}{\pi_{\omega_i}^\bot}
        &
        \infer*{\Sigma_i;\Psi^\bot\vdash_e p_i : B_i}{\pi_{p_i}}
      }
      &
      \infer*{\TypCtxta''_{i-1},\Sigma_{i-1};\Psi^\bot\vdash_e
        \letv{p'_{i-1}}{\omega_{i-1}^\bot~p_{i-1}}{\dots v_1} : A}{\pi_{i-1}^\bot}
    }
  }\]
  Remember how $\TypCtxta''_{i-1},\Sigma_{i-1}$ and
  $\TypCtxta''_i,\TypCtxta'_i$ are two decompositions of the same
  typing context.  The typing derivation $\pi_{p_i}$ comes from the
  fact that $\Sigma_i$ is $|(p_i,B_i)|$, and $\pi_{\omega_i}^\bot$
  comes from the induction hypothesis.

  When $i$ reaches $n$, we get a derivation for the typing judgment
  \[
    \TypCtxta''_n,\Sigma_n;\Psi^\bot\vdash_e
      \letv{p'_n}{\omega_n^\bot~p_n}{\dots \letv{p'_1}{\omega_1^\bot~p_1}{v_1}}
      :A.
  \]
  On the other hand, in \autoref{eq:v1-typ} we have a typing derivation $\pi_{v'_1}$ for
  \[
    \TypCtxta''_n,\Sigma_n;\Psi^\bot\vdash_e v'_1 : B.
  \]
  As $f$ is not used in $v'_1$ (since it is a value), this also gives us a typing derivation
  \[
    \TypCtxta''_n,\Sigma_n\vdash_e v'_1 : B.
  \]
  The reasoning we did on the clause $\{v_1\iso
  \letv{p_1}{\omega_1~p_1'}{\dots \letv{p_n}{\omega_n~p_n'}{v_1'}}\}$
  is general and can be reproduced for all clauses $\{v_i\iso e_i\}$.
  We can then derive a proof of the desired judgment
  \[
    f : D \iso C \entailiso\{v_1\iso e_1|\dots|v_m\iso e_m\}^\bot : B\iso A.
  \]

\item For the case of a recursive iso $\fix f.\omega$ of type $A\iso
    B$, one can directly invoke the inductive hypothesis on $\omega$
    and rely on the symmetry of the criterion for structural
    recursion.  \qedhere
\end{itemize}
\end{proof}

\begin{exa}
  \label{ex:map2}
  Recall the iso \textit{map}$(\omega)$ defined in
  \autoref{ex:map}. Its inverse $\textit{map}(\omega)^\bot$ is
  \[
    \textit{map}(\omega)^\bot
    = \fix f.\left\{
      \begin{array}{l@{~}c@{~}l}
        [~] & {\iso} & [~] \\
        h' :: t' & {\iso} & \letv{t}{f~t'}{} \\
            &  &  \letv{h}{\omega^\bot h'}{}\\
            & & h :: t \\
      \end{array}
    \right\}
  \]
  It can indeed be typed with $[B]\iso[A]$.
\end{exa}

We are left to show that our isos indeed represent isomorphisms,
meaning that given some iso $\entailiso \omega : A\iso B$ and some
value $\vdash_e v : A$, then $\omega^\bot(\omega~v) \to^* v$. For that
we need the following lemma:

\begin{lem}[Commutativity of substitution]
\label{isos:lem:commutativity-substitution}
Let $\sigma_1, \sigma_2$ and $v$, such that $\sigma_1\cup\sigma_2$
closes $v$ and $\mathtt{supp}(\sigma_1) \cap \mathtt{supp}(\sigma_2) =
\emptyset$ then $\sigma_1(\sigma_2(v)) = \sigma_2(\sigma_1(v))$
\end{lem}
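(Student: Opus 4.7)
The plan is to proceed by induction on the structure of the value $v$. The base case $v = ()$ is immediate since both sides reduce to $()$.

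For the base case $v = x$, I would exploit the assumption that $\sigma_1 \cup \sigma_2$ closes $v$: this forces $x$ to lie in one of the two supports, and by disjointness in exactly one, say $\mathtt{supp}(\sigma_1)$ with $\sigma_1(x) = v'$. Since substitution is applied in a single pass (as defined just above the lemma), for $(\sigma_1 \cup \sigma_2)(x) = v'$ to be closed, the value $v'$ must itself be closed. Consequently $\sigma_2(x) = x$ yields $\sigma_1(\sigma_2(x)) = \sigma_1(x) = v'$, while $\sigma_2(\sigma_1(x)) = \sigma_2(v') = v'$ because $v'$ has no free variables. The case $x \in \mathtt{supp}(\sigma_2)$ is entirely symmetric.

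The inductive cases $v = \inl{v'}$, $\inr{v'}$, $\fold{v'}$, and $\pv{v_1}{v_2}$ follow immediately because substitution distributes through each constructor by definition, reducing the goal to commutativity on strict subvalues. The hypotheses transfer without change: the supports of $\sigma_1$ and $\sigma_2$ are fixed and still disjoint, and closing $v$ entails closing each subvalue since the free variables of a subvalue are among the free variables of $v$. The induction hypothesis then delivers the desired equality, and the cases recombine straightforwardly.

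The only subtle step, such as it is, lies in the variable case: one must carefully unpack the closing condition together with the non-iterated nature of $\sigma(\cdot)$ in order to conclude that the image $v'$ of the single free variable of $v$ is itself closed, and hence inert under the other substitution. Once this observation is made, the remainder is a routine structural induction with no side conditions to track.
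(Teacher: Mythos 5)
Your proof is correct and follows essentially the same route as the paper's: a structural induction on $v$ whose only interesting case is $v = x$, resolved by the disjointness of the two supports. You are in fact slightly more careful than the paper, which writes $\sigma_1(\sigma_2(x)) = v' = \sigma_2(\sigma_1(x))$ without justifying why the image $v'$ is inert under the other substitution; your observation that the closing hypothesis forces $v'$ to be closed supplies exactly the missing justification and is otherwise the same argument.
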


\begin{proof}
Direct by induction on $v$ as $\sigma_1$ and $\sigma_2$ have disjoint
support: In the case where $v = x$ then either $\set{x\mapsto v'} \in
\sigma_1$ or $\set{x\mapsto v'}\in\sigma_2$ and hence
$\sigma_1(\sigma_2(x)) = v' = \sigma_2(\sigma_1(x))$. All the other
cases are by direct induction hypothesis as the substitutions enter the
subterms.
\end{proof}

\begin{lem}
  \label{lem:isos-iso}
  For all well-typed isos $\entailiso \omega : A\iso B$, and for
  all well-typed values $\vdash_e v : A$, if~
  $\omega~v\rightarrow^* v'$ a value, then $\omega^\bot\,v'\to^* v$.
\end{lem}

\begin{proof}
      Write $X$ for the number of $\mathrm{IsoRec}$ rules being used
      in the rewrite sequence
      \begin{equation}\label{eq:rw-step-1}
        \omega~v\rightarrow^* v'.
      \end{equation}
      We prove the result by induction on the lexicographical order on
      ($X$, size of $\omega$). Let us proceed by case distinction on
      $\omega$.
      \begin{itemize}
      \item $\omega$ cannot be an iso-variable since the typing context
        of $\vdash_i \omega : A\iso B$ is empty.
      \item Suppose that $\omega$ is of the form $\fix f.\omega'$.
        Then $\omega^\bot$ is $\fix f.\omega'^\bot$. The rewriting in
        \autoref{eq:rw-step-1}
        starts with an $\mathrm{IsoRec}$ rules as follows:
        \[
          ((\fix f.\omega')~v)
          \to
          ((\omega'[f:=\fix f.\omega'])~v)
          \to^*
          v_0 \] As the reduction from $(\omega'[f:=\fix
        f.\omega'])~v$ to $v_0$ takes one less number of
        $\mathrm{IsoRec}$ rules we can apply the induction hypothesis
        and we get that
        \begin{equation}
          \label{eq:rw-step-ih}
            (\omega'[f := \fix f. \omega'])^\bot v_0 \to^* v
        \end{equation}
        Notice that we have
        \begin{equation}
          \label{eq:rw-step-id}
          (\omega' [f := \fix f. \omega'])^\bot = \omega'^\bot [f := \fix f. \omega'^\bot]
        \end{equation}

        We can now show that $\omega^\bot v_0 \to^* v$. By definition
        of the inverse and the rewriting system we have
        \[\omega^\bot v_0 = (\fix f. \omega'^\bot) v_0
        \to_{\mathrm{IsoRec}} \omega'^\bot [f := \fix f.\omega'^\bot]
        v_0\] By using \autoref{eq:rw-step-id} and
        \autoref{eq:rw-step-ih} we can conclude that
        \[\omega'^\bot [f := \fix f.\omega'^\bot] v_0 \to^* v \]

      \item Consider the case where $\omega$ is of the form
        $\isobasique$.  Without loss of generality, we consider
        that the clause $\set{v_1 \iso e_1}$ matches with $v$ and
        therefore there is a substitution $\sigma_0$ such that
        $\sigma_0[v_1] = v$, and
        \begin{equation}\label{eq:rw-inv-beg}
          \omega~v \to \sigma_0(e_1).
        \end{equation}
        Assume that $e_1$ is
        \[
          \letv{p_1}{\omega_1~p_1'}{}\dots\letv{p_n}{\isoterm_n~p'_n}{v'_1}.
        \]
      By linearity, we can decompose $\sigma_0$ into $\sigma_0^1, \dots,
      \sigma_0^n, \sigma_0^{n+1}$ such that, after substitution we obtain
      \[
        \sigma_0(e_1)\quad=\quad
        \begin{array}{l}
          \letv{p_1}{\omega_1\,\sigma_0^1(p'_1)}{} \\
          \dots \\
          \letv{p_n}{\omega_n\,\sigma_0^n(p'_n)}{}
          \sigma_0^{n+1}(v_1')
        \end{array}
      \]
      By~\autoref{prop:progress}, each {\lett} construction will
      reduce, and by the rewriting strategy we will first rewrite
      $\letv{p_1}{\omega_1~p_1'}{\dots}$ before rewriting the other
      {\lett}. We have that
      \[
        \omega_1~\sigma_0^1(p_1') \to^* \ov{v_1}
      \]
      for some $\ov{v_1}$. At this point, a pattern-matching occurs
      between $p_1$ and $\ov{v_1}$, generating a new substitution
      $\sigma_1$: we have $\sigma_1[p_1] = \ov{v_1}$.
      Again, due to linearity this new substitution can be
      split into $\sigma_1^2, \dots, \sigma_1^n, \sigma_1^{n+1}$,
      where the support of $\sigma_1^i$ are the free variables of
      $\sigma_0^i(p_i')$ appearing in $p_1$ for $i\in \set{2, \dots, n}$ and the support
      of $\sigma_1^{n+1}$ is the free variables of $v_1'$ appearing in $p_1$. We get
      \begin{equation}\label{eq:rw-inv-1}
        \begin{array}{l}
          \letv{p_1}{\omega_1\,\sigma_0^1(p'_1)}{}\\
          \letv{p_2}{\omega_2\,\sigma_0^2(p'_2)}{}\\
          \letv{p_3}{\omega_3\,\sigma_0^3(p'_3)}{}\\
          \dots \\
          \letv{p_n}{\omega_n\,\sigma_0^n(p'_n)}{}\\
          \sigma_0^{n+1}(v_1')
        \end{array}
        \to^*
        \begin{array}{l}
          \letv{p_1}{\overline{v_1}}{}\\
          \letv{p_2}{\omega_2\,\sigma_0^2(p'_2)}{}\\
          \letv{p_3}{\omega_3\,\sigma_0^3(p'_3)}{}\\
          \dots \\
          \letv{p_n}{\omega_n\,\sigma_0^n(p'_n)}{} \\
          \sigma_0^{n+1}(v_1')
        \end{array}
        \to
        \begin{array}{l}
          \letv{p_2}{\omega_2\,\sigma_1^2(\sigma_0^2(p'_2))}{}\\
          \letv{p_3}{\omega_3\,\sigma_1^3(\sigma_0^3(p'_3))}{}\\
          \dots \\
          \letv{p_n}{\omega_n\,\sigma_1^n(\sigma_0^n(p'_n))}{} \\
          \sigma_1^{n+1}(\sigma_0^{n+1}(v_1'))
        \end{array}
      \end{equation}
      The operation can be repeated, and to each $p_k$ is associated a
      value $\overline{v_k}$ and a substitution $\sigma_k$ with
      $\sigma_k[p_k] = \overline{v_k}$ which can be split in
      $\sigma_k^{k+1},\dots,\sigma_k^{n+1}$. They verify
      \begin{equation}\label{eq:rw-inv-2}
        \omega_k(\sigma_{k-1}^k\dots\sigma_1^k\sigma_0^k(p'_k))
        \to^*\overline{v_k}.
      \end{equation}
      Continuing the rewriting started in \autoref{eq:rw-inv-beg} and
      \autoref{eq:rw-inv-1}, we then have
      \[
        \omega~v
        \to^*
        \sigma_{n}^{n+1}\dots\sigma_1^{n+1}\sigma_0^{n+1}(v_1')
      \]
      We were looking for $\omega^\bot(\omega~v)$: at the stage
      \[
        \omega^\bot(\sigma_{n}^{n+1}\dots\sigma_1^{n+1}\sigma_0^{n+1}(v_1')),
      \]
      the clause to fire is
      \[
        \begin{array}{c} \left(
          \begin{array}{l@{~}c@{~}l} v_1&{\iso}&\letv{p_1}{\omega_1~p_1'}{} \\
                                        && \cdots \\
                                        && \letv{p_n}{\isoterm_n~p'_n}{v'_1}
          \end{array}
          \right)^\bot := \left(
          \begin{array}{lcl@{}l@{}l} v'_1&{\iso}&\letv{p'_n}{\isoterm_n^{\bot}~p_n}{} \\
                                         && \cdots \\
                                         && \letv{p'_1}{\isoterm_1^{\bot}~p_1}{v_1}
          \end{array}
          \right).
        \end{array}
      \]
      Let $\Sigma = \sigma_{n}^{n+1}\dots\sigma_1^{n+1}\sigma_0^{n+1}$,
      we are therefore left to evaluate
      \[
        \left(
          \begin{array}{lcl@{}l@{}l} v_1'&{\iso}&
                                                  \letv{p'_n}{\isoterm_n^\bot~p_n}{} \\
                                         && \cdots \\
                                         && \letv{p'_1}{\isoterm_1^\bot~p_1}{v_1}
          \end{array}
        \right)~\Sigma(v_1')
      \]
      We get $\Sigma[v_1'] = \Sigma(v_1')$.
      Hence, after applying the substitution we have :
      \begin{equation}\label{eq:rw-inv-3}
        \begin{array}{l}
          \letv{p_n'}{\omega_n^\bot\,\Sigma(p_n)}{} \\
          \dots \\
          \letv{p_1'}{\omega_1^\bot\,\Sigma(p_1)}{}
          \Sigma(v_1)
        \end{array}
      \end{equation}
      Note that in $\Sigma$ the only substitution acting on $p_n$ is
      $\sigma_n^{n+1}$ the last one in the composition (since all the
      others can only have the variables in the $p_j$ for $j<n$ in
      their support). But $\sigma_n^{n+1}(p_n)$ is exactly
      $\sigma_n(p_n)=\overline{v_n}$.

      Recall that the rewriting shown in \autoref{eq:rw-inv-2} is part
      of the sequence of reduction in \autoref{eq:rw-step-1}. Then:
      either the number of $\mathrm{IsoRec}$ unfolding is smaller, or
      the size of the term is smaller. In both cases we can apply the
      induction hypothesis and deduce that
      \begin{equation}\label{eq:rw-inv-5}
        \omega_n^\bot\Sigma(p_n) = \omega_n^\bot\overline{v_n} \to^* (\sigma_{n-1}^n\dots\sigma_1^n\sigma_0^n(p'_n))
      \end{equation}
      Therefore, rewriting \autoref{eq:rw-inv-3} eventually yields
      \begin{equation}\label{eq:rw-inv-7}
        \begin{array}{l}
          \letv{p_{n-1}'}{\omega_{n-1}^\bot\,(\sigma_{n-1}^n\dots\sigma_1^n\sigma_0^n\sigma_{n}^{n+1}\dots\sigma_1^{n+1}\sigma_0^{n+1})(p_{n-1})}{} \\
          \dots \\
          \letv{p_1'}{\omega_1^\bot\,\Sigma(p_1)}{}
          \Sigma(v_1)
        \end{array}
      \end{equation}
      Note how in the sequence of compositions
      \begin{equation}\label{eq:rw-inv-6}
        \sigma_{n-1}^n\dots\sigma_1^n\sigma_0^n\sigma_{n}^{n+1}\dots\sigma_1^{n+1}\sigma_0^{n+1}
      \end{equation}
      the only substitution with support matching the free variables
      of $p_{n-1}$ are $\sigma_{n-1}^n$ and $\sigma_{n-1}^{n+1}$:
      The composition in \autoref{eq:rw-inv-6} is $\sigma_{n-1}$. The term in \autoref{eq:rw-inv-7} is then
      \[
        \begin{array}{l}
          \letv{p_{n-1}'}{\omega_{n-1}^\bot\,(\sigma_{n-1}(p_{n-1}))}{} \\
          \dots \\
          \letv{p_1'}{\omega_1^\bot\,\Sigma(p_1)}{}
          \Sigma(v_1).
        \end{array}
      \]
      We can iterate the process by reducing
      $\omega_{n-1}^\bot\,(\sigma_{n-1}(p_{n-1})) =
      \omega_{n-1}^\bot\,\overline{v_{n-1}}$ using the induction
      hypothesis, and continue until we reach $v_1$. When we reach
      this level, the substitution is then $\sigma_0$, and we retrieve
      $\sigma_0(v_1) = v$, as expected.
      \qedhere
    \end{itemize}
\end{proof}

\begin{thm}[Isos are isomorphisms]
  \label{proof:isos-iso}
  \label{thm:isos-iso}
  For all well-typed isos $\entailiso \omega : A\iso B$, and for
  all well-typed values $\vdash_e v : A$, if~
  $(\omega^\bot~(\omega~v))\rightarrow^* v'$ then $v = v'$.
\end{thm}

\begin{proof}
  According to the evaluation
  strategy, we can decompose the rewriting of
  $\omega^\bot~(\omega~v)$ into two steps:
  \begin{equation}\label{eq:rw-step-0}
    \omega^\bot~(\omega~v) \to^* \omega^\bot v_0
  \end{equation}
  followed by
  \begin{equation}\label{eq:rw-step-2}
    \omega^\bot v_0 \to^* v'.
  \end{equation}
  In \autoref{eq:rw-step-0}, we have $\omega~v\to^* v_0$ a value. We
  can invoke \autoref{lem:isos-iso} and get \autoref{eq:rw-step-2}
  with $v'=v$, relying on the determinism of the rewriting procedure.
\end{proof}

\section{Computational Content}
\label{sec:rpp}

In this section, we study the computational content of our
language. Specifically, if we identify types with the set of their
closed values ---associating for
instance $(\one \oplus \one) \otimes \one$ with the set of values
$\{\pv{\inl()}{()}, \pv{\inr()}{()}\}$, closed isos $A\iso B$ can be
regarded as bijective maps between the corresponding sets of
values.
In the case of a closed iso $A\iso B$ between types not involving the
type constructor $\mu X.A$, any bijection between values of type $A$
and values of type $B$ can be represented with a pattern matching, one
pattern for each closed value.
However, with (infinite) inductive types, the expressivity becomes
less clear as a bijection does not have a finite representation
anymore. In this section, we show that we can encode the class of
Recursive Primitive Permutations~\cite{rpp} (RPP), implying that,
although restricted, our language can express all primitive recursive
functions~\cite{rogers1987theory}.

\subsection{From RPP to Isos}

As the language $\RPP{}$ considers integers, we need to define a
type to represent them. We start by defining the type of
strictly positive natural numbers, $\mathtt{npos}$, as $\mathtt{npos}
= \mu X. \one \oplus X$. We then define $\underline{n}$, the encoding of a
positive natural number $n$ into a value of type $\mathtt{npos}$ inductively, as
\[
  \underline{1}   = \fold{(\inl{()})},
  \quad
  \underline{n+1} = \fold{(\inr{(\underline{n})})}.
\]
Finally, we define the type of integers
as $Z = \one \oplus (\mathtt{npos} \oplus \mathtt{npos})$ along with
$\overline{z}$ the encoding of any $z\in\intT$ into a value of type
$Z$ defined as:
\begin{align*}
  \overline{0} &= \inl{()},\\
  \overline{z} &= \inr{(\inl{(\underline{z})})} &&  \text{for $z$ positive,}\\
  \overline{z} &= \inr{(\inr{(\underline{- z})})} && \text{for $z$ negative.}
\end{align*}
Given some function $f \in \RPP{k}$, we show in the remainder of the
section how to build an iso $\operatorname{isos}(f) : Z^k \iso Z^k$
realizing $f$. The construction is defined inductively on the
structure of $f$, as presented in \autoref{background:rpp}.

\subsubsection{Encoding of Primitives}

\begin{itemize}
\item The Successor is
  \[\left\{\begin{array}{ccc}
    \inl{()} &\iso& \inr{(\inl{(\fold{(\inl{()})})})} \\
    \inr{(\inl{x})}&\iso&\inr{(\inl{(\fold{(\inr{x})})})} \\
    \inr{(\inr{(\fold{(\inl{()})})})}&\iso& \inl{()}\\
    \inr{(\inr{(\fold{(\inr{x})})})}& \iso & \inr{(\inr{x})}
  \end{array}\right\} : Z \iso Z \]

\item The Sign-change is
  \[\left\{\begin{array}{ccc}
    \inr{(\inl{x})} &\iso& \inr{(\inr{x})} \\
    \inr{(\inr{(x)})} &\iso& \inr{(\inl{x})} \\
    \inl{()} &\iso&\inl{()}
  \end{array}\right\} : Z \iso Z\]

\item The identity is $\{x\iso x\} : Z \iso Z$.

\item The Swap is $\{(x, y) \iso (y, x)\} : Z^2 \iso
  Z^2$.

\item The Predecessor is the inverse of the Successor.
\end{itemize}

\subsubsection{Encoding of Horizontal and Vertical Composition}

Consider $f$ and $g$ two permutations in $\RPP{j}$, and write
$\omega_f = \operatorname{isos}(f)$ and
$\omega_g = \operatorname{isos}(g)$ for the isos encoding $f$ and
$g$. We encode the composition $\operatorname{isos}(f;g)$ of $f$ and
$g$ with type $Z^{j} \iso Z^{j}$ as:
\[
  \operatorname{isos}(f;g) =
  \left\{
    \begin{array}{ccc}
      &  & \letv{(y_1, \dots, y_j)}{\omega_f~(x_1, \dots, x_j)}{} \\
      (x_1, \dots, x_j) & \iso & \letv{(z_1,
                                 \dots, z_j)}{\omega_g~(y_1, \dots, y_j)}{} \\
      & & (z_1, \dots, z_j)
    \end{array}
  \right\}
\]

If now $f \in \RPP{j}$ and $g \in \RPP{k}$, and if $\omega_f =
\operatorname{isos}(f)$ and $\omega_g = \operatorname{isos}(g)$, we
encode the parallel composition $\operatorname{isos}(f\mid\mid g)$ of $f$ and $g$ with type $ Z^{j+k} \iso
Z^{j+k}$ as:
\[
  \operatorname{isos}(f\mid\mid g) =
  \left\{
    \begin{array}{ccc}
      &
      & \letv{(x_1', \dots, x_j')}{\omega_f~(x_1, \dots, x_j)}{} \\
      (x_1, \dots, x_j, y_1, \dots, y_k)
      & \iso
      &
        \letv{(y_1',
        \dots, y_k')}{\omega_g~(y_1, \dots, y_k)}{} \\
      & & (x_1', \dots, x_j', y_1', \dots, y_k')
    \end{array}
  \right\}
\]

\subsubsection{Encoding of Finite Iteration}

Consider $f\in \RPP{k}$ and $\omega_f = \operatorname{isos}(f)$ its encoding.
We encode the finite iteration $\mathbf{It}[f]\in\RPP{k+1}$ with the help of the
auxiliary iso $\omega_{\operatorname{aux}}$, of type $Z^k
\otimes~\mathtt{npos} \iso Z^k \otimes~\mathtt{npos}$
doing the finite iteration using $\mathtt{npos}$, defined as:
\[
  \omega_{\operatorname{aux}} = \fix g.
  \left\{\begin{array}{ccc}
    (\overrightarrow{x}, \fold{(\inl{()})}) & \iso & \letv{\overrightarrow{y}}{\omega_f~\overrightarrow{x}}{}
    \\
                                            & & (\overrightarrow{y},\fold{(\inl{()})}) \\[0.5em]
    (\overrightarrow{x}, \fold{(\inr{n})})& \iso &  \letv{(\overrightarrow{y})}{\omega_f~(\overrightarrow{x})}{} \\
                                            & & \letv{(\overrightarrow{z}, n')}{g~(\overrightarrow{y}, n)}{} \\
                                            & & (\overrightarrow{z}, \fold{(\inr{n'})}) \\[0.5em]
  \end{array}\right\}
\]
Given a pair $(x,n)$, this iso iterates $f$ $n+1$ times on $x$.
We can now properly define $\operatorname{isos}(\mathbf{It}[f])$ of
type $Z^{k+1} \iso Z^{k+1}$ as:
\[
  \operatorname{isos}(\mathbf{It}[f]) = \left\{\begin{array}{ccc}
    (\overrightarrow{x}, \inl{()})&\iso&(\overrightarrow{x}, \inl{()}) \\[0.5em]
    (\overrightarrow{x}, \inr{(\inl{z})})&\iso& \letv{(\overrightarrow{y}, z')}{\omega_{aux} (\overrightarrow{x}, z)}{} \\
    & & (\overrightarrow{y}, \inr{(\inl{z'})}) \\[0.5em]
    (\overrightarrow{x}, \inr{(\inr{z})})&\iso& \letv{(\overrightarrow{y}, z')}{\omega_{aux} (\overrightarrow{x}, z)}{} \\
    & & (\overrightarrow{y}, \inr{(\inr{z'})}) \\[0.5em]
  \end{array}\right\}
\]
We simply have to perform a case distinction on whether the number of
iterations is given as a positive, negative, or null value, and run
$\omega_{\operatorname{aux}}$ accordingly.

\subsubsection{Encoding of Selection}

Consider $f, g, h\in\RPP{k}$ and their corresponding encoding $\omega_f =
\operatorname{isos}(f), \omega_g = \operatorname{isos}(g)$ and  $\omega_h =
\operatorname{isos}(h)$. We define
$\operatorname{isos}(\mathbf{If}[f,g,h])$ of type $Z^{k+1} \iso
Z^{k+1}$ as:
\[
  \operatorname{isos}(\mathbf{If}[f,g,h]) = \left\{\begin{array}{ccl}
    (\overrightarrow{x}, \inr{(\inl{z})}) &\iso&
    \letv{\overrightarrow{x'}}{\omega_f
    (\overrightarrow{x})}{(\overrightarrow{x'}, \inr{(\inl{z})})} \\
    (\overrightarrow{x}, \inl{()}) &\iso&
    \letv{\overrightarrow{x'}}{\omega_g
    (\overrightarrow{x})}{(\overrightarrow{x'}, \inl{()})} \\
    (\overrightarrow{x}, \inr{(\inr{z})}) &\iso&
    \letv{\overrightarrow{x'}}{\omega_h
    (\overrightarrow{x})}{(\overrightarrow{x'}, \inr{(\inr{z})})}
  \end{array}\right\}
\]

\subsubsection{Soundness of the Encoding}

In order to make sure that our encoding is sound, we need to make sure
of two things: first that it is well-typed, and then that the
semantics is preserved.

\begin{thm}[The encoding is well-typed]
\label{lem:rpp-typed}
Let $f\in \RPP{k}$, then $\entailiso \operatorname{isos}(f) : Z^k \iso Z^k$.
\end{thm}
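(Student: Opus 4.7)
The plan is to proceed by induction on the derivation that $f \in \RPP{k}$, following the case split given by the grammar of $\RPP{}$. For each case, two verifications are needed: first that every clause in $\operatorname{isos}(f)$ is typable under the rules of \autoref{tab:typterm} and \autoref{tab:typisos}, and second that the patterns on the left and right of the clauses satisfy the $\PE$ predicate of \autoref{tab:OD}.

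For the primitive cases (identity, sign-change, swap, successor, predecessor), the verification is direct syntactic checking. For each constructor of type $Z$, one simply notes that $Z = \one \oplus (\operatorname{npos} \oplus \operatorname{npos})$ and $\operatorname{npos} = \mu X. \one \oplus X$, so patterns of the form $\inl{()}$, $\inr{(\inl{x})}$, $\inr{(\inr{(\fold{(\inl{()})})})}$, $\inr{(\inr{(\fold{(\inr{x})})})}$ exhaust $Z$ by iterated application of the $\oplus$, $\mu$, and $\one$ clauses of $\PE$. The swap $\mathcal{X}$ uses the disjunctive clause of $\PE_{A \otimes B}$ on the trivial singleton set of pairs of variables.

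For the composite cases, I invoke the induction hypothesis to obtain $\vdash_\omega \omega_f : Z^j \iso Z^j$ (and similarly for $g$, $h$), and then type-check the body of $\operatorname{isos}(f;g)$, $\operatorname{isos}(f \parallel g)$, and $\operatorname{isos}(\mathbf{If}[f,g,h])$ by threading the linear variables through the appropriate $\lett$-bindings, splitting the contexts as required by the tensor rule. The $\PE$ check on each side is again direct: for composition and parallel composition the patterns are pure variable tuples so $\PE$ holds trivially; for selection the three branches correspond to the exhaustive case analysis of the sign of the control integer.

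The main obstacle is the iterator case $\operatorname{isos}(\mathbf{It}[f])$, since it uses the auxiliary recursive iso $\omega_{\operatorname{aux}}$ built with $\fix$, and consequently I must verify the structural recursion criterion of \autoref{tab:typisos}. The plan is to identify the $\operatorname{npos}$ argument as the focus: in the first clause of $\omega_{\operatorname{aux}}$ the decreasing argument is the closed value $\fold{(\inl{()})}$ and the body contains no recursive call, while in the second clause the decreasing argument is $\fold{(\inr{n})}$ with $n$ a strict subterm, and the only recursive call $g~(\overrightarrow{y}, n)$ passes $n$ in the focus position. This matches the criterion exactly. Once $\omega_{\operatorname{aux}}$ is well-typed via the $\fix$ rule, typing $\operatorname{isos}(\mathbf{It}[f])$ follows by a case analysis on the outer $Z$, analogous to the selection case. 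The generalised permutation and weakening encodings are straightforward once the above cases are handled: permutations are single-clause isos on variable tuples where $\PE$ is trivial, and weakening reduces to a single clause that applies the inductively well-typed $\operatorname{isos}(f)$ to the first $k$ components.
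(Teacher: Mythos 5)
Your proposal is correct and follows the same route as the paper's (much terser) proof: induction on the derivation of $f\in\RPP{k}$, checking that the clause bodies type under the linear rules and that the $\PE_Z$ predicate holds because the left components are variable tuples and the rightmost component is exhaustively decomposed. Your explicit verification of the structural-recursion criterion for $\omega_{\operatorname{aux}}$ (closed decreasing argument with no recursive call in the base clause, strict subterm $n$ in focus position in the recursive clause) is a detail the paper's proof leaves implicit but which is indeed required for the $\fix$ typing rule, so including it is a strict improvement.
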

\begin{proof}
  By induction on $f$, for the two compositions, iteration, and
  selection the variables $\overrightarrow{x}$ are all of type $Z$,
  while for $\omega_{\opn{aux}}$ the last argument is of type
  $\mathtt{npos}$. The predicate $\PE_Z$ is always satisfied as the
  left columns of the $n$-fold tensor are always variables and the
  right-most argument on each isos always satisfies $\PE_Z$.
\end{proof}

\begin{thm}[Simulation]
\label{thm:rpp-sim}
Let $f\in \RPP{k}$ and $~n_1, \dots, n_k$ elements of~$~\intT$
such that $f(n_1, \dots, n_k) = (m_1, \dots, m_k)$ then $\operatorname{isos}(f)
(\overline{n_1}, \dots, \overline{n_k}) \to^* (\overline{m_1}, \dots,
\overline{m_k})$
\end{thm}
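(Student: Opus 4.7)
The plan is to proceed by structural induction on the derivation witnessing $f \in \RPP{k}$, treating one case per generator in \autoref{fig:rpp}. At each step we trace the reduction of $\operatorname{isos}(f)\,(\overline{n_1},\dots,\overline{n_k})$, noting that by construction exactly one clause matches by \autoref{lem:od-ok}, and we then expose the resulting substitution using \autoref{lemma:sub_var}.

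For the primitive generators, each case reduces to checking finitely many pattern-matches against the concrete encoding of $\intT$ as $Z$. For $\operatorname{Id}$, $\mathcal{X}$, and $\operatorname{Sign}$, a single $\to_{\mathrm{IsoApp}}$ step produces the result by inspection of the clauses. For the successor, we split on whether $n = 0$, $n > 0$ with $n = 1$ or $n > 1$, $n < 0$ with $n = -1$ or $n < -1$: in each case the appropriate clause matches and rewrites to the encoding of $n+1$. The predecessor case follows from the successor case combined with \autoref{thm:isos-iso}, since by definition $\operatorname{isos}(P)$ is the inverse iso of $\operatorname{isos}(S)$ (up to writing the clauses in reverse). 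For the composite generators $f;g$ and $f \parallel j$, the single clause triggers; each inner $\lett$ reduces to the result given by the outer induction hypothesis applied to $\omega_f$ and $\omega_g$ (resp. $\omega_j$), and the final tuple is reassembled by \autoref{isos:lem:commutativity-substitution}. The selection $\mathbf{If}[f,g,h]$ dispatches on the sign of the last argument: the three clauses directly correspond to the three branches of $\mathbf{If}$ in \autoref{fig:rpp}, and each reduces by the induction hypothesis on $f$, $g$, or $h$. Generalised permutations and weakenings are immediate single-clause cases.

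The main obstacle is the iteration case $\mathbf{It}[f]$, because the number of recursive unfoldings of $\omega_{\operatorname{aux}}$ is not bounded by the structure of $f$. Here I would isolate the following inner lemma and prove it by induction on $n \in \natT_{>0}$:
\[
\omega_{\operatorname{aux}}\,(\overline{x_1},\dots,\overline{x_k},\underline{n})
\;\to^*\;
(\overline{y_1},\dots,\overline{y_k},\underline{n}),
\]
where $(\overline{y_1},\dots,\overline{y_k})$ is the encoding of the $k$-tuple obtained by applying $f$ exactly $n$ times to $(n_1,\dots,n_k)$. The base $n=1$ matches the first clause of $\omega_{\operatorname{aux}}$ and applies the outer induction hypothesis on $f$ once. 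The inductive step $n+1$ matches the second clause, performs one $\omega_f$ application (again by the outer IH) and a recursive call on $\underline{n}$ after an $\mathrm{IsoRec}$ unfolding, which is dealt with by the inner IH. Termination of this reduction sequence is guaranteed by the structural recursion condition on $\omega_{\operatorname{aux}}$: the focus is precisely the $\operatorname{npos}$ argument, which strictly decreases at each recursive call.

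With this inner lemma in hand, the iteration case reduces to dispatching on the sign of the $(k{+}1)$-th argument of $\operatorname{isos}(\mathbf{It}[f])$: the $\inl{()}$ clause gives back the input unchanged (consistent with iterating $f$ zero times), while the $\inr{(\inl{z})}$ and $\inr{(\inr{z})}$ clauses invoke $\omega_{\operatorname{aux}}$ on a value of type $\operatorname{npos}$ encoding $|z|$, thus matching the semantics of $\mathbf{It}[f]$ in \autoref{fig:rpp}. All remaining bookkeeping on substitutions and evaluation contexts is routine and follows from \autoref{prop:subject_reduction}, \autoref{prop:progress}, and the commutativity lemma \autoref{isos:lem:commutativity-substitution}.
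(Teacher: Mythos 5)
Your proposal is correct and follows essentially the same route as the paper: induction on the construction of $f\in\RPP{k}$, direct case analysis for the primitives (with the successor split on the sign and magnitude of $n$, and the predecessor obtained as the dual of the successor), induction hypotheses for the two compositions and the selection, and, crucially, the same auxiliary lemma for $\omega_{\operatorname{aux}}$ proved by an inner induction on the positive iteration count. The paper states that inner lemma in exactly the form you isolate, so there is nothing substantive to add.
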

\begin{proof}
  By induction on $f$.
  \begin{itemize}
      \item Direct for the identity, swap and sign-change.

      \item For the Successor:
      \[\omega = \left\{\begin{array}{ccc}
          \inl{()} &\iso& \inr{(\inl{(\fold{(\inl{()})})})} \\
          \inr{(\inl{(x)})}&\iso&\inr{(\inl{(\fold{(\inr{(x)})})})} \\
          \inr{(\inr{(\fold{(\inl{()})})})}&\iso& \inl{()}\\
          \inr{(\inr{(\fold{(\inr{x})})})}& \iso & \inr{(\inr{(x)})}
      \end{array}\right\}\] we do it by case analysis on the sole
      input $n$.
      \begin{itemize}
          \item Case $n = 0$. We have $\ov{0} = \inl{()}$ and
          $\omega~\inl{()} \to \inr{(\inl{(\fold{(\inl{()})})})} = \ov{1}$.
          \item Case $n = -1$. We have $\ov{-1} =
          \inr{(\inr{(\fold{(\inl{()})})})}$, so the term reduces to
          $\inl{()} = \ov{0}$.
          \item Case $n < -1$. We have $\ov{n} =
          \inr{(\inr{(\fold{(\inr{\underline{n'}})})})}$ with
          $\underline{n'} = \underline{-(n+1)}$, by
          pattern-matching we get $\inr{(\inr{\underline{n'}})}$
          which is $\overline{-(n+1)}$.

          \item Case $n > 1$ is similar.
      \end{itemize}

      \item The Predecessor is the dual of the Successor.

      \item \textbf{Composition \& Parallel composition}: Direct by
      induction hypothesis on $\omega_f$ and $\omega_g$.: for the
      composition, $\omega_f$ is first applied on all the input and
      then $\omega_g$ on the result of $\omega_f$. For the parallel
      composition, $\omega_f$ is applied on the first $j$ arguments
      and $\omega_g$ on the argument $j+1$ to $k$ before
      concatenating the results from both isos.

      \item \textbf{Finite Iteration}: $\opn{It}[f]$.

      We need the following lemma: $\omega_{\text{aux}} (\bar{x_1},
      \dots, \bar{x_n}, \underline{z}) \to^* (\bar{z_1}, \dots,
      \bar{z_n}, \underline{z})$ where $z$ is a non-zero integer and
      $(z_1, \dots, z_n) = f^{\mid z \mid} (x_1, \dots, x_n)$ which
      can be shown by induction on $\mid z\mid$: the case $z = 1$
      and $\bar{z} = \fold{\inl{()}}$ is direct by induction
      hypothesis on $\omega_f$. Then if $z = n+1$ we get it directly
      by induction hypothesis on both $\omega_f$ and our lemma.

      Then, for $\opn{isos}{\opn{It}[f]}$ we do it by case analysis
      on the last argument: when it is $\bar{0}$ then we simply
      return the result, if it is $\bar{z}$ for $z$ (no matter if
      strictly positive or strictly negative) then we enter
      $\omega_{aux}$, and apply the previous lemma.

      \item \textbf{Conditional}: $\opn{If}[f, g, h]$. Direct by case
      analysis of the last value and by induction hypothesis on
      $\omega_f, \omega_g, \omega_h$.
      \qedhere
  \end{itemize}
\end{proof}

\autoref{thm:rpp:soudness-completeness} and \autoref{thm:rpp-sim} tell us that any
primitive recursive function can effectively be encoded as a
well-typed iso with the appropriate semantics. It is in particular
possible to encode the Cantor-Pairing~\cite[Theorem 2 and Theorem
4]{rpp}, although with auxiliary arguments. Another
representation of the Cantor Pairing, without auxiliary arguments, would be
possible in our language but would require a more lax notion of
structural recursion. This is discussed in \autoref{section:conclusion}.

Also note that $\operatorname{isos}(f)^\bot \not=
\operatorname{isos}(f^{-1})$. Indeed,
$\operatorname{isos}(f)^\bot$ inverses the order of the
\lett~constructions, whereas this will not happen for
$\operatorname{isos}(f^{-1})$. The two can nonetheless be considered
equivalent up to a permutation of \lett~constructions and
renaming of variables. \autoref{thm:rpp-sim} also confirms that they
do indeed have the same behavior.

\section{Proof-Theoretical Content}\label{sec:ch}

In the present section, we want to relate our language of isos to
proofs in a \mumall. As mentioned earlier, an iso $\entailiso \omega :
A\iso B$ corresponds to both a computation sending a value of type $A$
to a result of type $B$ and a computation sending a value of type $B$
to a result of type $A$, inverse of each other. Under the principles
of Curry-Howard correspondence, an iso should therefore correspond to
a \emph{proof} isomorphism : the data of two proofs, $\pi$ of $A\vdash
B$ and $\pi^\bot$ of $B\vdash A$, which are inverse of each other in
the sense that, when cut together behave like an (expansion of the)
axiom rule, in the sense that cut with any proof $\pi'$ of $\vdash A$
(resp. $\vdash B$), cut-elimination returns the proof $\pi'$
unchanged, corresponding to~\autoref{thm:isos-iso}.

We will rely on the infinite derivations available in {\mumall} to
obtain a proper representation of the recursive behaviours available
in our iso. The main difficulty will consist in ensuring that any
pre-proof of {\mumall} obtained from an iso will indeed satisfy
\mumall validity criterion (\autoref{background:mumall}).

\subsection{Translating isos into \texorpdfstring{\mumall}{muMALL}}

We start by giving the translation from isos to pre-proofs, and then
show that they are actually proofs, therefore obtaining a
\textit{static} correspondence between reversible programs and \mumall
proofs. We then show that our translation entails the expected
\textit{dynamic} correspondence between the evaluation procedure of
our language and \mumall cut-elimination procedure. The derivations we
obtain are \emph{circular}, and we therefore translate an iso $\omega$
directly into a finite derivation with back-edges, written as
$\circu(\omega)$. A translation to infinite derivations can
straightforwardly be obtained by composing $\circu(-)$ with the
unfolding operation.

In this translation, we need to manage a correspondence between typing
contexts which are sets of variables with associated types, and
sequent contexts, which are lists of formulas. To do so, we assume
once and for all a fixed linear order on variables (for instance given
by an arbitrary enumeration of the set of variables) and will consider
that to a typing context $\Theta$ one associates the list made of the
formulas appearing in the typing contexts following the order given by
the variable ordering, written as $\overline{\Theta}$.

\subsubsection{Definition of the Translation}
Given an iso $\omega : A\iso B$, its translation into a \mumall
derivation of $A \vdash B$ is described in three separate phases:

\begin{description}
\item[\emph{Iso Phase}] The first phase consists in traveling through
the syntactical definition of an iso, keeping track of the last
encountered iso-variable bounded by a $\fix$-construction, if any.
When an iso of the shape {\isobasique} of type $A\iso B$ is
encountered, the negative phase begins by creating a sequent
$A\vdash^f B$ labeled with the last encountered iso-variable. This
phase will resume this phase when encountering another iso in one of
the $e_i$; a back-edge is created in the case of an iso variable.
\item[\emph{Negative Phase}] The negative phase is guided by a list of
formulas to be decomposed using the reversible rules of the logic,
beginning with a singleton list $[A]$ that corresponds
to the decomposition of formula $A$ according to how values of type
$A$ on the left-hand side of $\omega$ are pattern-matched. The
negative phase relies on the $\Neg(\cdot)$ function which takes $5$
arguments:
\begin{itemize}
  \item A set of pairs of a list of values and typing derivation,
  written $(l, \xi)$ where each element of the set corresponds to one
  clause $v \iso e$ of the given iso and $\xi$ is the typing
  derivation of $e$. The list of values corresponds to what is left to
  be decomposed on the left-hand side of the clause (for instance, if
  $v$ is a pair $\pv{v_1}{v_2}$ the list will have two elements to
  decompose).

  \item A list of formulas $\Gamma$ containing the formulas being
  decomposed. Intuitively, the $i$th formula of the list corresponds to
  the type of the $i$th value in the list $l$.

  \item A context $\Delta$, containing the formulas that have
  already been decomposed and will no longer be decomposed during
  the negative phase. This context can only grow during the negative
  phase: once a formula is in it, it will no longer be modified until the end
  of the negative phase.

  \item A formula $A$ which corresponds to the right-hand side formula
  of the conclusion sequent (or rather, to the single positive
  formula) of the derivation that we are currently building, and which
  will not be modified during the negative phase.

  \item A (potentially) annotated sequent $\vdash^f$, which will be
  used as the target of a back-edge.
\end{itemize}

Therefore, $\Neg(\set{(l_i, \xi_i) \mid{i\in I}}, \Gamma, \Delta, A,
\vdash^a)$ will build a derivation $\Gamma, \Delta \vdash^a A$ by
decomposing the formulas in $\Gamma$ according to the values present
in $l$. The negative phase ends when the list is empty, and hence when
$\Gamma = []$. When this is the case, we can start translating $\xi$
and the \emph{positive phase} begins. The negative phase is defined
inductively on the first element of the list of every set, which are
known by typing to have the same prefix, and is given
in~\autoref{tab:negative_phase}.
\item[\emph{Positive phase}] The translation of an expression is
pretty straightforward: each \textit{let} and iso-application is
represented by two \cut~rules, as usual in Curry-Howard
correspondence. The second argument is a list of formulas for which we
maintain the following invariant at each call to $\Pos{(\xi,
\Gamma)}$: $\Gamma$ is a list of formulas corresponding to an ordering
of the types declared in the typing context of $\xi$. This is useful
not only to generate the \mumall derivation but also to manage the
sequent contexts correctly when needing to split it, such as for the
$\tensor$ rule for instance. The definition of the positive phase is
given in~\autoref{tab:positive_phase}, where in the last rule we
have $|(p, A)| = ([x_1, \dots, x_n], [A_1, \dots, A_n])$ as defined
in~\autoref{subsubsection:typing}.
\end{description}

\begin{defi}[Translation of Isos into {\mumall}]
  \label{tab:circ_translation}
  Given a well-typed iso $\omega : A \iso B$, its translation into
  {\mumall}, $\circu(\omega)$ is defined altogether with the functions
  $\Neg$ and $\Pos$ by mutual recursion and produces a circular
  derivation of {\mumall}. $\circu(\omega)$ is defined below while
  $\Neg$ and $\Pos$ are defined in \autoref{tab:negative_phase} and
  \autoref{tab:positive_phase} respectively:
  \begin{itemize}
    \item $\circu(f : A\iso B \vdash f : A\iso B) =
    \begin{array}{c}\infer[be(f)]{A \vdash
    B}{}\end{array}$

    \item $ \circu(\Psi \entailiso \{(v_i\iso e_i)_{i\in I}\} : A\iso B) =
    {\Foc{([v_i], \xi_i)_{i\in I}, [A],\emptyset,
    B, \vdash}}
    $
    \item $\circu(\Psi \entailiso \fix f. \{(v_i\iso e_i)_{i\in I}\} : A\iso B)
    =
     \Foc{([v_i], \xi_i)_{i\in I}, [A],\emptyset,
    B, \vdash^f}$
  \end{itemize}

  Where the $\xi_i$ are the typing derivations of the $e_i$.
\end{defi}

\begin{exa}
  The translation $\pi = \circu(\omega)$ of the
  iso $\omega$ from~\autoref{ex:iso1} is:

\vspace*{4pt}
\begin{prooftree}

  \infer0[\id]{A\vdash A}
  \infer1[$\oplus^1$]{A\vdash A\oplus B}
  \infer1[$\oplus^2$]{A\vdash C\oplus (A\oplus B)}

  \infer0[\id]{B\vdash B}
  \infer1[$\oplus^2$]{B\vdash A\oplus B}
  \infer1[$\oplus^2$]{B\vdash C\oplus (A\oplus B)}

  \infer0[\id]{C\vdash C}
  \infer1[$\oplus^1$]{C\vdash C\oplus
  (A\oplus B)}

  \infer2[$\with$]{B\oplus C\vdash C\oplus (A\oplus B)}
  \infer2[$\with$]{A\oplus (B\oplus C)\vdash C\oplus (A\oplus B)}
  \end{prooftree}
\end{exa}

\begin{exa}
  Considering the iso swap of type $A\otimes B \iso B\otimes A$
  defined as $\opn{swap} = \{\pv {x}{y} \iso \pv{y}{x} \}$, its
  corresponding proof is the proof $\pi_S$ given
  in~\autoref{ex:mumall-thread} whereas the iso
  $\opn{map}(\opn{swap})$ as defined by~\autoref{ex:map} has as its
  corresponding proof the proof $\phi(\pi_S)$
  from~\autoref{ex:mumall-thread}. Notice that the blue thread follows
  the focus of the structurally recursive criterion. The
  \emph{negative phase} consists of the $\nu, \with, \parr$ and $\bot$
  rules, where the pre-thread is going up, while the \emph{positive
  phase} consists of the multiple cut-rules where the pre-thread is not
  active.
\end{exa}

\begin{figure}[t]
  \scalebox{.8}{\begin{minipage}{1.18\textwidth}
  \[
  \begin{array}{c}
  \Neg( \set{(\inl{v_j} :: l_j, \xi_j) \mid {j\in J}}
  \cup \set{(\inr{v_k} :: l_k, \xi_k)\mid {k\in K}},
  A_1\oplus A_2 :: \Gamma, \Delta, B, \vdash^\Psi)
  = \qquad~ \\[1em]
~\qquad   \infer[\with]{A_1 \oplus A_2, \Gamma, \Delta\vdash^\Psi
  B}{\Neg(\set{(v_j ::
  l_j, \xi_j)\mid {j\in J}},A_1 :: \Gamma, \Delta, B, \vdash) \qquad
  \Neg(\set{(v_k :: l_k, \xi_k)\mid {k\in K}}, A_2 :: \Gamma, \Delta, B, \vdash)}
  \end{array}
  \qquad\]

\[  \scalebox{0.9}{$
  {\Neg(\set{(() :: l_i, \xi_i) \mid i\in I},\one :: \Gamma, \Delta, B, \vdash^\Psi)}
  =
  \infer[\bot]{\one, \Gamma, \Delta \vdash^\Psi B}{
  {\Neg(\set{(l_i, \xi_i) \mid i\in I}, \Gamma , \Delta,B, \vdash)}}
  $}\]

  \[ \scalebox{0.9}{$
  {\Neg(\set{(\pv{v^1_i}{v^2_i} :: l_i, \xi_i) \mid{i\in I}}, A_1\otimes A_2 :: \Gamma , \Delta, B, \vdash^\Psi)}
  = \hspace{0cm}
 \infer[\parr]{A_1\otimes A_2, \Gamma, \Delta
  \vdash^\Psi B}{
  {\Neg(\set{(v_i^1 :: v_i^2 :: l_i, \xi_i) \mid{i\in
  I}},A_1:: A_2 :: \Gamma , \Delta, B, \vdash)}}
  $} \]

  \[ \scalebox{0.9}{$
  {\Neg(\set{(\fold{v_i} :: l_i, \xi_i)\mid {i\in
  I}}, \mu X. A :: \Gamma, \Delta,  B, \vdash^\Psi)}
  =
  \infer[\nu]{\mu X. A, \Gamma, \Delta\vdash^\Psi
  B}{
  {\Neg(\set{(v_i :: l_i, \xi_i)\mid {i \in I}},A[X \leftarrow \mu X. A] :: \Gamma, \Delta,
  B, \vdash)}}
  $}\]

  \[\scalebox{0.9}{$
  {\Neg(\set{(x :: l_i , \xi_i) \mid {i\in I}},A :: \Gamma , \Delta , B, \vdash^\Psi)}
  =
  \infer[ex
  ]{A, \Gamma, \Delta \vdash^\Psi  B}
  {
  {\Neg(\set{(l_i,\xi_i) \mid {i\in I}}, \Gamma, A::\Delta,B, \vdash)}}
  $} \]

  \[\scalebox{0.9}{$
 {\Neg(\set{([], \xi)},[], \Delta,B, \vdash^\Psi)}
 =
 {\Pos(\xi, \Delta
 )}
 $}
 \]
\end{minipage}}
  \caption{Negative Phase.}
  \label{tab:negative_phase}
\end{figure}

\begin{figure}[t]
  \centering

  \scalebox{.8}{\begin{minipage}{1.18\textwidth}
  \begin{align*}
    \myPos{\infer{\vdash_e () : \one}{}, \emptyset} &=
      \begin{array}{c}
      \infer[\one]{\vdash \one}{}\end{array} \\[0.5em]
      \myPos{\infer{x : A\vdash_e x : A}{}, [A]} &= \begin{array}{c}\infer[\id]{A \vdash A}{}\end{array} \\
      \myPos{\infer{{\Theta} \vdash_e \inl{t} :
      A_1\oplus A_2}{\xi:
      {({\Theta} \vdash_e t : A_1)}
      }, \Gamma} &=
      \begin{array}{c}\infer[\oplus^1]{{\Gamma} \vdash A_1 \oplus A_2}{
      {\Pos(\xi, \Gamma)}}\end{array} \\
      \myPos{\infer{{\Theta} \vdash_e \inr{t} : A_1\oplus
      A_2}{\xi:
      ({{\Theta} \vdash_e t : A_2})
      }, \Gamma} &=
      \begin{array}{c}\infer[\oplus^2]{{\Gamma} \vdash A_1 \oplus A_2}{
      {\Pos(\xi, \Gamma)}}\end{array} \\
       \myPos{\infer{{\Theta} \vdash_e \fold{t} :
      \mu X. A}{\xi:
      ({{\Theta} \vdash_e t : A[X\leftarrow \mu X.
      A]})
      }, \Gamma } &= \begin{array}{c}\infer[\mu]{ {\Gamma} \vdash \mu X. A}{
      {\Pos(\xi, \Gamma)}}\end{array} \\
      \myPos{\infer{{\Theta}_1,
      {\Theta}_2 \vdash_e \pv{t_1}{t_2} : A_1\otimes
      A_2}{
      \xi_1:
      ({{\Theta}_1 \vdash_e t_1 : A_1})
      \qquad
      \xi_2:
      ({{\Theta}_2\vdash_e t_2 : A_2})
}, \Gamma} &=
      \begin{array}{c}
      \infer[\rex^\star]{\Gamma\vdash A_1\otimes A_2}{\infer[\otimes]{ \overline{\Theta_1},\overline{\Theta_2}\vdash A_1\otimes
      A_2}{
      {\Pos(\xi_1,\overline{\Theta_1})}\qquad
      {\Pos(\xi_2,\overline{\Theta_2})}}}\end{array} \\
      \myPos{\infer{{\Theta};\Psi\vdash_e
      \omega~t : B}{\Psi\entailiso \omega : A \iso
      B\qquad\xi: (
      {{\Theta}\vdash_e t : A})
      }, \Gamma} &=
      \begin{array}{c}\infer[\cut]{ {\Gamma}\vdash B}{
      {\Pos(\xi, \Gamma)}\qquad
      {\circu(\omega)}}\end{array}
    \end{align*}

    \[
    \begin{array}{rl}
      \myPos{\infer{{\Theta}_1,{\Theta}_2\entaile
    \letv{p}{t_1}{t_2} :
    B}{
    \xi_1:
    ({{\Theta}_1\entaile t_1 : A})
    \qquad
    \xi_2;
    ({{\Theta}_2, x_1 : A_1, \dots, x_n : A_n\entaile t_2 : B})
    }, \Gamma} &= \\[2em]
    \infer[\rex^\star]{\Gamma\vdash B}{\infer[\cut]{ \overline{\Theta_1},\overline{\Theta_2}\vdash B}{
    {\Pos(\xi_1, \overline{\Theta_1})}\qquad
    {\Foc{\set{([x_1, \dots, x_n], \xi_2)}, [A_1\otimes\dots\otimes A_n], \overline{\Theta_2},B,\vdash}}}}
    \end{array}
    \]

\end{minipage}}
\caption{Positive Phase.}
  \label{tab:positive_phase}
\end{figure}

\begin{lem}
  \label{lem:inf-branch-form}
  Given an iso $\entailiso \omega : A\iso B$, and given $\pi =
  \circu(\omega)$, for each infinite branch of $\pi$, only one sequent
  tagged by an iso-variable is visited infinitely often.
\end{lem}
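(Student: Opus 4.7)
The plan is to exploit two structural features of the translation $\circu$. First, by the clause case of \autoref{tab:circ_translation}, each iso-variable $f$ gives rise to a unique labeled sequent $\vdash^f$, which is the only target of the back-edges $be(f)$. Second, the typing rule for $\fix f.\omega$ restricts the iso-context to $f:\alpha$ alone, so $\omega$ cannot mention any outer iso-variable; consequently the subderivation $\circu(\fix f.\omega,\emptyset,\alpha,\beta)$ contains no back-edge $be(g)$ with $g\neq f$ (in particular, this excludes intertwined recursion, as already noted after the definition of structural recursion).

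From these two facts I would establish, by an induction on $\omega$, that the labeled sequents of $\pi=\circu(\omega)$ form a forest whose nesting mirrors that of the $\fix$-binders of $\omega$: whenever $\fix g$ occurs syntactically inside the body of $\fix f$, the labeled sequent $\vdash^g$ lies in the subderivation rooted at $\vdash^f$, and every $be(g)$ leaf lies in the subderivation rooted at $\vdash^g$. The crucial \emph{confinement} property follows: once an upward path crosses $be(g)$, it lands on $\vdash^g$ and thereafter proceeds through the body of $\fix g$, whose only back-edges are the $be(g)$ leaves; hence no $be(h)$ with $h\neq g$ can ever be crossed afterwards.

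Consider then an infinite branch $b$ of $\pi$. Since $\pi$ has only finitely many sequents, $b$ must cross back-edges infinitely often, and by pigeonhole there exists an iso-variable $f$ such that $be(f)$ is crossed infinitely often on $b$. Every back-edge crossed after the first occurrence of $be(f)$ on $b$ must itself be a $be(f)$: had $b$ crossed some $be(g)$ with $g\neq f$, confinement applied to $g$ would forbid any subsequent crossing of $be(f)$, contradicting the choice of $f$. Consequently $f$ is the unique iso-variable visited infinitely often on $b$.

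The main obstacle is formalising confinement rigorously: one must phrase an induction on $\omega$ strong enough to track, in each subderivation produced by $\circu$, both the iso-variable currently in scope and the set of back-edge labels that may appear above. Once this is in place, the rest of the argument is essentially a pigeonhole on the finite set of back-edge labels of $\pi$, resting entirely on the singleton iso-context enforced by the $\fix$-typing rule.
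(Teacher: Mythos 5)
Your proof is correct and follows essentially the same route as the paper, whose one-sentence argument likewise rests on the singleton iso-context forbidding any other annotated sequent between $\vdash^f$ and a back-edge $be(f)$. Your confinement property and the concluding pigeonhole merely make explicit what the paper leaves implicit.
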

\begin{proof}
  Since we have at most one iso-variable, we never end up in a
  situation where, between an annotated sequent $\vdash^f$ and a
  back-edge pointing to $f$ we encounter another annotated sequent.
\end{proof}

Among the terms that we translate, the translation of a value yields
what we call a \emph{purely positive proof}, which is trivially a valid
pre-proof.

\begin{defi}[Purely Positive Proof]
  A \textit{purely positive proof} is a finite, cut-free proof whose
  rules are only $\oplus^i, \otimes, \mu, \rex, \one, \id$ for
  $i\in\{1,2\}$.
\end{defi}

\begin{lem}[Values are Purely Positive Proofs]
  \label{lem:valppp}
Given $\Theta = \set{x_1 : A_1, \dots, x_n : A_n}$, $\xi : (\Theta
  \vdash v : A)$ and $\Delta$ a list of formulas corresponding to
  $\Theta$ then
$
\Pos{(\xi, {\Delta})}
$ is
  a purely positive proof.
\end{lem}

\begin{proof}
  By induction on the structure of $\xi$:
  \begin{itemize}
    \item $x : A \vdash_e x : A$ then the derivation is
    $\infer[\id]{A\vdash A}{}$,
    which is a purely positive proof;

    \item $\vdash () : \one$ then the derivation is
$\infer[\one]{\vdash \one}{}$,
    which is a purely positive proof;

    \item ${\Theta_1}, {\Theta_2} \vdash
    \pv{v_1}{v_2} : A\otimes B$ then we get
    $\begin{array}{c}
    \infer[\rex^\star]{\Delta\vdash A\otimes B}{\infer[\otimes]{\overline{\Theta_1},\overline{\Theta_2}\vdash A\otimes
    B}{\infer{\overline{\Theta_1}\vdash A}{\pi_1}\qquad \infer{\overline{\Theta_2}\vdash
    B}{\pi_2}}}\end{array}$
    and then by induction hypothesis on the typing derivations $\xi_1$ and
    $\xi_2$ of $v_1$ and $v_2$ we get the expected result;

    \item ${\Theta}\vdash \inl{v} : A\oplus B$ then the
    derivation is
    $\infer[\oplus^1]{\Delta\vdash A\oplus
    B}{\infer{\Delta\vdash A}{\pi}}$
    then by induction hypothesis on
    the typing derivation $\xi$ of $v$ we get the desired result;

    \item The proof is similar for $\inr{v}$ and $\fold{v}$. \qedhere
  \end{itemize}
\end{proof}

The converse is also true: any purely positive proof describes a
unique value of the language (up to $\alpha$-equivalence).

The well-definedness of $\circu(\omega)$ relies on guaranteeing the
following invariant upon each call to the function $\Pos(\xi,\Gamma)$:
{\it $\Gamma$ is a list corresponding to some ordering of the formulas
occurring in the typing context of $\xi$.}

This property is trivially preserved upon each recursive call to
$\Pos(\cdot)$ in \autoref{tab:positive_phase}. Therefore, the only
thing to check is that, at the initial call to $\Pos(\cdot)$, in the
last case of the definition of $\Neg(\cdot)$ in
\autoref{tab:negative_phase}, the invariant is satisfied.

\begin{prop}
\label{form-val-neg}
Let $\omega = \isobasique : A\iso B$ (resp. $\omega = \fix f.
\isobasique : A\iso B$) be a well-typed iso of typing derivation
$\xi_\omega$ and let $(\TypCtxta_i)_{1\leq i \leq n}$ be the typing
contexts given by $\xi_\omega$ such that $\TypCtxta_i \vdash_e v_i
: A$ and $\TypCtxta_i \vdash_e e_i : B$ for $1\leq i \leq
n$. Then for any $(\set{([v_i^1,\dots ,v_i^{k}],\xi_i), i\in I}, [F_1,
\dots F_k], \Gamma, B, \vdash^{\Psi'})$ that is the argument of a
recursive call to $\Neg(\cdot)$ from $\Neg(\set{([v_i],\xi_i), 1\leq
i\leq n}, [A], \emptyset, B, \vdash^\Psi)$ is such that there exists
typing contexts $\TypCtxta_i^j, 1\leq j \leq k$ such that
$\TypCtxta_i^j\vdash v_i^j : F_j$ and $\Gamma,
\overline{\TypCtxta_i^1},\dots ,\overline{\TypCtxta_i^n}$ is a
reordering of the list $\overline{\TypCtxta_i}$.
\end{prop}
\begin{proof}
In the following proof, when this does not introduce ambiguity, we
will indulge ourselves in a slight abuse of notation, writing
$\TypCtxta$ instead of $\overline{\TypCtxta}$ to keep the notations
light. The property is proved by induction on the number $n$ of
recursive calls needed to reach
\[
  (\set{([v_i^1,\dots ,v_i^{k}],\xi_i), i\in I}, [F_1, \dots F_k], \Gamma, B)
\]
in the (finite) tree of recursive calls from
\[
  \Neg(\set{([v_i],\xi_i), 1\leq i\leq n}, [A], \emptyset, B).
\]
\begin{itemize}
\item If $n=0$, then the property is trivial since $k=1$, $\Gamma =
\emptyset$: it is sufficient to set $\TypCtxta_i^1 = \TypCtxta_i$.

\item Assuming that the property holds for any $i\leq n$ and assuming
$\Neg(\set{([v_i^1,\dots ,v_i^{k}],\xi_i), i\in I}, [F_1, \dots
F_k], \Gamma, B, \vdash^{\Psi'})$ is a subtree of height $n+1$. We
reason by case on the last rule the derivation $\pi$ of which
$\Neg(\set{([v_i^1,\dots ,v_i^{k}],\xi_i), i\in I}, [F_1, \dots
F_k], \Gamma, B, \vdash^{\Psi'})$ is a premise.
\begin{itemize}

\item Case of $\parr$. In this case, $\pi$ is some derivation to which
the induction hypothesis applies and, by definition of $\Neg(\cdot)$
it is of the form $\Neg(\set{([\langle v_i^1, v_i^2\rangle,
v_i^3,\dots v_i^{k}],\xi_i), i\in I}, [F_1\otimes F_2,F_3, \dots F_k],
\Gamma, B, \vdash^{\Psi'})$. By induction hypothesis, there exist
$\TypCtxta^0$ and $\TypCtxta_i^j, 3\leq j \leq k$ such that
(i) $\TypCtxta^0\vdash \langle v_i^1,v_i^2\rangle : F_1\otimes F_2$,
(ii) $\TypCtxta_i^j\vdash v_i^j : F_j$ for $3\leq j\leq k$ and
$\Gamma, \TypCtxta^0, \TypCtxta_i^3, \dots, \TypCtxta_i^n$ is
a reordering of $\TypCtxta_i$.

Since $\langle v_i^1,v_i^2\rangle$ is a well-typed value, there exists
$\TypCtxta_i^1, \TypCtxta_i^2$ such that $\TypCtxta^0 = \TypCtxta_i^1,
\TypCtxta_i^2$ and $\TypCtxta_i^1\vdash v_i^1:F_1$ and
$\TypCtxta_i^2\vdash v_i^2: F_2$. As a consequence, $\Gamma,
\TypCtxta_i^1,\dots, \TypCtxta_i^n$ is a reordering or
$\TypCtxta_i$ and $\TypCtxta_i^j\vdash v_i^j : F_j$ for $1\leq j\leq
k$ as expected.

\item Case of $\with$. W.l.o.g we assume that the premise we are
considering is the left premise of the $\with$. Then, $\pi$ is of the
form $\Neg(\set{([\inl {v_i^1}, v_i^2,\dots, v_i^{k}],\xi_i),
i\in K}\cup \set{(\inr{v_j} :: l_j,\xi_j), j\in J}, [F_1\oplus
F'_1,F_2, \dots F_k], \Gamma, B, \vdash^{\Psi'})$. By induction
hypothesis, there exists $\TypCtxta_i^j, 1\leq j \leq k$ such that (i)
$\TypCtxta_i^1\vdash \inl{ v_i^1} : F_1\oplus F'_1$, (ii)
$\TypCtxta_i^j\vdash v_i^j : F_j$ for $2\leq j\leq k$ and $\Gamma,
\TypCtxta_i^1, \TypCtxta_i^2, \dots \TypCtxta_i^n$ is a
reordering of $\TypCtxta_i$.

Since $\inl{ v_i^1}$ is a well-typed value, $\TypCtxta_i^1\vdash {
v_i^1} : F_1$ and $\Gamma,  \TypCtxta_i^1, \dots, \TypCtxta_i^n$ is a
reordering of $\TypCtxta_i$ and $\TypCtxta_i^j\vdash v_i^j : F_j$ for
$1\leq j\leq k$ as expected. The case of the right premise of a
$\with$ is similar.

\item The $\nu$ rule is similar to the above case, just adapted to the
fold constructor. Then, $\pi$ is of the form
$\Neg(\set{([\fold{v_i^1}, v_i^2,\dots ,v_i^{k}],\xi_i), i\in I},
[\mu X. F,F_2, \dots F_k], \Gamma, B, \vdash^{\Psi'})$, with $F_1 =
F[X\leftarrow \mu X. F]$. By induction hypothesis, there exists
$\TypCtxta_i^j, 1\leq j \leq k$ such that (i) $\TypCtxta_i^1\vdash
\fold{ v_i^1} : \mu X. F$, (ii) $\TypCtxta_i^j\vdash v_i^j : F_j$ for
$2\leq j\leq k$ and $\Gamma, \TypCtxta_i^1,\dots , \TypCtxta_i^n$
is a reordering of $\TypCtxta_i$.

Since $\fold{ v_i^1}$ is a well-typed value, $\TypCtxta_i^1\vdash {
v_i^1} : F[X \leftarrow \mu X. F]$ and finally $\Gamma,
\TypCtxta_i^1,\dots , \TypCtxta_i^n$ is a reordering of
$\TypCtxta_i$ and $\TypCtxta_i^j\vdash v_i^j : F_j$ for $1\leq j\leq
k$ as expected.

\item Case of $\top$. Then, $\pi$ is of the form \[\Neg(\set{([(),
 v_i^1, \dots v_i^{k}],\xi_i),i\in I}, [\one, F_1,F_2, \dots F_k],
 \Gamma, B, \vdash^{\Psi'})\] and induction hypothesis ensures that
 there exists $\TypCtxta_i^j, 0\leq j \leq k$ such that (i)
 $\TypCtxta_i^0\vdash () : \one$, (ii) $\TypCtxta_i^j\vdash v_i^j :
 F_j$ for $1\leq j\leq k$ and $\Gamma, \TypCtxta_i^0,\dots
 ,\TypCtxta_i^n$ is a reordering of $\TypCtxta_i$.

By the typing rule for value $()$ we have that $\TypCtxta_i^0$ is
empty from which we can conclude, as expected, that $\Gamma,
\TypCtxta_i^1,\dots ,\TypCtxta_i^n$ is a reordering of
$\TypCtxta_i$.

 \item Case of $ex$. In this case, $\pi$ is of the form
   \[
     \Neg(\set{([x, v_i^1, \dots, v_i^{k}],\xi_i), i\in I}, [F, F_1,F_2, \dots, F_k], \Gamma', B, \vdash^{\Psi'})
   \]
 and induction hypothesis ensuring that there exists $\TypCtxta_i^j,
 0\leq j \leq k$ such that (i) $\TypCtxta_i^0\vdash x : F$, (ii)
 $\TypCtxta_i^j\vdash v_i^j : F_j$ for $1\leq j\leq k$ and $\Gamma',
 \TypCtxta_i^0, \dots, \TypCtxta_i^n$ is a reordering of
 $\TypCtxta_i$.

By the typing rule for value $x$ we have that $\TypCtxta_i^0 = (x:F)$
from which we can conclude, as expected, that $\Gamma,\TypCtxta_i^1,
\dots, \TypCtxta_i^n$ is a reordering of $\TypCtxta_i$ since $F,
\Gamma' = \Gamma$. \qedhere
\end{itemize}
\end{itemize}
\end{proof}

We can now show that our translation is well-defined:

\begin{prop}
    Given a closed iso $\entailiso \omega$ then $\circu(\omega)$ is
    well-defined.
\end{prop}
\begin{proof}
    By induction on $\entailiso$. By typing, the iso is either
    $\fix f. \isobasique$ or $\isobasique$. In both cases, we end up
    in the negative phase, where in the first case, the root of the
    derivation is annotated with $f$, while in the second case, the
    root of the derivation is not annotated with any label.

    The negative phase $\Neg(([v_i], \xi_i')_{i\in I}, [A], \emptyset,
    B, \vdash^\Psi)$ is well-defined due to the predicate $\PE_A$.
    Indeed, whenever $\PE_A(S)$ holds on a well-typed set of
    values $S$ of the same type, all the values in $S$ necessarily
    have the same constructors at the root position.

    The positive phase is well-defined as it consists in recreating
    the typing judgment of each expression $e_i$.
    By~\autoref{form-val-neg} we know that the typing context of $e_i$
    and the one obtained from the negative phase contain the same
    variable associated with the same formula.
\end{proof}

\subsubsection{Validity of the Translation}
We now turn to the validity of the \mumall proof corresponding to a
well-typed iso. Given an iso $\omega = \fix f. \isobasique$, we want
to show that any infinite branch is inhabited by a valid thread.

Before going further, it is useful to notice that the bouncing thread
we encounter in translations of isos has a specific shape, the reason
for which we introduce the notion of \emph{bouncing-cut} and their
origin:

\begin{defi}[Bouncing-Cut]
A Bouncing-cut is a cut of the form:
\[\begin{array}{c}\infer[\cut]{\Gamma \vdash F}{\infer{\Gamma \vdash
G}{\pi}\qquad\infer[be(f)]{G \vdash F}{}}\end{array}\]
\end{defi}

Due to the syntactical restrictions of the language we get straightforwardly the
following fact:

\begin{prop}[Origin of Bouncing-Cut]
  Given a well-typed iso, every occurrence of a rule $be(f)$ in
  $\circu(\omega)$ is a premise of a bouncing-cut.
\end{prop}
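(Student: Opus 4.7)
The plan is to proceed by inspection of the defining clauses of $\circu$, the negative phase $\Neg$, and the positive phase $\Pos$, and to verify that the rule $be(f)$ is emitted in exactly one place, namely the base clause $\circu(f,\set{f},\alpha,\beta) = \infer[be(f)]{[A_\alpha];\emptyset\vdash B_\beta}{}$. No clause of $\Neg$ or $\Pos$ directly emits a $be$, so all occurrences of $be(f)$ in $\ffloor{\circu(\omega)}$ must originate from this base clause, invoked through some chain of calls $\circu \to \Neg \to \Pos \to \circu$.

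First I would identify the sole syntactic context in which $\circu$ is called on an iso-variable. The three defining clauses of $\circu$ are: (i) $\circu(\fix f.\omega, S,\alpha,\beta) = \circu(\omega,\set f,\alpha,\beta)$; (ii) $\circu(f,\set f,\alpha,\beta)$, which emits $be(f)$; and (iii) $\circu(\isobasique,\set f,\alpha,\beta)$, which delegates to $\Neg$. By the structural recursion well-formedness constraint on $\fix f.\omega$, the body must be of the form $\isobasique$, so clause (i) never reduces to clause (ii) at the top of the translation. The negative phase does not invoke $\circu$ at all; it only recurses on itself or transitions to $\Pos$. The positive phase invokes $\circu$ in exactly one clause, that for iso-application $\omega~t$, where the result $\circu(\omega,\set f,\alpha,\beta)$ is placed as the right premise of a freshly emitted $\cut$ whose left premise is $\Pos(\xi)$.

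Second I would conclude: whenever $\circu(f,\set f,\alpha,\beta) = be(f)$ is produced, it appears as the right premise of the $\cut$ emitted by the enclosing positive-phase rule for $\omega~t$ with $\omega = f$. This $\cut$ has left premise $\Pos(\xi)$ of conclusion $[\,];\Theta \vdash A$ and right premise $be(f)$ of conclusion $[A];\emptyset\vdash B_\beta$, which, after passing through the forgetful map $\ffloor{\cdot}$ (which only erases the $ex(x)$ rules and the $[\cdot]\,;\,\cdot$ annotations), becomes exactly the bouncing-cut shape $\infer[\cut]{\Theta\vdash B_\beta}{\infer{\Theta\vdash A}{\pi}\qquad \infer[be(f)]{A\vdash B_\beta}{}}$.

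The only subtlety, and the place where care is needed, is that nested $\fix$-bindings can occur inside some $e_i$ of an enclosing iso; the induction must track that, when such a nested iso is applied, the corresponding $\circu$ call is again produced inside the positive-phase $\cut$ rule, so the base clause $\circu(f,\set f,\ldots)$ is still reached only as the right premise of a cut. The structural recursion constraint — which forces the body of every $\fix$ to be of the form $\isobasique$ rather than another $\fix$ or a bare iso-variable — is exactly what rules out any pathological top-level occurrence of $be(f)$ outside a positive-phase cut.
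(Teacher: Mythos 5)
Your proof is correct and is precisely the inspection argument the paper intends: the proposition is stated there without proof, as an immediate consequence of the syntactic restrictions, and your case analysis of the clauses of $\circu$, $\Neg$ and $\Pos$ — showing that $be(f)$ is emitted only by $\circu(f,\set{f},\alpha,\beta)$, which is invoked only as the right premise of the cut produced by the positive-phase clause for iso application, and that $\ffloor{\cdot}$ preserves this shape — supplies exactly the missing justification. One cosmetic remark: the paper's own well-definedness lemma treats isos of the form $\fix f_1.\cdots\fix f_n.\isobasique$, so the body of a $\fix$ need not literally be a clause set as you assert; but since the chain of $\fix$-binders can only terminate in a clause set (a bare iso-variable body is never structurally recursive), your conclusion that the $\fix$-clause of $\circu$ never reduces directly to the $be(f)$-emitting clause is unaffected.
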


In particular, when following a thread going up into a
\emph{bouncing-cut}, it will always start from some negative formula
(represented graphically as a positive formula in the left-hand side
of the sequent, according to our writing convention), before going
back down following a positive formula (represented as {\it the only}
right-hand side formula of the sequent) and bouncing back up on the
bouncing-cut to reach the back-edge immediately.

\medskip
As given by \autoref{lem:inf-branch-form}, an infinite branch is
uniquely defined by a single iso-variable.
Given the value $v_i^j$ of type $\mu X. B$ which is the decreasing
argument for the structurally recursive criterion, we want to build a
pre-thread that follows the variable $x_j : \mu X. B$ in $v_i^j$ that
is the focus of the criterion.

Since we want to show that $\circu(\omega)$ is a valid derivation, we
need to show that for every infinite branch, there exists a valid
thread. Among the multiple threads that exist inside an infinite
branch, we look at the one that follows the formula corresponding to
the decreasing argument of the structurally recursive criterion. We
describe in the following definitions how this thread behaves inside
$\circu(\omega)$. But first, notice that (i) the base case of
$\Neg(-)$ is necessarily a call to $\Pos(-)$ and (ii) when we have a
set of clauses $\isobasique$ there will be exactly $n$ calls to
$\Pos(-)$ for each typing derivation of each $e_i$. This is due to the
fact that branching occurs for each occurrence of a left or right
injection in the $v_i$.

\begin{defi}[Pre-thread of the negative phase]
  \label{def:preT-foc}
  Given a well-typed iso \\ $\omega = \fix f. \isobasique : A\iso
  B$ and a clause $v_i \iso e_i$ such that $f~p$ is a subterm of $e_i$
  and the variable $x$ of type $\mu X. C$, which is the focus of the
  primitive recursive criterion. Given that $x$ occurs free only in
  the clause $v_i \iso e_i$, we define $PT_n(x, (v_i \iso e_i),
  \circu(\omega))$ as the pre-thread that follows the subformula $\mu
  X. C$ until $\Pos(\xi_i)$ is reached, where $\xi_i$ is the typing
  derivation of $e_i$
\end{defi}

Now that we have defined the pre-thread that follows the decreasing
argument during the negative phase, we can look at its weight.

\begin{lem}[Weight of the pre-thread for the negative phase]
\label{lem:form-foc}
Given a well-typed iso $\omega = \fix f. \omega$ with a clause $v \iso
e$ where $x$ is the focus, then $w(PT_n(x, (v\iso e),
\circu(\omega)))$ is a word over $\{l,r,i,\mathcal{W}\}$.
\end{lem}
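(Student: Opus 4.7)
The plan is to proceed by structural induction on the derivation of $\Neg(\set{([v], e)})$, following exactly the case split used in the definition of $PT_n$ (Definition of the pre-thread of the negative phase). At each step we inspect what letter, if any, is appended to the weight $w(PT_n(\cdot))$, and show that it always lies in $\{l,r,i,\mathcal{W}\}$. The two base situations — the empty list and the variable case where $y=x^p$ — are immediate: the pre-thread either terminates with $\epsilon$ or does not yet contribute to the weight, so vacuously the claim holds.

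For the inductive step, the case analysis follows the shape of the first value of the pattern list. In each inductive clause of $PT_n$, there are two sub-cases to examine: either $x^p$ lies inside the value being currently decomposed (so the pre-thread \emph{actively} follows it), or $x^p$ lies further inside the remaining list $l$ (so the pre-thread is \emph{inactive} on the rule being applied). In the active sub-case, the pre-thread enters one of the immediate sub-formulas of the principal formula, and by the definition of the weight on bouncing threads this contributes exactly one direction letter: $l$ or $r$ for the $\with$/$\parr$ rules introduced by $\oplus$ and $\otimes$, and $i$ for the $\nu$ rule introduced by the $\mu$ pattern. The unit case $()$ simply consumes the head of the list without altering the weight. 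In the inactive sub-case, the pre-thread stays on a formula of the context that is not touched by the rule; this corresponds to the "waiting" letter $\mathcal{W}$. In every sub-case, the induction hypothesis applied to the smaller derivation $\Neg(\set{(l, e)})$ (or its split into smaller components for $\with$) yields a weight over $\{l,r,i,\mathcal{W}\}$, and concatenation with the added letter preserves the property.

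The main obstacle is purely bookkeeping: matching each clause of the definition of $PT_n$ with the corresponding elementary step of the weight function as defined in \cite{bouncing}, and making sure that the typing hypothesis (in particular $\PE_A$ together with~\autoref{form-val-neg}) guarantees that exactly one of the two sub-cases applies at each step, so that the letter appended is unambiguous. There is no deeper logical content beyond this systematic case verification, and no circularity with subsequent lemmas since we only look at the \emph{negative} portion of the pre-thread, which is finite by construction of $\Neg$.
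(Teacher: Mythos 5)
Your proof is correct and follows essentially the same route as the paper's: a case analysis on each step of $PT_n$ distinguishing whether the focus $x^p$ sits in the value currently being decomposed (yielding $l$, $r$, or $i$ according to the pattern constructor) or elsewhere in the list (yielding $\mathcal{W}$). Organizing it as an explicit structural induction and noting that $\PE_A$ makes the sub-case determination unambiguous are harmless refinements of the paper's terser two-bullet argument.
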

\begin{proof}
By case analysis on the rule of $\Neg(-)$, reasoning on whether the focus
$x$ is a subterm of the values being decomposed or not:
\begin{itemize}
    \item If the variable $x$ is not a subterm of the first value from
    the list $l$ then the thread has the form:$(A;C,\Delta\vdash B,
    \uparrow)\cdot (A; C',\Delta\vdash B, \uparrow)$ where
    $A\in\Delta$ and the weight is $\mathcal{W}$.
    \item If the variable $x$ is a subterm of the first value of the
    list $l$, then by direct case analysis on the first value: if the
    value is of the form $\pv{v_1}{v_2}$, depending on whether $x$ is
    in $v_1$ or $v_2$ the weight will be $l$ or $r$. The cases
    $\inl{}$ and $\inr{}$ will have for weights $l$ or $r$,
    respectively, while the $\fold{}$ will create weight $i$.
    \item Finally, in the case where the first value of the list is a
    variable $y$ (whether $x = y$ or not), then as we apply the
    {$\mathsf{ex}$} rule, the weight is $\mathcal{W}$. \qedhere
\end{itemize}
\end{proof}

The same reasoning can now be done for the positive phase.

\begin{defi}[Pre-thread of the positive phase]
\label{def:preT-e}
Given a well-typed iso \\ $\omega = \fix f. \isobasique : A\iso B$
and a clause $(v_i\iso e_i)$ where $\xi : \Theta \vdash_e e_i : B$,
such that $f~p$ is a subterm of $e_i$ and $x$ is the focus of the
primitive recursive criterion of type $\mu X. C$. Considering $\pi =
\circu(\omega)$, we define $PT_p(x, \Pos(\xi, \Theta))$ as the
pre-thread that starts from the conclusion of $\Pos(\xi, \Theta)$ and
follows the type $\mu X. C$ until the sequent $A\vdash^f B$ is
reached.
\end{defi}

We are guaranteed that such a pre-thread exists in an iso $\fix f.
\omega$ with clause $v\iso e$ where $f~p$ occurs in $e$ since (i) the
focus of the structurally recursive criterion, $x$, is used linearly
in $e$, (ii) $x$ is contained in the variables of $p$ and therefore,
the pre-thread will start by going up on the left-hand side of the
sequent until it bounces back on an axiom, and then go down and
bounces back up on the {\cut} generated by the subterm $f~p$. This is
formalized by the following lemma, where we look at the weight of the
positive phase:

\begin{lem}[Weight of the pre-thread for the positive phase]
\label{lem:weight-positive-phase}
Given a well-typed iso $\fix f. \omega$ such that the variable $x$
is the focus of the primitive recursive criterion in a clause $v\iso
e$ where $\xi : \Theta \vdash_e e : A$ and where $f~p$ is a subterm of
$e$, the weight of the pre-thread on the positive phase $PT_p(x,
\Pos(\xi, \Theta))$ is of the shape $\mathcal{W}^*
\mathcal{A}\{\overline{l}, \overline{r}, \mathcal{W}\}^* \mathcal{C}$
\end{lem}
\begin{proof}
By case analysis on $\Pos(e)$. As the thread starts on the left-hand
side (since the formula is in $\Theta$), it only goes up by
encountering cut-rules or right-rules, therefore we get
$\mathcal{W}^*$, and the thread goes up all the way to an axiom rule.
This axiom necessarily exists and is unique since the type system is
linear corresponding to the variable $x$. This axiom rule therefore
adds the $\mathcal{A}$ and bounce back. Finally, the thread goes down
on the purely positive proof, which, by syntactical definition of an
iso is a pattern $p$, and hence generating $\{\overline{l},
\overline{r}, \mathcal{W}\}^*$. The thread keeps going down as such
until bouncing again on the {\cut} rule generated by the subterm
$f~p$.
\end{proof}

We can then consider the infinite pre-thread as the concatenation of
both the pre-thread from the negative phase, and the one from the
positive phase.

\begin{lem}[Form of the pre-thread]
\label{lem:thread-form}
Given the pre-thread $t$ following the focus is the primitive
recursive criterion $x$ (meaning the composition of $PT_n(-)$ and
$PT_p(-)$) we have that $w(t) = p_0 (\Sigma_{i\leq n} p_i
\mathcal{W}_i^* \mathcal{A} q_i \mathcal{C})^\omega$ with $p_0$ any
prefix, $p_i \in \{l,r,i, \mathcal{W}\}^*$ and $q_i \in
\{\overline{l}, \overline{r}, \mathcal{W}\}^*$ and with, $\forall
i\leq n, \overline{q_i} \sqsubset \overline{\overline{p_i}}$ and $\mid
p_i \mid > \mid q_i \mid$ without counting the $\mathcal{W}$, where $p
\sqsubset q$ is $q$ is a prefix of $p$ and with
$\overline{a\cdot p} = \overline{a} \cdot \overline{p}$
if $a\in\{l,r,i\}$, $\overline{\overline{a}p} =
a\overline{p}$ if $a\in \{l, r, i\}$ and
$\overline{\mathcal{W}p} = \overline{p}$.
\end{lem}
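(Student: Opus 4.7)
The proof will proceed by decomposing the infinite pre-thread into an initial prefix followed by a concatenation of negative–positive phase blocks, and then reading off the shape of each block from the two preceding lemmas.

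First I would observe that by \autoref{lem:inf-branch-form}, an infinite branch visits exactly one iso-variable $f$ infinitely often, so the proof repeatedly enters the sequent $\vdash^f$ and, between two such visits, goes through exactly one clause $v \iso e$ of the recursive iso, traversing first its negative phase and then its positive phase. Accordingly, I would write $w(t) = p_0 \cdot \prod_{i < \omega} b_i$, where $p_0$ is an arbitrary prefix (the part of $t$ before the first complete block) and each $b_i$ corresponds to one pass through a clause of $\omega$. Since the finite number of clauses yields only finitely many possible blocks up to shape, they can be indexed by $i \leq n$, giving the $\Sigma_{i \leq n}$ in the statement.

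Next, within a single block $b_i$, I would apply the previous two lemmas. By \autoref{lem:form-foc}, the negative-phase portion, which tracks the focus variable $x^p$ through the pattern $v_j^k$ of the relevant clause, contributes a weight $p_i \in \{l,r,i,\mathcal{W}\}^*$. By the weight lemma for the positive phase, the positive portion of the block contributes $\mathcal{W}_i^* \, \mathcal{A} \, q_i \, \mathcal{C}$ with $q_i \in \{\overline{l}, \overline{r}, \mathcal{W}\}^*$, where the $\mathcal{A}$ records the bounce on the axiom rule matching $x^p$ and the $\mathcal{C}$ records the bounce back up on the bouncing-cut leading to the back-edge. Concatenating gives each block the required shape $p_i \, \mathcal{W}_i^* \, \mathcal{A} \, q_i \, \mathcal{C}$.

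The main obstacle, and the part that genuinely uses the structural recursion criterion, is proving the relationship between $p_i$ and $q_i$. The idea is that $p_i$ (stripped of its $\mathcal{W}$ letters) spells out the position $\xi(x^p, v)$ of the focus variable inside the pattern $v$, while $q_i$ (stripped of its $\mathcal{W}$ letters) spells out, read downwards through the purely positive proof coming from \autoref{lem:valppp}, the position of $x^p$ inside the argument of the recursive call $f\,p$. Because the recursive call is applied to a strict subterm of the focus's decreasing argument, this latter position is a proper extension of the former, which after taking duals (going up versus going down) yields exactly $\overline{q_i} \sqsubset \overline{\overline{p_i}}$ together with the strict inequality $|p_i| > |q_i|$ when $\mathcal{W}$ letters are discarded. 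I would make this precise by an induction on the value $v$ in parallel with an induction on the subterm that is the argument of $f$, using \autoref{form-val-neg} to align the address of the formula tracked by the pre-thread in the negative phase with the position of $x^p$ in $v$, and the explicit shape of the positive phase in \autoref{tab:positive_phase} (where each $\otimes$, $\oplus^i$, $\mu$ rule contributes one letter to $q_i$) to do the same on the positive side.
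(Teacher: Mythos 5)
Your proof follows the same route as the paper's: $p_i$ is read off the negative-phase pre-thread (\autoref{lem:form-foc}), the block $\mathcal{W}^*\mathcal{A}\,q_i\,\mathcal{C}$ off the positive-phase one, and both $|p_i|>|q_i|$ and $\overline{q_i}\sqsubset\overline{\overline{p_i}}$ are derived from the fact that the purely positive proof only rebuilds the variable tuple $(x_1,\dots,x_m)$ of the recursive call while the negative phase additionally descends into the decreasing argument to reach its strict subterm $x^p$ (the paper phrases this as $p_i = r^*l^+\{l,r,i\}^*$ versus $q_i=\ov{l}^{+}\ov{r}^{*}$, where you phrase it via term occurrences and \autoref{form-val-neg}). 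The one slip is directional: the position of $x^p$ in the pattern $v$ is a proper extension of its position in the call argument $(x_1,\dots,x_m)$, not the reverse as written in your third paragraph --- but the conclusions you draw from it are exactly the right ones, so this reads as a wording inversion rather than a gap.
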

\begin{proof}
$p_i$ is generated from~\autoref{def:preT-foc} while the rest up to
the $C$ (included) is generated from~\autoref{def:preT-e}.

First, we show that $\mid p_i \mid > \mid q_i \mid$ modulo the
$\mathcal{W}$.

Since $p_i$ is generated by the negative phase, we have that, modulo
$\mathcal{W}$, $p_i = r^*l?\set{l,r,i}^*$, this is due
to the definition of being primitive recursive and because we
are following the focus of the primitive recursion criterion. By
definition of being primitive recursive the input type of the iso is
$A_1\otimes (\dots \otimes A_n)$, hence $r^*l?$ correspond to
the search for the correct $A_i$ of type $\mu X. B$ in the
input type, while $\set{l,r,i}^*$ is the
decomposition of the primitive recursive value, as described
in~\autoref{form-val-neg}.

As $q_i$ corresponds to the purely positive proof, and by syntactical
definition of the iso, we know that the purely positive proof is the
encoding of a pattern $p = \pv{x_1}{\pv{\dots}{x_n}}$. Hence, $q_i$
can be decomposed as $\ov{l}?~\ov{r}^*$. By the fact that the
iso is primitive recursive we know that the variable in $p$ is a
strict subterm of the primitive recursive value, hence $\mid p_i \mid
> \mid q_i \mid$.

The fact that $\overline{q_i} \sqsubset \overline{\overline{p_i}}$ is
direct, as the purely positive proof reconstructs the type $A_1
\otimes \dots \otimes A_n$ without modifying the $A_i$ while
$\overline{\overline{p_i}}$ starts by searching for the corresponding
type $A_i$, so it is only composed of $\{l, r\}^*$, which will be the
same as $\overline{q_i}$.
\end{proof}

\begin{thm}[The pre-thread generated is a thread]
\label{thm:pre-t-ok}
Given a well-typed iso $\fix f. \omega : A\iso B$, and its
corresponding pre-proof $\pi$, the pre-thread following the focus of
the recursive criteria of formula $\mu X. C$ is a thread, i.e. it can be uniquely
decomposed into $\bigodot (H_i \odot V_i)$ with:
\begin{itemize}
    \item $w(V_i) \in \{l, r, i, \mathcal{W}\}^\infty$ and non-empty
    if $i \not= \lambda$
    \item $w(H_i) \in \goth{H}$, and
    it is non-empty if $i\not= 0$
\end{itemize}
\end{thm}
\begin{proof}
  We set $H_0$ as $(\mu X. C; A\vdash B; \up)$ (so that $w(H_0) =
  \epsilon \in \goth{H}$). $V_0$ consists of the entire negative phase
  followed by the positive phase until the axiom rule is reached.
  Therefore, $V_0$ starts by ($\mu X. C; A \vdash B; \up)$ hence the
  first elements of $H_0$ and $V_0$ coincide.
  From~\autoref{lem:form-foc}, we know that $w(V_0)$ is a word over
  $\set{l, r, i, \mathcal{W}}$ followed by the $\mathcal{W}^*$ that
  comes from the positive phase, as described
  in~\autoref{lem:weight-positive-phase}. Then, for all $i\geq 1$, we
  set:
  \begin{itemize}
    \item $H_i$ starts at $(\mu X. C; \mu X. C\vdash \mu X. C; \up)$,
    just before the axiom rule so that the first element of $w(H_i)$
    is $\mathcal{A}$. Then $H_i$ is composed of:
    \begin{itemize}
      \item All of the pre-thread going down on the Purely Positive
      Proof after the axiom, accumulating a word over
      $\{\overline{l},\overline{r}, \overline{i}, \mathcal{W}\}^*$;
      \item Going back up into the cut-rule of the bouncing cut,
      making a $\mathcal{C}$;
      \item Going up to compensate for every $\overline{x}$ seen in
      the Purely Positive Proof while going down. This is possible,  as
      shown in~\autoref{lem:thread-form}.
    \end{itemize}
    \item $V_i$ is the maximal possible sequence such that $w(V_i) \in
    \{l,r,i, \mathcal{W}\}^*$, i.e., the sequence that ends with $(A;
    A\vdash A; \up)$. \qedhere
  \end{itemize}
\end{proof}

\begin{thm}[\emph{Validity of proofs}]\label{thm:validity} If
  $~\entailiso \omega : A\iso B$, then ${\circu(\omega)}$ satisfies
  {\mumall} bouncing validity criterion.
\end{thm}
\begin{proof}

  By~\autoref{thm:pre-t-ok}, we know that we have a thread of which we
  also know that the visible part is not stationary.

  Finally, by~\autoref{lem:thread-form} and~\autoref{thm:pre-t-ok} we
  know that the visible part will infinitely often encounter the
  subformulas of the formula $\mu X. B$ that is the focus of the
  primitive recursive criterion. This is due to the difference in size
  in the part of the thread from the negative and from the positive
  phase and the fact that the positive phase does not encounter a
  $\mu$ formula when going down on a purely positive proof.

  Therefore, the smallest formula we will encounter is a $\nu$
  formula, validating the thread.
\end{proof}

\begin{rem}
  In the last proof, we used a notion of validity different
 from~\cite{bouncing}. While the one from~\cite{bouncing} is more
 general, both criteria coincide on derivations that are images of
 isos.
\end{rem}

\subsection{Proof Simulation}
We have shown that any iso from type $A\iso B$ gives rise to a valid
proof of conclusion $A\vdash B$. In order to show the relationship
between the cut-elimination procedure of {\mumall} and the rewriting
system of our language we first introduce a slightly modified version
of the iso rewriting system, which we call $\expB$, using explicit
substitution in order to make the iso-evaluation closer to the
cut-elimination procedure.

\begin{rem}
  \label{convention:letsigma}
  In the remaining of this section, we fix the following convention:
	given a substitution $\sigma = \set{x_1\mapsto v_1, \dots,
	x_n\mapsto v_n}$ (and assuming a linear order on variables) we use
	the shorthand $\letsig{\sigma}{t}$ for $\letv{x_1}{v_1}{\dots
	\letv{x_n}{v_n}{t}}$.
\end{rem}

\begin{defi}[Explicit Substitution Rewriting System]
    $\expB$ is defined by the following rules :
    \begin{align*}
        \letv{x}{v}{x} &\expLet v \\
        \letv{\pv{p_1}{p_2}}{\pv{t_1}{t_2}}{t} &\expLet \letv{p_1}{t_1}{\letv{p}{t_2}{t}}\\
        \letv{x}{v}{\pv{t_1}{t_2}} &\expLet \pv{\letv{x}{v}{t_1}}{t_2} &&\text{when } x\in FV(t_1) \\
        \letv{x}{v}{\pv{t_1}{t_2}} &\expLet \pv{t_1}{\letv{x}{v}{t_2}} &&\text{when } x\in FV(t_2) \\
        \letv{x}{v}{\inl{t}} &\expLet \inl{(\letv{x}{v}{t})} \\
        \letv{x}{v}{\inr{t}} &\expLet \inr{(\letv{x}{v}{t})} \\
        \letv{x}{v}{\fold{t}} &\expLet \fold{(\letv{x}{v}{t})} \\
        \letv{x}{v}{\omega~t} & \expLet \omega~(\letv{x}{v}{t})
    \end{align*}
    and the following rules (with $C$ an arbitrary context):
    \[
        \begin{array}{c}
        \infer[\beta\mathrm{-Cong}]{C[t]\expB C[t']}{t \expB\cup\expLet t'}
        \\[1.5ex]
        \infer[\beta\mathrm{-IsoRec}]{
        (\fix f.\isoterm)v \expB (\omega[f \leftarrow (\fix f. \omega)])v
        }{
        }\\[1.5ex]
        \infer[\beta\mathrm{-IsoApp}]{\isobasique~v' \expB \letsig{\sigma}{e_i}}{\sigma[v_i] = v'}
        \end{array}
    \]

    As always, we use $\expB^*$ for the reflexive-transitive closure
    of $\expB$.
\end{defi}

\begin{prop}
  The following rule
  \[
  \infer[\beta-\mathrm{LetE}]{\letv{p}{v}{t}\expB\letsig{\sigma}{t}}{\sigma[p]
  = v} \] is admissible.
\end{prop}
\begin{proof}
  We have two cases depending on the shape of $p$. If it is a simple
  variable then the case is direct; otherwise, $p = \pv{p_1}{p_2}$,
  then, by typing, $t$ reduces to $\pv{t_1}{t_2}$, in which case one
  can apply the second rule of $\expLet$ and reason inductively on the
  two subterms until we obtain $\letsig{\sigma}{t}$.
\end{proof}

Each step of this rewriting system corresponds to exactly one step of
cut-elimination, whereas in the previous system, the rewriting using a
substitution $\sigma$ corresponds to multiple steps of
cut-elimination. $\expB$ makes this explicit, clarifying the
relationship between logic and reversible computation. Before
establishing this result, we first need to ensure that both rewriting
systems have the same operational semantics:
\begin{lem}[Specialization of the substitution on pairs]
\label{lemma:split_subs}
Let $\sigma$ be a substitution that closes $\Delta \vdash_e
\pv{t_1}{t_2}$, then there exists $\sigma_1, \sigma_2$, such that
$\sigma(\pv{t_1}{t_2}) = \pv{\sigma_1(t_1)}{\sigma_2(t_2)}$ Where
$\sigma = \sigma_1 \cup \sigma_2$.
\end{lem}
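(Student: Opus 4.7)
The plan is to exploit the linearity of the type system. By inspection of the typing rule for $\otimes$ in \autoref{tab:typterm}, any derivation of $\Delta \vdash_e \pv{t_1}{t_2} : A \otimes B$ proceeds from a unique split $\Delta = \Delta_1, \Delta_2$ with $\Delta_1 \vdash_e t_1 : A$ and $\Delta_2 \vdash_e t_2 : B$, where the context-variables are disjoint. By a straightforward induction on typing derivations one can show $\mathrm{FV}(t_i) = \mathrm{dom}(\Delta_i)$ and hence $\mathrm{FV}(t_1) \cap \mathrm{FV}(t_2) = \emptyset$.

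Given this, I would define $\sigma_i$ as the restriction of $\sigma$ to $\mathrm{dom}(\Delta_i)$ for $i \in \{1,2\}$. Since $\sigma$ closes $\pv{t_1}{t_2}$, its domain contains all free variables of both sub-terms, so (using disjointness) we have $\sigma = \sigma_1 \cup \sigma_2$ as a disjoint union, and $\sigma_i$ closes $t_i$.

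To conclude, unfolding the definition of substitution on pairs yields $\sigma(\pv{t_1}{t_2}) = \pv{\sigma(t_1)}{\sigma(t_2)}$; since substituting bindings for variables that do not occur in a term has no effect (a trivial induction on the term, using that $\sigma_j$ has domain disjoint from $\mathrm{FV}(t_i)$ when $i \neq j$), we obtain $\sigma(t_i) = \sigma_i(t_i)$, giving the required equality $\sigma(\pv{t_1}{t_2}) = \pv{\sigma_1(t_1)}{\sigma_2(t_2)}$.

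The only even mildly delicate point is the free-variable/context-domain coincidence, and that in turn reduces to the observation that every rule in \autoref{tab:typterm} is linear (each variable in the context is used exactly once in the conclusion). So the proof is really just a bookkeeping argument packaged around the splitting imposed by the $\otimes$-introduction rule; no serious obstacle is expected.
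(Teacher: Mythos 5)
Your proof is correct and follows essentially the same route as the paper's: invoke linearity of the typing system (via the $\otimes$-introduction rule's context split) to get $\mathrm{FV}(t_1)\cap\mathrm{FV}(t_2)=\emptyset$, and define $\sigma_i$ as the restriction of $\sigma$ to the free variables of $t_i$. The paper's own argument is a one-sentence version of exactly this (it even contains a typo, writing $\cup$ where $\cap$ is meant), so your more explicit bookkeeping is, if anything, a slight improvement.
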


\begin{proof}
By the linearity of the typing system we know that $FV(t_1) \cup
FV(t_2) = \emptyset$, so there always exists a decomposition of
$\sigma$ into $\sigma_1, \sigma_2$ defined as $\sigma_i = \{(x_i
\mapsto v_i) \mid x_i \in FV(t_i)\}$ for $i\in\{1,2\}$.
\end{proof}

\begin{lem}[Explicit substitution and substitution coincide]
\label{lem:sub-eB}
Let $\sigma = \{x_i \mapsto v_i\}$ be a substitution that closes $t$,
then $\letsig{\sigma}{t} \expLet^* \sigma(t)$.
\end{lem}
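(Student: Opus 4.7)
The plan is to proceed by structural induction on $t$, using the distribution rules of $\expLet$ together with $\beta$-$\mathrm{Cong}$ to push the outer chain of $n$ let-bindings through the topmost constructor of $t$ and then appealing to the induction hypothesis on the immediate subterms. Linearity of the type system, combined with Lemma~\ref{lemma:split_subs}, guarantees that $\sigma$ decomposes uniquely into the sub-substitutions needed to close each subterm.

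The base cases are immediate. If $t = x$, linearity forces $\sigma$ to be the singleton $\{x \mapsto v\}$ and one application of $\letv{x}{v}{x} \expLet v$ yields $\sigma(x)$; if $t = ()$, then $\sigma = \emptyset$ and both sides already equal $()$. For the unary-constructor cases $t \in \{\inl{t'},\inr{t'},\fold{t'},\omega\,t'\}$, I would iterate the corresponding distribution rule (for instance $\letv{x}{v}{\inl{s}} \expLet \inl{\letv{x}{v}{s}}$) to push each of the $n$ outer lets through the topmost constructor, yielding after $n$ steps a term of shape $C[\letsig{\sigma}{t'}]$ with $C$ the matching one-hole constructor; the structural induction hypothesis on $t'$ together with congruence then concludes.

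For the pair case $t = \pv{t_1}{t_2}$, Lemma~\ref{lemma:split_subs} partitions $\sigma$ as $\sigma_1 \cup \sigma_2$ with $\sigma_i$ closing $t_i$, and a nested induction on $n = \lvert\sigma\rvert$ dispatches each outer let $\letv{x_i}{v_i}{\,\cdot\,}$ to the appropriate side of the pair by selecting whichever of the two pair-distribution rules applies, according to whether $x_i \in FV(t_1)$ or $x_i \in FV(t_2)$. After all $n$ lets have been pushed, the term is reduced to $\pv{\letsig{\sigma_1}{t_1}}{\letsig{\sigma_2}{t_2}}$, and two uses of the structural IH plus congruence finish the case. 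The inner-let case $t = \letv{p}{t_1}{t_2}$ is treated analogously: using Barendregt's convention to ensure the bound variables of $p$ are fresh for $\mathrm{dom}(\sigma)$, I would split $\sigma = \sigma_1 \cup \sigma_2$ with $\sigma_1$ closing $t_1$ and $\sigma_2$ closing $t_2$, push the $\sigma_1$-chain into the RHS of the inner let via the congruence context $\letv{p}{[~]}{t_2}$ and apply the IH on $t_1$, and handle $\sigma_2$ similarly.

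The main obstacle is the pair case (and its inner-let analogue): each individual pair-distribution rule only fires when the outer let's variable actually occurs on the targeted side of the pair, so the nested induction must carefully dispatch the outer lets one at a time, commuting lets with disjoint supports past one another whenever the next variable to be pushed lies on the opposite side of the one currently at the top. The key syntactic ingredient here parallels the commutativity of substitutions established in Lemma~\ref{isos:lem:commutativity-substitution}, and I expect it to be derivable directly at the level of $\expLet$ by a short auxiliary argument on disjoint-support let-commutation.
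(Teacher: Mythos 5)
Your proposal is correct and follows essentially the same route as the paper's proof: structural induction on $t$, with the pair (and inner-let) case handled by splitting $\sigma$ via Lemma~\ref{lemma:split_subs} and dispatching each outer let to the component containing its variable. The disjoint-support let-commutation you flag as the main obstacle is not actually needed: processing the lets innermost-first, each one independently selects its side of the pair by the relevant $FV$ side condition, so no reordering ever arises.
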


\begin{proof}
By induction on $t$.
\begin{itemize}
    \item If $t=x$, then $\sigma(x) = v$ and $\letv{x}{v}{x} \expLet v
    = \sigma(x)$.
    \item If $t=()$ then $\sigma$ is empty and no substitution applies.
    \item If $t=\pv{t_1}{t_2}$, then by~\autoref{lemma:split_subs}
    $\sigma(\pv{t_1}{t_2}) = \pv{\sigma_1(t_1)}{\sigma_2(t_2)}$. By
    $\expLet$, each $\lett$ construction will enter either $t_1$ or
    $t_2$, we can then conclude by using the induction hypothesis.
    \item If $t=\letv{p}{t_1}{t_2}$ is similar to the product case.
    \item If $t$ is $\inl{t'}, \inr{t'}, \fold{t'}$ or $\omega~t'$.
    All cases are treated in the same way: by definition of $\expLet$,
    each $\lett$ will enter into the subterm $t'$, as with the
    substitution $\sigma$, we then conclude by applying the induction
    hypothesis. \qedhere
\end{itemize}
\end{proof}

\begin{cor}
    If $t\to t'$ then $t\expB^* t'$.
\end{cor}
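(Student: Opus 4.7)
The plan is to proceed by induction on the derivation of $t \to t'$, handling each rule of the rewriting system of \autoref{def:rewriting-system} in turn, and bridging the gap between implicit and explicit substitutions using \autoref{lem:sub-eB}.

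For the base cases, I would observe that each non-congruence rule of $\to$ has a nearly identical counterpart in $\expB$, the only difference being that the target of the $\expB$-step carries substitutions as a chain of $\mathtt{let}$-constructions rather than having them applied. More precisely, for the $\mathrm{LetE}$ rule, a step $\letv{p}{v}{t} \to \sigma(t)$ with $\sigma[p] = v$ corresponds in $\expB$ to a step $\letv{p}{v}{t} \expB \letsig{\sigma}{t}$ via $\beta\text{-}\mathrm{LetE}$; then \autoref{lem:sub-eB} gives $\letsig{\sigma}{t} \expLet^* \sigma(t)$, so $\letv{p}{v}{t} \expB^* \sigma(t)$. The $\mathrm{IsoApp}$ case is handled identically: $\beta\text{-}\mathrm{IsoApp}$ produces $\letsig{\sigma}{e_i}$, which then $\expLet$-reduces to $\sigma(e_i)$ by \autoref{lem:sub-eB}. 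The $\mathrm{IsoRec}$ rule is literally the same in both systems, giving a single $\expB$-step.

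For the congruence case $C[t_1] \to C[t_2]$ coming from $t_1 \to t_2$, I would apply the induction hypothesis to get $t_1 \expB^* t_2$, and then iterate the $\beta\text{-}\mathrm{Cong}$ rule (which allows a step in $\expB \cup \expLet$ under a context) once per step of the reduction $t_1 \expB^* t_2$, concluding $C[t_1] \expB^* C[t_2]$.

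I do not expect any genuine obstacle here: the whole content of the statement is absorbed by \autoref{lem:sub-eB}, which is precisely what reconciles the ``one-shot'' substitution used in $\to$ with the step-by-step pushing of substitutions inside subterms in $\expB$. The only mild care needed is to make sure, in the congruence case, that the $\expLet$ steps produced by \autoref{lem:sub-eB} remain available under $\beta\text{-}\mathrm{Cong}$, which is immediate from the statement of that rule.
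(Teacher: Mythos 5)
Your proposal is correct and follows essentially the same route as the paper: induction on the derivation of $t\to t'$, with $\mathrm{IsoRec}$ unchanged, the $\mathrm{LetE}$ and $\mathrm{IsoApp}$ cases handled by $\beta\text{-}\mathrm{LetE}$ / $\beta\text{-}\mathrm{IsoApp}$ followed by \autoref{lem:sub-eB} to collapse the explicit substitution chain, and closure under contexts via the congruence rule. The paper's own proof is merely terser (it leaves the congruence case implicit), so there is nothing to add.
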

\begin{proof}
    By induction on $\to$, the case of $\mathrm{IsoRec}$ is the same,
    while for the other rules, just by applying either
    $\beta-\mathrm{IsoApp}$ or $\beta-\mathrm{LetE}$ and then
    by~\autoref{lem:sub-eB}, $\letsig{\sigma}{t} \expLet^* \sigma(t)$
    for any substitution $\sigma$ that closes $t$. But $\sigma(t) =
    t'$, so $t\expB^* t'$.
\end{proof}

It is then possible to show the first step of the simulation
procedure, namely that $\expB$ corresponds to one step of
cut-elimination:
%%%
\begin{lem}[Simulation of the let-rules of $\expB$]~
    \label{thm:sim}
    Let $\xi : {\TypCtxta} \vdash_e t : A$ be a well-typed closed
    term. If $t\expLet t'$, by \autoref{prop:subject_reduction} we
    have that there exists a typing derivation $\xi' : \TypCtxta
    \vdash_e t' : A$ and then we get: $\Pos(\xi, \Gamma)
    \cute^{*} \Pos(\xi', \Gamma)$ for any $\Gamma$ which is
    a list ordering the formulas occurring in $\TypCtxta$.
\end{lem}
\begin{proof}
    By case analysis on \expLet, for simplicity of reading, we
    will write directly $\Pos(t, \Gamma)$ for $\Pos(\xi, \Gamma)$ when
    $\xi$ is a typing derivation of $t$ and is clear from the
    context.

\begin{itemize}[label={$\bullet$}]
  \setlength\itemsep{1em}
    \item $\letv{x}{v}{x} \expLet v$.
    \[\begin{array}{c}\infer[\cut]{\Gamma \vdash A}{\infer{\Gamma \vdash
    A}{{\Pos(v, \Gamma)}} \qquad \infer[\id]{A\vdash A}{}}\end{array} \cute
    \begin{array}{c}\infer{\Gamma \vdash A}{{\Pos(v, \Gamma)}}\end{array}\]

    \item $\letv{\pv{p_1}{p_2}}{\pv{t_1}{t_2}}{t} \expLet
    \letv{p_1}{t_1}{\letv{p_2}{t_2}{t}}$. Denote by $\xi$ the typing
    derivation of $\letv{p_2}{t_2}{t}$ and by $\xi'$ the typing
    derivation of $t$, then by definition of the negative phase we
    know that $\Neg(\set{[p_1], \xi}, [A\otimes B], \Gamma_2\cup\Gamma_3, B,
    \vdash)$ consists only in a succession of $\parr$ and $ex$ rules.
    We note $\Delta_1$ the context obtained by decomposing the type
    $A$ through the negative phase, then:
    \[
      \begin{prooftree}
        \hypo{\Pos(t_1, \Gamma_1)}
        \infer1{\Gamma_1 \vdash A}
        \hypo{\Pos(t_2, \Gamma_2)}
        \infer1{\Gamma_2 \vdash B}
        \infer2[$\otimes$]{\Gamma_1, \Gamma_2 \vdash A\otimes B}
        \hypo{\Neg(\set{([p_1 :: p_2], \xi')},
        [A, B], \Gamma_3, C, \vdash)}
        \infer1{A, B, \Gamma_3 \vdash C}
        \infer1[$\parr$]{A\otimes B, \Gamma_3 \vdash C}
        \infer2[\cut]{\Gamma_1, \Gamma_2, \Gamma_3 \vdash C}
      \end{prooftree}
   \cute^{*}\]
\[\begin{prooftree}
        \hypo{{\Pos(t_1, \Gamma_1)}}
        \infer1{\Gamma_1\vdash A}

        \hypo{{\Pos(t_2, \Gamma_2)}}
        \infer1{\Gamma_2\vdash B}

        \hypo{{\Neg(\set{([p_2], \xi')}, B, \Gamma_3\cup\Delta_1, C,\vdash)}}
        \infer1{B, \Gamma_3, \Delta_1\vdash C}
        \infer2[\cut]{\Gamma_2,\Gamma_3, \Delta_1 \vdash C}
        \infer1[$\parr$ / \rex]{\Neg(\set{([p_1], \xi)}, A, \Gamma_2\cup\Gamma_3, C, \vdash)}
        \infer2[\cut]{\Gamma_1,\Gamma_2,\Gamma_3\vdash C}
      \end{prooftree}
    \]

    In the first proof, we know that $\Neg(\set{([p_1 :: p_2], \xi')},
    [A, B], \Gamma_3, C, \vdash)$ will start by applying only $\parr$
    rules and $\rex$ rules until $A$ is fully decomposed into
    $\Delta_1$, before doing the same with $B$. Because
    those are negative rules, we know that we can
    commute them, hence obtaining the same proofs.

    \item $\letv{x}{v}{\inl{(t)}} \expLet \inl{(\letv{x}{v}{t})}$.
      \[
    \begin{prooftree}

      \hypo{\Pos(v, \Gamma_1)}
      \infer1{\Gamma_1 \vdash C}

      \hypo{\Pos(t, \Gamma_2 :: C)}
      \infer1{\Gamma_2, C \vdash A}
      \infer1[$\oplus^1$]{\Gamma_2, C \vdash A \oplus B}
      \infer1[\rex]{C, \Gamma_2 \vdash A\oplus B}
      \infer2[\cut]{\Gamma_1, \Gamma_2 \vdash A \oplus B}
    \end{prooftree}~\cute^{*}
    \begin{prooftree}

      \hypo{\Pos(t, \Gamma_1)}
      \infer1{\Gamma_1 \vdash C}

      \hypo{\Pos(t, \Gamma_2:: C)}
      \infer1{\Gamma_2, C \vdash A}
      \infer1[\rex]{C, \Gamma_2 \vdash A}
      \infer2[cut]{\Gamma_1, \Gamma_2 \vdash A}
      \infer1[$\oplus^1$]{\Gamma_1, \Gamma_2\vdash A\oplus B}
    \end{prooftree}\]

    \item The same goes for $\letv{x}{v}{\inr{(t)}} \expLet
    \inr{(\letv{x}{v}{t})}$ and for
    $\letv{x}{v}{\fold{(t)}}\expLet
    \fold{(\letv{x}{v}{t})}$

    \item $\letv{x}{v}{\pv{t_1}{t_2}}\expLet\pv{\letv{x}{v}{t_1}}{t_2}$
    when $x\in FV(t_1)$.
    Then:

    \[
    \begin{array}{c}
      \begin{prooftree}
        \hypo{\Pos(v, \Gamma_1)}
        \infer1{\Gamma_1 \vdash A}
        \hypo{\Pos(t_1, \Gamma_2 :: A)}
        \infer1{\Gamma_2, A\vdash C}
        \hypo{\Pos(t_2, \Gamma_3)}
        \infer1{\Gamma_3\vdash B}
        \infer2[$\otimes$]{\Gamma_2, \Gamma_3, A\vdash C\otimes B}
        \infer1[\rex]{A, \Gamma_2, \Gamma_3 \vdash C \otimes B}
        \infer2[\cut]{\Gamma_1, \Gamma_2, \Gamma_3 \vdash C\otimes B}
      \end{prooftree}
    \end{array} \cute\]

\[
      \begin{array}{c}\infer[\otimes]{\Gamma_1,\Gamma_2,\Gamma_3\vdash
      C\otimes B}{\infer[\cut]{\Gamma_1,\Gamma_2\vdash
      C}{\infer{\Gamma_1\vdash A}{{\Pos(v, \Gamma_1)}} \qquad
      \infer[\rex]{A, \Gamma_2\vdash C}{\infer{\Gamma_2, A\vdash
      C}{{\Pos(t_1, \Gamma_2 :: A)}}}} \qquad
        \infer{\Gamma_3\vdash B}{{\Pos(t_2, \Gamma_3)}}}\end{array}
    \]

    \item Similar for the second rule on the pair.

    \item $\letv{x}{v}{\omega~t} \expLet \omega~(\letv{x}{v}{t})$.
    Then:

    \begin{prooftree}
      \hypo{{\Pos(v, \Gamma_1)}}
      \infer1{\Gamma_1\vdash A}

      \hypo{{\Pos(t, \Gamma_2 :: A)}}
      \infer1{\Gamma_2, A\vdash B}
      \infer1[\rex]{A, \Gamma_2\vdash B}

      \hypo{{\circu(\omega)}}
      \infer1{B\vdash C}
      \infer2[\cut]{A, \Gamma_2\vdash C}
      \infer2[\cut]{\Gamma_1, \Gamma_2\vdash C}
    \end{prooftree} $\cute$

    $\begin{array}{c}\infer[\cut]{\Gamma_1,\Gamma_2\vdash
    C}{\infer[\cut]{\Gamma_1,\Gamma_2\vdash B}{\infer{\Gamma_1\vdash
    A}{{\Pos(v, \Gamma_1)}}\qquad\infer[\rex]{A, \Gamma_2\vdash
    B}{\infer{\Gamma_2, A\vdash B}{{\Pos(t,
    \Gamma_2 :: A)}}}}\qquad\infer{B\vdash
    C}{{\circu(\omega)}}}\end{array}$ \qedhere
    \end{itemize}
\end{proof}

We can now show how the pattern-matching is captured by the
cut-elimination:

\begin{lem}
    \label{lem:pattern-matching-proof}
  Given a set of well-typed values and expressions $\TypCtxta_i
  \vdash_e v_i : A$ and $\xi_i : \TypCtxta_i \vdash_e e_i : B$ for
  $i\in\set{1, \dots, n}$ such that $\PE_A(v_1, \dots, v_n)$ and
  $\PE_B(\Val{e_1}, \dots, \Val{e_n})$, then, given a well-typed value
  $\TypCtxtb \vdash_e v : A$ such that there exists $j \in \set{1,
  \dots, n}$ and $\sigma$ such that $\sigma[v_j] = v$ then, given
  $\Gamma = \overline{\TypCtxtb}$, we have:
 \[
\pi =     \scalebox{1}{\begin{prooftree}
        \hypo{{\Pos(v, \Gamma)}}
        \infer1{\Gamma\vdash A}
        \hypo{{\Neg(\set{([v_i], \xi_i)_{i\in \set{1, \dots, n}}}, [A], \emptyset,B,\vdash)}}
        \infer1{A\vdash B}
        \infer2[\cut]{\Gamma\vdash B}
    \end{prooftree}} \cute^*
    \scalebox{1}{$
    \infer{\Gamma\vdash B}{\Pos(\letsig{\sigma}{e_j}, \Gamma)}
    $} = \pi'\]
\end{lem}
\begin{proof}
    By induction on the derivation of $\PE_A(\set{v_1, \dots,
    v_n})$:

    \begin{itemize}
        \setlength\itemsep{1em}
        \item Case $\PE_A(\set{x})$. We get $\sigma[x] = v$ then $\pi
        = \pi'$.

        \item Case $\PE_\one(\set{()})$. In this case, $\sigma[()] =
        ()$ and there is a single $e$ with typing derivation $\xi$.

          \medskip
        $\pi = \begin{array}{c}       \begin{prooftree}

          \infer0[$\one$]{\vdash \one}

          \hypo{{\Pos(\xi, \emptyset)}}
          \infer1{\vdash B}
          \infer1[$\bot$]{\one\vdash B}
          \infer2[\cut]{\vdash G}
      \end{prooftree}\end{array}$ which reduces to $\begin{array}{c}\infer{\vdash
        B}{{\Pos(e, \emptyset)}}\end{array} = \pi'$ as $\sigma$ is empty.

        \item Case $\PE_{\mu X.
        A}(\set{(\fold{v_i})_{i\in I}})$ such that
        $\sigma[\fold{v_j}] = \fold{v'}$. Then

        \[\pi =
        \begin{prooftree}
          \hypo{\Pos(v', \Gamma)}
          \infer1{\Gamma\vdash A[X\leftarrow\mu X. A]}
          \infer1[$\mu$]{\Gamma\vdash\mu X. A}
          \hypo{\Neg(\set{([v_i], \xi_i)_{i\in I}}, [A[X \leftarrow
          \mu X. A]], \emptyset, B, \vdash)}
           \infer1{H[X\leftarrow\mu
          X. H]\vdash B}
           \infer1[$\nu$]{\mu X. H \vdash B}
          \infer2[\cut]{\Gamma\vdash B}
        \end{prooftree}
        \]
will reduce to
        \[
        \begin{prooftree}
          \hypo{\Pos(v', \Gamma)}
          \infer1{\Gamma\vdash A[X\leftarrow\mu X. A]}
          \hypo{\Neg(\set{([v_i], \xi_i)_{i\in I}}, [A[X\leftarrow \mu X. A]],\emptyset,B,\vdash)}
          \infer1{A[X\leftarrow\mu X. A]\vdash B}
          \infer2[\cut]{\Gamma \vdash B}
        \end{prooftree}
        \]
        then we can apply our induction hypothesis.

        \item Case $\PE_{A\oplus B}(\set{(\inl{v_i})_{i\in
        I}}
        \cup
        \set{(\inr{v_k})_{k\in K}})$ such that
        $\sigma[\inl{v_j}] = \inl{v'}$. Then the proof
        \[
          \scalebox{.9}{$\pi = \begin{prooftree}
            \hypo{{\Pos(v', \Gamma)}}
            \infer1{\Gamma\vdash A}
            \infer1[$\oplus^1$]{\Gamma\vdash A\oplus B}

            \hypo{{\Neg(\set{([v_i], \xi_i)_{i\in I}},[A],\emptyset,C,\vdash)}}
            \infer1{H\vdash C}

            \hypo{{\Neg(\set{([v_k], \xi_k)_{k\in K}}, [B], \emptyset,C,\vdash)}}
            \infer1{B\vdash C}
            \infer2[$\with$]{A\oplus B\vdash C}
            \infer2[\cut]{\Gamma\vdash C}
        \end{prooftree}$}
        \]
        reduces to
        \[
        \scalebox{.9}{\begin{prooftree}
            \hypo{{\Pos(v', \Gamma)}}
            \infer1{\Gamma\vdash A}

            \hypo{{\Neg(\set{([v_i], \xi_i)_{i\in I}},[A],\emptyset,C,\vdash)}}
           \infer1{A\vdash C}
            \infer2[\cut]{\Gamma\vdash C}
          \end{prooftree}}
        \]
        then we can apply our induction hypothesis.

        \item The case $\PE_{A\oplus
        B}(\set{\inl{v_i}}\cup\set{\inr{v_k}})$ such that
        $\sigma[\inl{v_j}] = \inl{v'}$ is similar to the previous
        case.

        \item Case $\PE_{A\otimes
        B}(\set{(\pv{v_i^1}{v_i^2})_{i\in I}})$ with
        $\sigma[\pv{v_j^1}{v_j^2}] = \pv{v_1'}{v_2'}$ and
        therefore $\sigma = \sigma_1 \cup \sigma_2$. Then
        \[
          \pi = \begin{prooftree}

            \hypo{{\Pos(v_1', \Gamma_1)}}
            \infer1{\Gamma_1\vdash A}

            \hypo{{\Pos(v_2', \Gamma_2)}}
            \infer1{\Gamma_2\vdash B}
            \infer2[$\otimes$]{\Gamma_1,\Gamma_2\vdash A\otimes B}

            \hypo{{\Neg(\set{([\pv{v_i^1}{v_i^2}], \xi_i)_{i\in I}},[A, B],\emptyset, C,\vdash)}}
            \infer1{A, B \vdash C}
            \infer1[$\parr$]{A\otimes B\vdash C}
            \infer2[\cut]{\Gamma_1,\Gamma_2\vdash C}
        \end{prooftree}\]
        reduces to
        \[
        \begin{prooftree}
            \hypo{{\Pos(v_1', \Gamma_1)}}
            \infer1{\Gamma_1\vdash
            A}
            \hypo{{\Pos(v_2', \Gamma_2)}}
            \infer1{\Gamma_2\vdash B}
            \hypo{{\Neg(\set{((v_i^1 :: v_i^2)_i,
            \xi_i)_{i\in I}},[A, B],\emptyset, C,\vdash)}}
            \infer2[\cut]{\Gamma_2, A\vdash C}
            \infer2[\cut]{\Gamma_1,\Gamma_2\vdash C}
        \end{prooftree}\]

        Because the negative phase on $[v_1, v_2]$ only produces
        $\with, \parr, \top, \nu$ rules, we get that $\Neg(\set{(v_i^1 ::
        v_i^2,
            \xi_i)_{i\in I}},[A, B],\emptyset, C,\vdash) = \Neg(\set{(v_i^2 :: v_i^1,
            \xi_i)_{i\in I}},[B, A],\emptyset, C,\vdash)$
        up to the
        commutation of rules of Linear Logic. Therefore, we can get
        \[
        \begin{prooftree}
            \hypo{{\Pos(v_1', \Gamma_1)}}
            \infer1{\Gamma_1\vdash A}
            \hypo{{\Pos(v_2', \Gamma_2)}}
            \infer1{\Gamma_2\vdash B}
            \hypo{{\Neg(\set{(v_i^2 :: v_i^1,
            \xi_i)_{i\in I}},[B,A],\emptyset, C,\vdash)}}
            \infer1[\rex]{A, B\vdash C}
            \infer2[\cut]{\Gamma_2, A\vdash C}
            \infer2[\cut]{\Gamma_1,\Gamma_2\vdash C}
        \end{prooftree}
        \]
        which, by induction hypothesis on $v_2$, reduces to
        \[
        \begin{prooftree}
            \hypo{{\Pos(v_1', \Gamma_1)}}
            \infer1{\Gamma_1\vdash A}

            \hypo{{\Neg(\set{(v_i^1, \letsig{\sigma_2}{\xi_i})_{i\in I}},[A],\emptyset, C,\vdash)}}
           \infer1{\Gamma_2, A \vdash C}
            \infer2[\cut]{\Gamma_1,\Gamma_2\vdash C}
        \end{prooftree}
        \]
        and then we can apply our second induction hypothesis on
        $v_1$. \qedhere
    \end{itemize}
\end{proof}

We can then conclude the proof with the global simulation theorem as a
direct implication of the two previous lemmas:

\begin{thm}[Iso-substitution cut-elimination]
    \label{isos-mumall:thm:iso-sub-cut-elim}
    Let $\vdash \isobasique~v \to \sigma(e_i)$ when $\sigma[v_i] =
    v$ then ${\Pos(\isobasique~v, [])} \cute^*
    {\Pos(\letsig{\sigma}{e_i}, [])} \\ \cute^* {\Pos(\sigma(e_i),
    [])}$.
\end{thm}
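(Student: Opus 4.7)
The plan is to split the statement into two parts corresponding to the two $\cute^*$ reductions. The first reduction goes from the translated iso-application to the translated \emph{let}-with-values via the pattern-matching lemma, while the second peels off the explicit let-bindings one by one via the simulation of $\expLet$.

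For $\ffloor{\Pos(\isobasique~v)} \cute^* \ffloor{\Pos(\letv{x_j}{v_j}{e_i})}$, I would unfold the definition of $\Pos$ on an iso-application from \autoref{tab:positive_phase}: it yields a cut between $\ffloor{\Pos(v)}$ and $\ffloor{\circu(\isobasique)}$, and by \autoref{tab:circ_translation} the right premise begins with the negative phase $\ffloor{\Neg(\{([v_k], \xi_k)_{k \in I}\})}$ on the clauses. The hypothesis $\sigma[v_i] = v$, together with the exhaustivity and non-overlapping guaranteed by the typing rule for isos, give exactly the assumptions of \autoref{lem:pattern-matching-proof}, taking the matching pattern $v_i$ and the other patterns as the list $l$. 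Applying that lemma reduces the cut to $\ffloor{\Neg(\{([], \letv{x_j}{v_j}{e_i})\})}$, which by the base case of the negative phase (empty list) is exactly $\ffloor{\Pos(\letv{x_j}{v_j}{e_i})}$.

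For $\ffloor{\Pos(\letv{x_j}{v_j}{e_i})} \cute^* \ffloor{\Pos(\sigma(e_i))}$, I would appeal to \autoref{convention:letsigma} to read $\letv{x_j}{v_j}{e_i}$ as $\letsig{\sigma}{e_i}$, since $\sigma = \{x_j \mapsto v_j\}$. Then \autoref{lem:sub-eB} yields $\letsig{\sigma}{e_i} \expLet^* \sigma(e_i)$, and \autoref{thm:sim} maps each individual $\expLet$ step to one (or more) cut-elimination steps between the corresponding positive-phase proofs. Chaining these delivers the required $\cute^*$ reduction.

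The main difficulty is already encapsulated in \autoref{lem:pattern-matching-proof}, whose induction on $\PE_A$ handles the nontrivial cases (in particular the tensor case, which relies on commuting independent negative rules to swap the decomposition order). Given that lemma and the simulation lemma for $\expLet$, the present theorem becomes essentially a bookkeeping composition, and the only subtle point to check is that the transition from the output of \autoref{lem:pattern-matching-proof} to the input of \autoref{lem:sub-eB} lines up correctly through the definitional equality $\ffloor{\Neg(\{([], \xi)\})} = \ffloor{\Pos(\xi)}$.
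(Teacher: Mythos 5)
Your proposal is correct and follows essentially the same route as the paper's (much terser) proof: both reduce the statement to the composition of \autoref{lem:pattern-matching-proof} for the first $\cute^*$ step and of \autoref{lem:sub-eB} together with \autoref{thm:sim} for the second. Your only slight imprecision is in describing the instantiation of \autoref{lem:pattern-matching-proof} (the list $l$ there is the tail of the decomposition list within a clause, the competing clauses being carried by the index set $I$), but this does not affect the validity of the argument.
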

\begin{proof}
    By~\autoref{lem:sub-eB}, we know that the explicit substitution and
    the substitution coincide. \autoref{thm:sim} tells us that one step
    of $\expLet$ is simulated by exactly one step of cut-elimination and
    finally~\autoref{lem:pattern-matching-proof} tells us that we
    simulate properly the pattern-matching.
\end{proof}

\begin{cor}[Simulation]
    \label{col:sim-iso}
Given an iso $\entailiso \omega : A \iso B$ and
values $\vdash_e v : A$ and $\vdash_e v' : B$ such that $\omega~v
\to^* v'$,  we have that $\Pos(\omega~v, []) \cute^* \Pos(v', [])$
\end{cor}
\begin{proof}
    Direct application of~\autoref{isos-mumall:thm:iso-sub-cut-elim}.
\end{proof}

We can wonder about the converse: given a term $\omega~v$ and its
corresponding proof $\pi = \Pos(\omega~v, [])$, if $\pi \cute^* \pi'$
and $\omega~v \to^* v'$, does $\Pos(v', []) = \pi'$ ? The answer to
that question is yes. Since $\Pos(v', [])$ is valid and is the
translation of an iso that terminates, we know that it reaches a
purely positive proof of $\vdash B$. This cut-free proof is
necessarily unique. By the simulation, we know that $\Pos(v', [])$
reduces to another purely positive proof of $\vdash B$, but by
uniqueness those two proofs must be the same.

This leads to the following corollary:

\begin{cor}[Isomorphism of proofs.]
Given a well-typed iso $\entailiso \omega : F\iso G$ and two
well-typed closed values $v_1$ of type $F$ and $v_2$ of type $G$ and
the proofs $\pi : F\vdash G$, $\pi^\bot : G\vdash F$, $\phi :~\vdash
F$, $\psi :~\vdash G$ corresponding respectively to the translation of
$\omega, \omega^\bot, v_1, v_2$ then:

\[ \scalebox{0.8}{$\infer{\vdash F}{\phi}~
            \rotatebox[origin=c]{180}{$\rightsquigarrow$}
            \infer[\cut]{\vdash F}{\infer[\cut]{\vdash G}{\infer{\vdash
            F}{\phi}\qquad\infer{F\vdash
            G}{\pi}}\qquad\infer{G\vdash F}{\pi^\bot}} \qquad
    \infer[\cut]{\vdash G}{\infer[\cut]{\vdash F}{\infer{\vdash
            G}{\psi}\qquad\infer{G\vdash
            F}{\pi^\bot}}\qquad\infer{F\vdash G}{\pi} }
            \rightsquigarrow \infer{\vdash G}{\psi}$} \]
\end{cor}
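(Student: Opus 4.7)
The plan is to derive this corollary as a direct combination of Theorem~\ref{thm:isos-iso} (isos are isomorphisms) with Corollary~\ref{col:sim-iso} (the simulation result), bridged by an inspection of the positive phase on iso applications.

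First I would unfold the shape of the nested cut displayed on each side using the positive phase rule for iso application given in \autoref{tab:positive_phase}. By that rule, the translation $\ffloor{\Pos(\omega~t)}$ is precisely a cut between $\ffloor{\Pos(t)}$ on the left and $\ffloor{\circu(\omega)}$ on the right. Iterating this observation twice, and identifying $\phi$ with $\ffloor{\Pos(v_1)}$, $\psi$ with $\ffloor{\Pos(v_2)}$, $\pi$ with $\ffloor{\circu(\omega)}$ and $\pi^{\bot}$ with $\ffloor{\circu(\omega^{\bot})}$, one recognises that the left-hand nested cut is exactly $\ffloor{\Pos(\omega^{\bot}~(\omega~v_1))}$, while the right-hand nested cut is exactly $\ffloor{\Pos(\omega~(\omega^{\bot}~v_2))}$.

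Second, I would invoke Theorem~\ref{thm:isos-iso} on the term side: it directly gives $\omega^{\bot}~(\omega~v_1) \to^{*} v_1$. For the dual reduction $\omega~(\omega^{\bot}~v_2) \to^{*} v_2$, I would apply the same theorem to $\omega^{\bot} : B\iso A$, whose well-typedness is ensured by Lemma~\ref{lem:inv-type}, using the fact that Definition~\ref{def:iso-inv} makes inversion involutive (so that $(\omega^{\bot})^{\bot} = \omega$ up to the irrelevant inversion of \texttt{let}-orderings already discussed in the paper).

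Finally, Corollary~\ref{col:sim-iso} lifts each of these term reductions to the proof level: it yields $\ffloor{\Pos(\omega^{\bot}~(\omega~v_1))} \cute^{*} \ffloor{\Pos(v_1)} = \phi$ and symmetrically $\ffloor{\Pos(\omega~(\omega^{\bot}~v_2))} \cute^{*} \ffloor{\Pos(v_2)} = \psi$, which is exactly the two reductions displayed in the statement. The only real obstacle is the structural bookkeeping of step one, namely checking that applying the positive phase rule for iso application twice produces precisely the nested-cut shape written in the statement (with the correct polarities on $\pi$ and $\pi^{\bot}$); once that identification is made, the result is immediate from the two previous theorems and no further cut-elimination reasoning is needed.
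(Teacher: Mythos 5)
Your proposal is correct and follows essentially the same route as the paper, whose proof simply cites Theorem~\ref{thm:isos-iso} together with Corollary~\ref{col:sim-iso}; you merely make explicit the bookkeeping (recognising the nested cuts as $\ffloor{\Pos(\omega^{\bot}\,(\omega~v_1))}$ and $\ffloor{\Pos(\omega\,(\omega^{\bot}~v_2))}$, and handling the second direction via involutivity of inversion) that the paper leaves implicit.
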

\begin{proof}

    As a direct implication of~\autoref{thm:isos-iso}
    and~\autoref{col:sim-iso}.
\end{proof}

We should note that this is a specific notion of proof-isomorphisms.
Usually, one would expect the proofs corresponding to $\omega$ and
$\omega^\bot$ reduce to the identity proof. However, it is not clear
how to prove this formally, and it is left for future work.

\section{Conclusion}\label{section:conclusion}

\paragraph*{Summary of the contribution.}
We presented a linear, reversible language with inductive types. We
showed how ensuring non-overlapping and exhaustivity is enough to
ensure the reversibility of the isos. The language comes with both an
expressivity result that shows that any Primitive Recursive Functions
can be encoded in this language as well as an interpretation of
programs into {\mumall} proofs. The latter result rests on the fact
that our isos are \emph{structurally recursive}.

\begin{figure}
  Define the proof $\pi_0$ and $\pi_S$ of the natural number $0$ and the
  Successor as:\\[0.5em] $\pi_0 = \begin{array}{c} \infer[\mu]{\vdash
  \natT}{\infer[\oplus^1]{\vdash \one\oplus\natT}{\infer[\one]{\vdash\one}{}}}
  \end{array}$ and $\pi_S = \begin{array}{c}
  \infer[\mu]{\natT\vdash\natT}{\infer[\oplus^2]{\natT\vdash\one\oplus\natT}{\infer[\id]{\natT\vdash\natT}{}}}
  \end{array}$.

  Then the proof of $\omega_1, \omega_2$ and $\opn{CantorPairing}$ are: \\[0.5em]

  \newcommand{\conclu}{\ensuremath{(\natT\otimes\natT) \oplus \one}}
  \newcommand{\non}{\ensuremath{\natT \otimes \natT}}

  \[
  \begin{array}{lll}
  \pi_{\omega_1} &~=~& \left\{\scalebox{.6}{\begin{prooftree}
    \infer0[$\one$]{\vdash \one}
    \infer1[$\oplus^2$]{\vdash \conclu}
    \infer1[$\bot$]{\one\vdash \conclu}
    \hypo{\pi_0}
    \hypo{\pi_0}
    \infer2[$\otimes$]{\vdash \non}
    \infer1[$\oplus^1$]{\vdash \conclu}
    \infer1[$\bot$]{\one\vdash \conclu}
    \hypo{\pi_S}
    \hypo{\pi_0}
    \infer2[$\otimes$]{\natT\vdash \non}
    \infer1[$\oplus^1$]{\natT\vdash\conclu}
    \infer2[$\with$]{\one\oplus\natT\vdash \conclu}
    \infer1[$\nu$]{\natT \vdash \conclu}
    \infer2[$\with$]{\one\oplus\natT\vdash \conclu}
    \infer1[$\nu$]{\natT \vdash \conclu}
    \infer1[$\bot$]{\one, \natT \vdash \conclu}
    \infer0[\id]{\natT \vdash \natT}
    \hypo{\pi_S}
    \infer2[$\otimes$]{\natT,\natT \vdash \non}
    \infer1[$\oplus^1$]{\natT,\natT \vdash \conclu}
    \infer2[$\with$]{\one\oplus\natT, \natT\vdash \conclu}
    \infer1[$\nu$]{\natT, \natT\vdash \conclu}
    \infer1[$\parr$]{\natT \otimes \natT \vdash \conclu}
  \end{prooftree}}\right.
    \\
    ~
    \\
    \pi_{\omega_2} &~=~& \left\{\scalebox{.6}{\begin{prooftree}
          \infer0[\id]{\non\vdash\non}
          \infer0[$be({\color{red} g} )$]{\non\vdash\natT}
          \infer2[\cut]{\non \vdash \natT}
          \hypo{\pi_S}
          \infer2[\cut]{\natT \otimes \natT \vdash \natT}
          \hypo{\pi_0}
          \infer1[$\bot$]{\one\vdash\natT}
          \infer2[$\with$]{\conclu \vdash \natT}
        \end{prooftree}}\right.
    \\
    ~
    \\
    \pi &~=~& \left\{\scalebox{0.6}{
    \begin{prooftree}
    \infer0[\id]{\non\vdash\non}
    \hypo{\pi_{\omega_1}}
    \infer1{\non\vdash\conclu}
    \infer2[\cut]{\non\vdash \conclu}
    \infer0[\id]{\conclu\vdash\conclu}
    \hypo{\pi_{\omega_2}}
    \infer1{\conclu\vdash\natT}
    \infer2[\cut]{\conclu\vdash\natT}
    \infer0[\id]{\natT\vdash\natT}
    \infer2[\cut]{\conclu\vdash\natT}
    \infer2[\cut]{\non\vdash^{\color{red}g} \natT}
  \end{prooftree}}\right.
  \end{array}
  \]
  \caption{Cantor Pairing in {\mumall}}
  \label{fig:cantor-mumall}
\end{figure}

\paragraph*{Future works.}
We showed a one-way encoding from isos to proofs of {\mumall}, it is
clear that there exist proof-isomorphisms of {\mumall} that do not
correspond to an iso of our language. Therefore, a first extension to
our work would be to relax the different constraints to allow for more
functions to be encoded. For instance, the Cantor Pairing
\cite{chardonnet24fscd} $\natT \iso \natT\otimes \natT$ can be
encoded, by representing $\natT$ with $\mathtt{npos}$, and then $0 =
\fold{(\inl{()})}$ and $S~x = \fold{(\inr{(x)})}$, as follows.

\begin{exa}[Cantor Pairing]
    \[\scalebox{1}{$\omega_1 = \left\{
        \begin{array}{l@{~}c@{~}l}
            \pv{S~i}{j} & {\iso} & \inl{(\pv{i}{S~j})}\\
            \pv{0}{S~S~j} & {\iso} & \inl{(\pv{S~j}{0})} \\
            \pv{0}{S~0} & {\iso} &  \inl{(\pv{0}{0})}\\
            \pv{0}{0} & {\iso} & \inr{()} \\
        \end{array}
        \right\} : \natT\otimes\natT \iso (\natT\otimes\natT) \oplus\one$}\]

    \[\scalebox{1}{$\omega_2 = \left\{
    \begin{array}{l@{~}c@{~}l}
        \inl{(x)} & {\iso} & \letv{y}{g~x}{S~y}\\
        \inr{(())} & {\iso} & 0 \\
    \end{array}
    \right\} : (\natT\otimes\natT) \oplus\one \iso \natT$}\]

    \[\scalebox{1}{$\opn{Cantor Pairing} = \fix g. \left\{
        \begin{array}{l@{~}c@{~}l}
            x & {\iso} & \letv{y}{\omega_1~x}{}\\
            &  & \letv{z}{\omega_2~y}{z} \\
        \end{array}
      \right\} : (\natT\otimes\natT) \iso \natT$}
  \]

  The iso $\omega_1$ is computing the predecessor of a point in the
  plane, marching in diagonal, as the Cantor isomorphism
  prescribes. The iso $\omega_2$ increases a counter if the origin has
  not been met. The iso $\opn{Cantor Pairing}$ is then simply repeatedly
  iterating the predecessor function of the plane $\natT\otimes\natT$,
  increasing the counter as it goes. It stops when the origin is met:
  the value of the counter is the desired number.
\end{exa}

Then, consider its associated proofs $\pi, \pi_{\omega_1},
\pi_{\omega_2}$, given in \autoref{fig:cantor-mumall}, of
(respectively) the isos $\opn{Cantor Pairing}$, $\omega_1$ and
$\omega_2$. While the iso has the expected operational semantics, it
is not well-typed as it is not structurally recursive. Its
associated derivation $\pi$ is also not valid: the visible part is not
contained in the infinite branches. However, whenever $\pi$ is cut with
any proof of $\vdash \natT \otimes \natT$ it will reduce to a finite
proof of $\vdash \natT$, computing the desired result. Therefore, we
need to extend the validity criterion of {\mumall}. In order to
properly type this iso, one would require accepting lexicographical
order (or more generally, any well-founded order) on recursive isos
that act as a termination proof. And then, see how such a criterion
would be captured in terms of pre-proof validity. Along with this,
allowing for coinductive statements and non-polarity-preserving types
and terms would allow to express more functions and capture a bigger
subset of {\mumall} isomorphisms. This would require relaxing the
condition on recursive isos, as termination would be no longer
possible to ensure. This is a focus of our forthcoming research.

A second direction for future work is to consider quantum computation,
by extending our language with linear combinations of terms. We plan
to study purely quantum recursive types and generalized quantum loops:
in~\cite{sabry2018symmetric}, lists are the only recursive type that
is captured and recursion is terminating. The logic {\mumall} would
help in providing a finer understanding of termination and
non-termination.

A quantum-oriented version of the language presented in this paper has
already been studied from a denotational semantics point of
view~\cite{these-kostia}, albeit without inductive types and the
proof-theoretical aspects have not been studied yet.

\bibliographystyle{alphaurl}
\bibliography{bibli.bib}

\end{document}